\documentclass[nosumlimits,intlimits]{lmcs}
\pdfoutput=1

\usepackage{lastpage}

\lmcsheading{}{1--\pageref{LastPage}}{}{}%
{Jan.~12, 2016}{Aug.~27, 2018}{}

\pdfoutput=1
\usepackage{hyperref}

\hyphenation{co-recur-sive}
\newcommand\noqed{\def\qed{}}

\usepackage{etex}

\sloppy

\usepackage[normalem]{ulem}			%
\usepackage{proof}					%
\usepackage{amssymb}				%
\usepackage{stmaryrd,latexsym}		%
\usepackage{wasysym}				%
\usepackage{ushort}					%
\usepackage{slashed}				%
\usepackage{yhmath}					%
\usepackage{graphicx}
\usepackage{bm}						%
\usepackage{optparams}				%
\usepackage{xspace}
\usepackage{upgreek}
\usepackage{lscape}
\usepackage{bussproofs}

\usepackage{mathrsfs}
\usepackage{thmtools}

\theoremstyle{definition}
\newtheorem{rexa}[thm]{Example}
\newtheorem{assn}[thm]{Assumption}

  {\gdef\scalefactor{#1}\begin{center}\proofSkipAmount \leavevmode}%
  {\scalebox{\scalefactor}{\DisplayProof}\proofSkipAmount \end{center} }

\usepackage{enumitem}

\setdescription{font=\it}

\usepackage[all,cmtip]{xy}			%
\xyoption{rotate}

\usepackage{ifdraft}  				%
\usepackage{todos}

\usepackage[layout=footnote,final]{fixme}

\usepackage{enumitem}

\FXRegisterAuthor{sg}{asg}{SG}	%
\FXRegisterAuthor{cr}{acr}{CR}	%
\FXRegisterAuthor{ls}{als}{LS}	%

\usepackage[utf8]{inputenc}

\usepackage{hypcap}
\setcounter{tocdepth}{1}

\DeclareFontFamily{OT1}{pzc}{}
\DeclareFontShape{OT1}{pzc}{m}{it}{<-> s * [1.100] pzcmi7t}{}
\DeclareMathAlphabet{\mathscr}{OT1}{pzc}{m}{it}

\newcommand{\copl}{\hm{[\kern-2.6pt[}}
\newcommand{\copr}{\hm{]\kern-2.6pt]}}

\newcommand{\pheq}{\phantom{{}={}}} %

\newcommand{\ev}{\operatorname{\sf ev}}
\newcommand{\ext}{\operatorname{\sf ext}}
\newcommand{\out}{\operatorname{\sf out}}
\newcommand{\tuo}{\operatorname{\sf out}^{\mone}}
\newcommand{\dist}{\operatorname{\sf dist}}
\newcommand{\assoc}{\operatorname{\sf assoc}}
\newcommand{\coit}{\operatorname{\sf coit}}
\newcommand{\curry}{\operatorname{\sf curry}}
\newcommand{\uncurry}{\operatorname{\sf uncurry}}

\newcommand{\istar}{\dagger}  					%
\newcommand{\iistar}{\ddagger}  				%

\newcommand{\IF}{\operatorname{\sf if}}
\newcommand{\ifTerm}[3]{\IF #1\kern2.2pt {\sf then}\kern1.2pt #2\kern2.2pt {\sf else}\kern2.2pt #3}
\newcommand{\ifTermO}[3]{\IF_{\nu} #1\kern2.2pt {\sf then}\kern2.2pt #2\kern2.2pt {\sf else}\kern2.2pt #3}

\newcommand{\WHILE}{\operatorname{\sf while}}
\newcommand{\whileTerm}[3]{\LET #1\kern1.2pt\WHILE\kern1.2pt #2 \kern2.2pt{\sf do}\kern2.2pt #3}
\newcommand{\whileTermI}[4]{\LET #1\kern1.2pt\WHILE^{#2}\kern1.2pt #3 \kern2.2pt{\sf do}\kern2.2pt #4}
\newcommand{\whileTermS}[2]{\WHILE\kern1.2pt #1 \kern2.2pt{\sf do}\kern2.2pt #2}

\newcommand{\SEQ}{\operatorname{\sf seq}}
\newcommand{\seqTerm}[2]{\SEQ_{\nu}\kern1.2pt#1; #2}

\newcommand{\letin}[2]{\operatorname{\sf let} #1\kern2.2pt \operatorname{\sf in}\kern1.2pt #2}
\newcommand{\match}[2]{\operatorname{\sf let}\kern2.2pt #1\kern2.2pt \operatorname{\sf in}\kern1.2pt #2}
\newcommand{\letmon}[3]{\operatorname{\sf let}\kern1.2pt  #1\kern2.2pt \operatorname{\sf by} #2\kern2.2pt \operatorname{\sf in}\kern1.2pt #3}

\newcommand{\klstar}{\star}  					%
\newcommand{\kklstar}{\text{\kreuz}} %

\newcommand{\wrt}{\mathit{write}}
\newcommand{\rd}{\mathit{read}}

\newcommand{\skp}{\mathit{skip}}
\newcommand{\Pow}{\mathcal{P}}

\newcommand{\Nat}{\mathbb{N}}
\newcommand{\Nom}{\catname{Nom}}
\newcommand{\Names}{\mathbb{A}}

\usepackage{stmaryrd}

\DeclareMathOperator{\lsmOp}{\lfloor\kern0.6pt}

\DeclareMathOperator{\rsmOp}{\rfloor\kern0.6pt}

\newcommand{\SigF}{\Sigma}
\newcommand{\TF}{T_\SigF}
\newcommand{\BBTF}{\BBT_\SigF}
\newcommand{\bang}{\mathop{!}\xspace}

\DeclareMathAlphabet{\mathscr}{OT1}{pzc}{m}{it}

\newcommand{\guard}{\diamondsuit}

\newcommand{\takeout}[1]{}

\newcommand\numberthis{\addtocounter{equation}{1}\tag{\theequation}}

\renewcommand{\by}[1]{\text{/\!/~#1}}			%
\newcommand{\comp}{\operatorname{\kern-2pt}} %

%
%
%
%

\providecommand{\catname}{\mathbf} 
\providecommand{\clsname}{\mathcal}
\providecommand{\oname}[1]{\operatorname{\mathsf{#1}}}

\def\defcatname#1{\expandafter\def\csname B#1\endcsname{\catname{#1}}}
\def\defcatnames#1{\ifx#1\defcatnames\else\defcatname#1\expandafter\defcatnames\fi}
\defcatnames ABCDEFGHIJKLMNOPQRSTUVWXYZ\defcatnames

\def\defclsname#1{\expandafter\def\csname C#1\endcsname{\clsname{#1}}}
\def\defclsnames#1{\ifx#1\defclsnames\else\defclsname#1\expandafter\defclsnames\fi}
\defclsnames ABCDEFGHIJKLMNOPQRSTUVWXYZ\defclsnames

\def\defbbname#1{\expandafter\def\csname BB#1\endcsname{\mathbb{#1}}}
\def\defbbnames#1{\ifx#1\defbbnames\else\defbbname#1\expandafter\defbbnames\fi}
\defbbnames ABCDEFGHIJKLMNOPQRSTUVWXYZ\defbbnames

\def\Set{\catname{Set}}
\def\Cpo{\catname{Cpo}}
\def\Cppo{\catname{Cppo}}


\providecommand{\argument}{\operatorname{-\!-}}


\DeclareOldFontCommand{\bf}{\normalfont\bfseries}{\mathbf}

\providecommand{\PSet}{{\mathcal P}}				
\providecommand{\Id}{\operatorname{Id}}

\providecommand{\Hom}{\mathsf{Hom}}
\providecommand{\id}{\mathsf{id}}

\providecommand{\comp}{\mathbin{\circ}}

\providecommand{\bang}{\operatorname!}				

			
\providecommand{\dar}{\kern-1.2pt\operatorname{\downarrow}}	
\providecommand{\uar}{\kern-1.2pt\operatorname{\uparrow}}	

\providecommand{\xto}[1]{\xrightarrow{#1}}



\providecommand{\bigjoin}{\bigsqcup}

\providecommand{\fst}{\oname{fst}}
\providecommand{\snd}{\oname{snd}}

\providecommand{\brks}[1]{\langle #1\rangle}

\providecommand{\inl}{\oname{inl}}
\providecommand{\inr}{\oname{inr}}
\providecommand{\inj}{\oname{in}}

\DeclareSymbolFont{Symbols}{OMS}{cmsy}{m}{n}
\DeclareMathSymbol{\iobj}{\mathord}{Symbols}{"3B}

\providecommand{\curry}{\oname{curry}}
\providecommand{\uncurry}{\oname{uncurry}}
\providecommand{\ev}{\oname{ev}}

\providecommand{\lsem}{\llbracket}
\providecommand{\rsem}{\rrbracket}
\providecommand{\sem}[1]{\lsem #1 \rsem}


\providecommand{\by}[1]{\text{/\!/~#1}}			
\providecommand{\pacman}[1]{}					

\providecommand{\noqed}{\def\qed{}}				

\providecommand{\mone}{{\text{\kern.5pt\rmfamily-}\sf\kern-.5pt1}}

\makeatletter
\@ifpackageloaded{enumitem}{}{\usepackage[loadonly]{enumitem}}		
\makeatother

\newlist{citemize}{itemize}{1}
\setlist[citemize]{label=\labelitemi,wide} 

\newlist{cenumerate}{enumerate}{1}
\setlist[cenumerate,1]{label=\arabic*.~,ref={\arabic*},wide} 

 
\makeatletter
\def\mfix#1{\oname{#1}\@ifnextchar\bgroup\@mfix{}}	
\def\@mfix#1{#1\@ifnextchar\bgroup\mfix{}}			
\makeatother


\providecommand{\case}[3]{\mfix{case}{\mathbin{}#1}{of}{#2}{\kern-1pt;}{\mathbin{}#3}}


%
%

\begin{document}

\title[Unguarded Recursion on Coinductive Resumptions]{Unguarded Recursion on Coinductive Resumptions\rsuper*}
\titlecomment{{\lsuper*}This work forms part of the DFG project HighMoon2 (GO 2161/1-2 / SCHR 1118/8-2)}

\author[Goncharov]{Sergey Goncharov}	%

\author[Rauch]{Christoph Rauch%
}	%

\author[Schr\"oder]{Lutz Schr\"oder}	%

\author[Jakob]{Julian Jakob}	%
\address{Department of Computer Science, Friedrich-Alexander-Universit\"at Erlangen-N\"urnberg}	%
\email{\{sergey.goncharov,christoph.rauch,lutz.schroeder,julian.jakob\}@fau.de}  %

\keywords{Recursion, coalgebra, coinduction, complete Elgot monad, resumptions}
\subjclass{
	F.3.2 [Logics and Meanings of Programs]: Semantics of Programming Languages
  --- algebraic approaches to semantics, denotational semantics; 
	F.3.3 [Logics and Meanings of Programs]: Studies of Program Constructs
  --- program and recursion schemes; 
    D.3.3 [Programming languages]: Language Constructs and Features
  --- recursion; 
General Terms: Theory. }

\begin{abstract}
We study a model of side-effecting processes obtained by starting
from a monad modelling base effects and adjoining free operations
using a cofree coalgebra construction; one thus arrives at what one
may think of as types of non-wellfounded side-effecting trees,
generalizing the infinite resumption monad. Correspondingly, the
arising monad transformer has been termed the coinductive
generalized resumption transformer. Monads of this kind have
received some attention in the recent literature; in particular, it
has been shown that they admit guarded iteration. Here, we show that
they also admit unguarded iteration, i.e.\ form complete Elgot
monads, provided that the underlying base effect supports unguarded
iteration. Moreover, we provide a universal characterization of the
coinductive resumption monad transformer in terms of coproducts of
complete Elgot monads.

\end{abstract}
\maketitle

\allowdisplaybreaks[4]

\section{Introduction}

\noindent Subsequent to seminal work by Moggi~\cite{Moggi91a}, monads are
widely used to represent computational effects in program semantics,
and in fact in actual programming languages~\cite{Wadler97}. Their
main attraction lies in the fact that they provide an interface to a
generic notion of side-effect at the right level of abstraction: they
subsume a wide variety of side-effects such as state, nondeterminism,
random, and I/O, and at the same time retain enough internal structure
to support a substantial amount of generic meta-theory and
programming, the latter witnessed, for example, by the monad class
implemented in the Haskell basic libraries~\cite{Peyton-Jones03}.

In the current work, we study a particular construction on monads
motivated partly by the goal of modelling generic side-effects in the
semantics of reactive processes. Specifically, given a base monad~$T$
and a strong functor~$\SigF$, we have final coalgebras
\begin{equation*}
  \TF X = \nu\gamma.\,T(X+\SigF\gamma)
\end{equation*}
for each object~$X$, assuming enough structure on~$T$,~$\Sigma$, and
the base category.  Inhabitants of $\TF X$ are understood as
(possibly) nonterminating processes that proceed in steps, where each
step produces side-effects specified by~$T$ (e.g.~writing to shared
global memory, nondeterminism) and performs communication actions
specified by~$\Sigma$. E.g.~in the simplest case,~$\Sigma$ is of the
form $a\times (-)^b$, which may be understood as reading inputs of
type~$b$ and writing outputs of type~$a$.

The construction of~$\TF X$ from~$T$ is an infinite version
of the generalized resumption transformer introduced by Cienciarelli
and Moggi~\cite{CenciarelliMoggi93}. It has been termed the
\emph{coinductive generalized resumption} transformer by Pir\'og and
Gibbons~\cite{PirogGibbons13,PirogGibbons14}, who show that on the
Kleisli category of~$T$,~$\TF$ is the free completely iterative monad
generated by~${T\Sigma}$.

The result that~$\TF$ is a completely iterative monad brings us to
the contribution of the current paper. Recall that complete
iterativity of~$\TF$ means that for every morphism
\begin{equation*}
  e:X\to \TF(Y+X),
\end{equation*}
read as an equation defining the inhabitants of~$X$, thought of as
variables, as terms over the defined variables (from~$X$) and
parameters from~$Y$, has a unique \emph{solution}
\begin{equation*}
  e^\istar:X\to \TF Y
\end{equation*}
in the evident sense, \emph{provided} that~$e$ is \emph{guarded}. The
latter concept is defined in terms of additional structure of~$\TF$
as an \emph{idealized monad}, which essentially allows distinguishing
terms beginning with an operation from mere variables. Guardedness of
$e$ then means that recursive calls can happen only under a free
operation. Similar results on guarded recursion abound in the
literature; for example, the fact that~$\TF$ admits guarded recursive
definitions can also be deduced from more general results by Uustalu
on parametrized monads~\cite{Uustalu03}.

The central result of the current paper is to remove the guardedness
restriction in the above setup. That is, we show that a solution
$e^\istar:X\to\TF Y$ exists for \emph{every} morphism
$e:X\to\TF(X+Y)$. Of course, the solution is then no longer unique
(for example, we admit definitions of the form $x=x$); moreover, we
clearly need to make additional assumptions about~$T$. Our result
states, more precisely, that~$\TF$ allows for a principled
\emph{choice} of solutions~$e^\istar$ satisfying standard equational
laws for recursion~\cite{SimpsonPlotkin00}, thus making~$\TF$ into a
\emph{complete Elgot monad}~\cite{AdamekMiliusEtAl10}\footnote{We
  modify the original definition of Elgot monad, which requires the
  object~$X$ of variables to be a finitely presentable object in an
  lfp category, by admitting unrestricted objects of variables. This
  change is owed mostly to the fact that we do not assume the base
  category to be lfp, and in our own estimate appears to be
  technically inessential, although we have not checked details for
  the obvious variants of our results that arise by replacing complete
  Elgot monads with Elgot monads.}. The assumption on~$T$ that we need
to enable this result is that~$T$ itself is a complete Elgot monad (e.g.\
partiality, nondeterminism, or combinations of these with state),
i.e.\ we show that \emph{the class of complete Elgot monads is stable under the
  coinductive generalized resumption transformer}. We show moreover
that the structure of~$\TF$ as a complete Elgot monad is uniquely determined
as extending that of~$T$.

The motivation for these results is, well, to free non-wellfounded
recursive definitions from the standard guardedness constraint. Note
for example that in~\cite{PirogGibbons13}, it was necessary to assume
guards in all loop iterations when interpreting a while-language with
actions originally proposed by Rutten~\cite{Rutten99} over a
completely iterative monad. Contrastingly, given that~$\TF$ is a
(complete) Elgot monad, one can now just write unrestricted while
loops. We elaborate this example in Section~\ref{sec:examples}, and
recall a standard example of unguarded recursion in process algebra in
Section~\ref{sec:bsp}. 

An earlier version of this work has appeared as~\cite{GoncharovRauchEtAl15};
the present version not only has full proofs, but also works in a
generalized setup with an arbitrary strong functor~$\Sigma$ (admitting
the requisite final coalgebras) instead of just functors of the form
$a\times(-)^b$.

The material is organized as follows. We present the mentioned
examples involving unguarded iteration in Sections~\ref{sec:examples}
and~\ref{sec:bsp}.  In Section~\ref{sec:prelim}, we collect
preliminaries on (strong) monads and their Kleisli categories. We
discuss the concept of complete Elgot monad in
Section~\ref{sec:elgot}, and recall the coinductive generalized
resumption transformer in Section~\ref{sec:resumptions}, showing in
particular that it preserves strength.  Sections~\ref{sec:cr_elgot}
and~\ref{sec:sums} contain our main results, showing that the
coinductive generalized resumption transformer preserves complete Elgotness and
can be seen as freely extending complete Elgot monads with communication
actions. We discuss related work in Section~\ref{sec:related}, and
conclude in Section~\ref{sec:concl}.

\section{Example: Unrestricted While Loops}\label{sec:examples}

We proceed to discuss examples, aimed mainly at illustrating the
benefits of not being restricted to guarded equations in recursive
definitions thanks to complete Elgotness of coinductive resumption
monads (Theorem~\ref{thm:cr_elg}). We work with the intuitive
understanding of monads,~$T_\Sigma$, guardedness, and complete
iterativity provided in the introduction, and briefly explain the
requisite categorical notation regarding strong monads and
distributive categories along the way, deferring a more formal
treatment to Sections~\ref{sec:prelim} and~\ref{sec:resumptions}. 

Our first example is a simple while-language with actions proposed by
Rutten, given by the grammar
\begin{equation*}
  P,Q::= A\mid P;Q\mid\ifTerm{b}{P}{Q}\mid\whileTermS{b}{P} %
\end{equation*}
and, following Pir\'og and Gibbons~\cite{PirogGibbons13}, interpreted
in the Kleisli category of a monad~$\BBM$. Here,~$A$ ranges over atomic
actions interpreted as Kleisli morphisms $\sem{A}:n\to M n$ for some
fixed object~$n$,
and~$b$ over atomic predicates, interpreted as Kleisli morphisms
$\sem{b}:n\to M(1+1)$ (where we read the left-hand summand as `false'
and the right-hand one as `true', and~$1$ denotes the terminal
object). We say that~$A$ is of \emph{output type} if $\sem{A}:n\to Mn$
has the form $\sem{A}=(M\fst)\comp\tau\comp\brks{\id_n,p}$ for some
$p:n\to M1$, where~$\fst$ denotes first projection and
$\tau:n\times M1\to M(n\times 1)$ is the strength
of~$M$. Moreover,~$A$ is of \emph{input type} if $\sem{A}:n\to Mn$
factors through the unique morphism $!:n\to 1$.  Sequential
composition $P;Q$ is interpreted as Kleisli composition
$\sem{Q}^\klstar \comp \sem{P}$, and
\begin{align*}
\sem{\ifTerm{b}{P}{Q}}=[\sem{Q}\comp\fst,\sem{P}\comp\fst]^\klstar\comp (M\dist)\comp\tau\comp\brks{\id_n,\sem{b}}
\end{align*}
where $\dist:n\times(1+1)\to (n\times 1)+(n\times 1)$ is a
distributivity isomorphism that we postulate in our general setup
(Section~\ref{sec:prelim}). The key point, of course, is the
interpretation of the while loop, given in the presence of iteration
$(\argument)^\istar$ by
\begin{equation}\label{eq:while-sem}
  \sem{\whileTermS{b}{P}}=\bigl([(M\inl)\comp\eta\comp\fst, (M\inr)\comp\sem{P}\comp\fst]^\klstar\comp (M\dist)\comp\tau\comp\brks{\id,\sem{b}}\bigr)^\istar
\end{equation}
where the typing of the expression under the iteration operator
$(\argument)^\istar$ is visualized as
\begin{align*}
  n \xto{~~\brks{\id,\sem{b}}~~}~~& n\times M(1+1)\\
  \xto{~~\tau~~}~~&M(n\times(1+1))\\
  \xto{~~M\dist~~}~~&M(n\times 1 +n\times 1)\qquad \\
  \xto{~~[(M\inl)\comp\eta\comp\fst, (M\inr)\comp\sem{P}\comp\fst]^\klstar~~}~~&M(n+n).
\end{align*}
It has been observed by Pir\'og and Gibbons that if one instantiates
$\BBM$ with a completely iterative monad, one needs to guard every
iteration of the while loop, i.e.\ change the semantics of while to be
\begin{equation*}
  \sem{\whileTermS{b}{P}}=\bigl([(M\inl)\comp\eta\comp\fst, (M\inr)\comp\sem{P}\comp\fst]^\klstar\comp (M\dist)\comp \tau\comp\brks{\id,\sem{b}}\comp\gamma\bigr)^\istar
\end{equation*}
where $\gamma:n\to Mn $ is guarded, as otherwise the iteration may
fail to be defined (recall from the introduction that over completely
iterative monads, definedness of iteration depends on guardedness). If
we instantiate~$\BBM$ with a complete Elgot monad, such as~$\BBTF$ for
a complete Elgot monad~$\BBT$ (by Theorem~\ref{thm:cr_elg}), then the
guard is unnecessary, i.e.\ we can stick to the original
semantics~\eqref{eq:while-sem}. As an example, consider a
simple-minded form of processes that input and output symbols from~$n$
and have side effects specified by~$\BBT$; i.e.\ we work in
$\BBM=\BBT_\Sigma$ for $\Sigma X=n\times X+X^n$ where we think of
$\Sigma$ as being generated by an output operation $1\to n$ and an
input operation $n\to 1$. We correspondingly assume an atomic action
$\wrt$ that outputs a symbol from~$n$, and an atomic action~$\rd$ that
inputs a symbol. We interpret $\wrt$ as being of output type, i.e.~by
$\sem{\wrt}=(M\fst)\comp\tau\comp\brks{\id_n,w}$ where $w:n\to M1$ is
obtained from a canonical transformation $\iota^\BBT:\Sigma\to\TF=M$
that will be introduced in Section~\ref{sec:sums}; intuitively,
$\iota^\BBT$ converts actions into single-step processes without side
effects. Explicitly,~$w$ is the composite
\begin{equation*}
  n\xto{~~\brks{\id_n,!_n}~~} n\times 1\xto{~~\inl~~}\Sigma 1
  \xto{~~\iota^\BBT~~}M1.
\end{equation*}
Moreover, we interpret $\rd$ as being of input type,
i.e.~$\sem{\rd}=r\,\comp\,!_n$ where $r:1\to M n$ is obtained
analogously, i.e.~$r$ is the composite
\begin{equation*}
  1 \xto{~~r_0~~} n^n \xto{~~\inr~~} \Sigma n\xto{~~\iota^\BBT~~}  Mn
\end{equation*}
where $r_0:1\to n^n$ arises by currying $\snd:1\times n\to n$.
Moreover,
assume a basic predicate~$b$ whose interpretation is largely
irrelevant to the example as long as it may take both truth values;
for example,~$b$ might just pick a truth value nondeterministically
or at random, depending on the nature of the base monad~$\BBT$. Consider
the program
\begin{equation*}
  \rd;\whileTermS{\mathit{true}}{\ifTerm{b}{\skp}{\wrt}}
\end{equation*}
where $\skp$ is an atomic action interpreted as the unit of~$M$, a
process that does nothing and terminates immediately. It is possible
for the loop to not perform any write operations, as~$b$ might happen
to always pick the left-hand branch; that is, the loop body fails to
be guarded. Since~$M$ is a complete Elgot monad and not just
completely iterative, the semantics of the loop is defined
(by~\eqref{eq:while-sem}) nonetheless.

\section{Example: Simple Process Algebra}\label{sec:bsp}

Baeten et al.~\cite{BaetenBastenEtAl10} introduce a simple process
algebra BSP (\emph{Basic Sequential Processes}) featuring finite
choice and action prefixing, and show that it can express all
countable transition systems if unguarded recursion is
allowed~\cite[Theorem 5.7.3]{BaetenBastenEtAl10}. The idea of the
proof is to introduce variables~$X_{ik}$ for $i,k\in\mathbb{N}$
representing the $k$-th transition of the $i$-th state, with~$X_{i0}$
representing the $i$-th state itself, and (unguarded) recursive
equations
\begin{equation}\label{eq:bsp}
  X_{ik}=b_{ik}.X_{j(i,k),0}+X_{i,k+1}
\end{equation}
where the $k$-th transition of the $i$-th state performs action
$b_{ik}$ and reaches the $j(i,k)$-th state. (The use of unguarded
recursion is essential here, as guarded recursive definitions in BSP
will clearly produce only finitely branching systems.)  To model this
phenomenon using the coinductive generalized resumption transformer,
we take $\BBT=\Pow_{\omega_1}$, the countable powerset monad on $\Set$
(details are in Example~\ref{exp:guard_pa}), and the functor~$\Sigma$
generated by $a$-many unary operations where~$a$ is the set of
actions; that is, $\Sigma X=a\times X$.  We thus regard countable
nondeterminism as the base effect, and add action prefixing via
coinductive generalized resumptions.  Representing variables~$X_{i,k}$
by their indices $(i,k)$, we then cast the definition~\eqref{eq:bsp}
as an equation morphism
\begin{equation*}
  e:\Nat\times\Nat\to\TF(\Nat\times\Nat)\cong
  \TF(0+\Nat\times\Nat)
\end{equation*}
as follows. Eliding isomorphic conversions, we write elements of
$\TF(\Nat\times\Nat)$ as subsets of
$(\mathbb{N}\times\mathbb{N})+a\times
\TF(\mathbb{N}\times\mathbb{N})$; in this notation,
\begin{equation*}
  e(i,k)=\{\inr(b_{ik},\{\inl(j(i,k),0)\}),\inl(i,k+1)\}.
\end{equation*}
Again, our result that~$\BBTF$ is a complete Elgot monad
(Theorem~\ref{thm:cr_elg}) guarantees that this equation has a
solution~$e^\istar$, and moreover that the choice~$(-)^\istar$ of
solutions in~$\BBTF$ is uniquely determined as forming a complete Elgot monad
and extending the usual structure of $\BBT=\Pow_{\omega_1}$ as a
complete Elgot monad, which takes least fixed points. We emphasize that
solutions in~$\BBTF$ do not arise as least fixed points; in
particular, recall that simulation is only a preorder on $\TF X$.

\section{Preliminaries}\label{sec:prelim}
\noindent According to Moggi~\cite{Moggi91}, a notion of computation
can be formalized as a strong monad~$\BBT$ over a Cartesian category (i.e.\ a category with finite products).
In order to support the constructions occurring in the main object of study, we work in a \emph{distributive category}~$\BC$, i.e.\ a category with finite products and coproducts (including a final and an initial object) such that the natural transformation
\begin{align*}
X\times Y + X\times Z \xto{~~[\id\times\inl,\id\times\inr]~~} X\times (Y+Z)
\end{align*}
is an isomorphism~\cite{Cockett93}, whose inverse we denote by $\dist_{X,Y,Z}$.
Here we denote injections into binary coproducts by $\inl:X\to X+Y$, $\inr:Y\to
X+Y$, while $\fst:X\times Y\to X$, $\snd:X\times Y\to Y$ denote projections from
binary products; pairing is denoted by $\langle\argument,\argument\rangle$, and
copairing of $f:X\to Z$, $g:Y\to Z$ by $[f,g]:X+Y\to Z$. Unique morphisms $A\to
1$ into the terminal object are written~$!_X$, or just~$!$. We write~$|\BC|$ for
the class of objects of~$\BC$.
Distributivity essentially allows using context variables in case
expressions, i.e.\ in copairing.
We omit indices on natural
transformations where this is unlikely to cause confusion.%

A \emph{strong} functor on~$\BC$ is a functor $F:\BC\to\BC$ equipped
with a natural transformation 
\begin{equation*}
  \rho_{X,Y}: X\times FY\to F(X\times Y)
\end{equation*}
called \emph{strength}, subject to the equations
\begin{align*}
	\snd &= (F \snd)\comp\rho \tag{\sc str$_1$} \\
	(F\assoc)\comp\rho &= \tau\comp(\id \times \rho)\comp\assoc \tag{\sc str$_2$} 
\end{align*}
where $\assoc: (X \times Y) \times Z \to X \times (Y \times Z)$ is the
associativity isomorphism of products, explicitly,
$\assoc=\brks{\fst\comp\fst,\brks{\snd\fst,\snd}}$.  A natural
transformation $\alpha:F\to G$ between strong functors~$F$,~$G$ (with
the strength denoted~$\rho$ in both cases) is \emph{strong} if it
commutes with strength:
\begin{equation*}
  \xymatrix@R25pt@C58pt@M6pt{X\times FY \ar[d]_{\rho_{X,Y}} \ar[r]^{\id_X\times\alpha_Y}
  & X\times GY\ar[d]^{\rho_{X,Y}}\\
  F(X\times Y)\ar[r]_{\alpha_{X\times Y}} & G(X\times Y)
  }
\end{equation*}

\noindent Recall that a monad~$\BBT$ over~$\BC$ can be given by a
\emph{Kleisli triple} $(T,\eta,\argument^{\klstar})$ where~$T$ is an
endomap of~$|\BC|$ (in the following, we always denote monads
and their functor parts by the same letter, with the former in
blackboard bold), the \emph{unit}~$\eta$ is a family of morphisms
$\eta_X:X\to TX$, and the \emph{Kleisli lifting} $(\argument)^{\klstar}$
maps $f:X\to TY$ to $f^{\klstar}:TX\to TY$, subject to the equations
\begin{align*}
\eta^{\klstar}=\id&& f^{\klstar}\comp\eta=f&& (f^{\klstar}\comp g)^{\klstar}=f^{\klstar}\comp g^{\klstar}.
\end{align*}
This is equivalent to the presentation in terms of an endofunctor $T$
with natural transformations unit and multiplication. 

A \emph{strong monad} is a monad whose underlying endofunctor is strong and the corresponding 
strength $\tau$ additionally satisfies the following additional coherence
conditions~\cite{Moggi91} (with modifications reflecting the switch
from monad multiplication to Kleisli lifting):
\begin{align*}\label{page:str}
	\tau\comp(\id \times \eta) &= \eta \tag{\sc str$_3$} \\
	(\tau\comp(\id \times f))^{\klstar}\comp\tau &= \tau\comp(\id \times f^{\klstar}) \tag{\sc str$_4$}
\end{align*}
The typing of the law
({\sc str$_4$}) capturing compatibility of the strength with Kleisli
lifting is shown in the diagram
\begin{equation*}
  \xymatrix@R25pt@C28pt@M6pt{X\times TY \ar[rr]^{\id\times f^\klstar}
  \ar[d]_\tau & & X\times TZ \ar[d]^{\tau}\\
  T(X\times Y) \ar[rr]_{(\tau(\id\times f))^\klstar} && T(X\times Z)\\
  X\times Y  \ar[r]^{\id\times f} & X\times TZ \ar[r]^{\tau}& T(X\times Z)
  }
\end{equation*}
(For distinction, we denote strengths of monads by $\tau$ and
strengths of functors by $\rho$ throughout.)  Strength enables
interpreting programs over more than one variable, and allows for
internalization of the Kleisli lifting, thus legitimating expressions
like $\lambda x.\,(f(x))^\klstar:X\to (TY\to TZ)$ for
$f:X\to (Y\to TZ)$, which encodes
$\curry(\uncurry(f)^\klstar\comp\tau)$.  Strength is equivalent to the
monad being enriched over~$\BC$~\cite{Kock72}; in particular, every
monad on $\Set$ is strong. Henceforth we shall use the term `monad' to
mean `strong monad' unless explicitly stated otherwise. We emphasize
however that all our results remain valid under the removal of all
strength assumptions and claims (that is, replacing the terms strong
monad, strong functor, and strong natural transformation with monad,
functor, and natural transformation, respectively, throughout).

The standard intuition for a monad $\BBT$ is to think of $TX$ as the
set of terms in some algebraic theory, with variables taken from
$X$. In this view, the unit converts variables into terms, and a
Kleisli lifting $f^\klstar$ applies a substitution $f:X\to TY$ to
terms over $X$. In our setting, the `terms' featuring here are often
infinite; nevertheless, we sometimes call them \emph{algebraic terms}
for emphasis.

The \emph{Kleisli category} $\BC_{\BBT}$ of a monad $\BBT$ has the
same objects as $\BC$, and $\BC$-morphisms $X\to TY$ as morphisms
$X\to Y$. The identity on $X$ in $\BC_{\BBT}$ is $\eta_X$; and the
\emph{Kleisli composite} of $f:X\to TY$ and $g:Y\to TZ$ is
$g^{\klstar}\comp f$. %
A monad $\BBT$ \emph{has rank $\kappa$} for a regular
cardinal~$\kappa$ if~$T$ preserves $\kappa$-filtered colimits. On
$\Set$, this condition means that $T$ is determined by its values on
sets of cardinality less than $\kappa$, in the sense that every
element of $TX$ comes from an element of $TY$ for some
subset~$Y\subseteq X$ with $|Y|<\kappa$; intuitively, all operations
of~$T$ have arity less than~$\kappa$. A monad is \emph{ranked} if it
has some rank $\kappa$.

\begin{rexa}\label{expl:sig}
  As indicated in the introduction, in the main motivating examples
  the strong functor $\Sigma$ plays the role of a signature of
  communication actions. Technical details are as follows. Assume that
  $\BC$ has \emph{exponentials} of the form $X^b$ (for $b$ ranging
  over a subset of $|\BC|$), i.e.\ objects adjoint to Cartesian
  products $X\times b$, which means that for any $X$ and~$Y$, there is
  an isomorphism
\begin{align*}
\curry_{X,Y}:\Hom_{\BC}(X\times b,Y)\cong\Hom_{\BC}(X,Y^b),
\end{align*}
natural in $X$ and $Y$. We write $\uncurry_{X,Y}$ for the inverse map
$\curry^{\mone}_{X,Y}$. The \emph{evaluation morphism}
$\ev_X: X^b\times b\to X$ (natural in $X$) is obtained as
$\uncurry_{X^b,X}(\id_{X^b})$. 

It is easy to see that the functors $X\mapsto a\times X$ and
$X\mapsto X^b$ are strong and that composites and coproducts of strong
functors (as plain functors) are again strong functors. Hence, the
functor
\begin{align*}
\Sigma X = \sum_{i} a_i\times X^{b_i}
\end{align*}
is strong. Intuitively (and formally correctly on $\Set$), $\Sigma X$
can be seen as the set of flat terms over variables from $X$ in the
signature~$\Sigma$, i.e.\ the elements of $\Sigma X$ are of the form
$f_i(c;x_1,\ldots,x_{n_i})$ where $f$ is a parametrized operation from
the signature, $c$ is a \emph{parameter} from $a_i$, and
$x_1,\ldots,x_{n_i}$ are elements of $X$.
The computational meaning of exponents in~$X^b$ is thus to capture a
notion of arity of algebraic operations generating effects, e.g.\
$b=2$ would correspond to binary operations such as nondeterministic
choice. An example of an operation taking a parameter would be the
operation of writing a value $\mathit{val}$ to position $\mathit{ind}$
of an array, $\mathit{update}(\brks{\mathit{ind}, \mathit{val}}; x)$
(see~\cite{PlotkinPower02} for details).

A more general setup involves categories enriched over a symmetric
monoidal closed category $\BV$ whose objects are then treated as
arities (and \emph{coarities}, i.e.\ objects used for indexing
families of operations)~\cite{HylandPower06,HylandPlotkinEtAl06}.  One
then replaces products with \emph{tensors} and exponentials with
\emph{cotensors}. 

Another example are functors on the topos of nominal sets
and equivariant maps $\Nom$ built using constant functors, identity,
coproducts, finite products, and the so-called \emph{abstraction
  functor} $[\Names](-)$, where $\Names$ is a set of \emph{names} and
$[\Names]X$ consists of pairs $(a,x)\in\Names\times X$ modulo a
natural notion of $\alpha$-equivalence~\cite{Pitts13}. Such functors
represent so-called \emph{binding signatures}, whose operations may
bind names, such as $\lambda$-abstraction or $\pi$-calculus-style
fresh name binders~$\nu$; terms are then taken modulo
$\alpha$-equivalence. E.g.~the $\lambda$-calculus syntax is
rendered as the initial algebra of the functor
$LX=\Names+X\times X +[\Names]X$ (see~\cite{GabbayPitts99}).
\end{rexa}

\section{Complete Elgot Monads}\label{sec:elgot}

As indicated in the introduction, we will be interested in recursive
definitions over a monad~$\BBT$; abstractly, these are morphisms
\begin{equation*}
  f:X\to T(Y+X)
\end{equation*}
thought of as associating to each variable $x:X$ a definition $f(x)$ in
the shape of an algebraic term from $T(Y+X)$, which thus employs
parameters from $Y$ as well as the defined variables from $X$. The
latter amount to recursive calls of the definition. This notion is
agnostic to what happens in the case of non-terminating recursion. For
example, $T$ might identify all non-terminating sequences of recursive
calls into a single value $\bot$ signifying non-termination; at the
other extreme, $T$ might be a type of infinite trees that just records
the tree of recursive calls explicitly.

To a recursive definition $f$ as above, we wish to associate a solution
\begin{equation*}
  f^\istar:X\to TY,
\end{equation*}
which amounts to a non-recursive definition of the elements of $X$ as
terms over $Y$ only. As we do not assume any form of guardedness, this
solution will in general fail to be unique. We thus require a coherent
selection of solutions $f^\istar$ for all equations $f$, where by
coherent we mean that the selection satisfies a collection of well-established
\mbox{(quasi-)}equational properties. Formally:

\begin{defi}\textbf{(Complete Elgot monads)}\label{defn:elgot}
  A \emph{complete Elgot monad} is a monad $\BBT$ equipped with an
  operator $\argument^\istar$, called \emph{iteration}, that assigns
  to each morphism $f:X\to T(Y+X)$ a morphism $f^{\istar}:X\to TY$
  such that the following laws hold:
\begin{itemize}
  \item\emph{fixpoint:} $[\eta, f^{\istar}]^{\klstar} \comp f = f^{\istar}$;
  \item\emph{naturality:} $g^{\klstar} \comp f^{\istar} = ([T \inl \comp\,g, \eta
	\comp\,\inr]^{\klstar} \comp f)^{\istar}$ for $g : Y \to TZ$;
  \item\emph{codiagonal:} $(T[\id,\inr] \comp\, g)^{\istar} = g^{\istar\istar}$ for  $g : X \to T((Y + X) + X)$;
  \item\emph{uniformity:} $f \comp h = T(\id+ h) \comp g$ implies
	$f^{\istar} \comp h = g^{\istar}$ for  $g: Z \to T(Y + Z)$ and
	$h: Z \to X$.
\end{itemize}
Additionally, iteration must be \emph{compatible with strength} in the
sense that
\begin{equation*}
\tau \comp (\id \times f^{\istar}) = (T \dist \comp\, \tau \comp (\id
\times f))^{\istar}
\end{equation*}
for $f:X\to T(Y+X)$.
\end{defi}
\noindent It has recently been shown~\cite{EsikGoncharov16,GoncharovMiliusEtAl16} that
\emph{dinaturality}, previously standardly included in axiomatizations
of iteration~\cite{BloomEsik93}, is in fact derivable from the other
axioms in Definition~\ref{defn:elgot}. We record this for future
reference:
\begin{lem}[Dinaturality]\label{lem:dinat}
  Every complete Elgot monad satisfies \emph{dinaturality}:
  \begin{equation*}
  ([\eta \comp \inl, h]^{\klstar} \comp g)^{\istar} = [\eta,
  \left([\eta \comp \inl, g]^{\klstar} \comp
  h\right)^{\istar}]^{\klstar} \comp g
  \text{ for }g : X \to T(Y + Z)\text{ and }h:Z\to T(Y+X).
  \end{equation*}
\end{lem}
\begin{rem}
  The above definition is inspired by the axioms of parametrized
  uniform iterativity~\cite{SimpsonPlotkin00}, which go back to
  Bloom and \'{E}sik~\cite{BloomEsik93}. Ad\'amek et
  al.~\cite{AdamekMiliusEtAl10} define \emph{Elgot monads} by means of
  a slightly different system of axioms: the codiagonal (and
  dinaturality) laws are replaced with the \emph{Beki\'{c}
  identity}. Both axiomatizations are however equivalent, which is
  essentially a result about iteration
  theories~\cite[Section~6.8]{BloomEsik93}; we record a self-contained
  proof of this equivalence in Proposition~\ref{prop:bekic-from-elgot}
  below. Moreover, the iteration operator in~\cite{AdamekMiliusEtAl10}
  is defined only for $f:X\to T(Y+X)$ with finitely presentable $X$,
  under the assumption that~$\BC$ is \emph{locally finitely
  presentable}; hence our use of the term `complete Elgot monad'
  instead of `Elgot monad'. We have the impression that this
  difference is not technically essential but have not checked details
  for the finitary variant of our results.
\end{rem}
\begin{prop}[Beki\'{c} identity]\label{prop:bekic-from-elgot}
  A complete Elgot monad $\BBT$ is equivalently a monad satisfying
  \emph{fixpoint}, \emph{naturality}, \emph{uniformity} (as in
  Definition~\ref{defn:elgot}), and the \emph{Beki\'{c} identity}
  \begin{align}
  \tag{Beki\'{c}} ((T\alpha) \comp [f, g])^{\istar} = [\eta,
  h^{\istar}]^{\klstar} \comp [\eta \comp \inr, g^{\istar}]
  \end{align}
  where $g : X \to T((Z + Y) + X)$, $f : Y \to T((Z + Y) + X)$,
  $h = [\eta, g^{\istar}]^{\klstar} \comp f : Y \to T(Z + Y)$, with
  $\alpha : (A + B) + C \to A + (B + C)$ being the obvious coproduct
  associativity morphism.
\end{prop}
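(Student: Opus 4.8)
The proposition asserts the equivalence of two axiom systems. Since \emph{unfolding}, \emph{naturality} and \emph{uniformity} are common to both sides, and since compatibility with strength is literally the same axiom on both sides, what remains is to show that, over these laws, the conjunction of \emph{dinaturality} and \emph{codiagonal} is interderivable with the \emph{Beki\'c} identity. I would establish the two implications separately.

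For \emph{codiagonal} from Beki\'c: given $g\colon X\to T((Y+X)+X)$, instantiate the Beki\'c identity with $f:=g$, with both variable summands taken to be $X$ and with parameter object $Y$. One application of \emph{unfolding} collapses the auxiliary morphism ``$h$'' of the Beki\'c identity to $g^\istar$, and a second application collapses both components of its right-hand side to $(g^\istar)^\istar$, so the identity reduces to $(T\alpha\comp[g,g])^\istar=(g^\istar)^\istar\comp[\id,\id]$, where $[g,g]\colon X+X\to T((Y+X)+X)$ and $\alpha\colon(Y+X)+X\to Y+(X+X)$. A direct computation with coproduct injections gives $(\id+[\id,\id])\comp\alpha=[\id,\inr]$, hence
\[
  (T[\id,\inr]\comp g)\comp[\id,\id]\;=\;T(\id+[\id,\id])\comp T\alpha\comp[g,g],
\]
so \emph{uniformity} along $[\id,\id]\colon X+X\to X$ yields $(T[\id,\inr]\comp g)^\istar\comp[\id,\id]=(g^\istar)^\istar\comp[\id,\id]$; precomposing with an injection $X\to X+X$ gives codiagonal.

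For \emph{dinaturality} from Beki\'c: given $g\colon X\to T(Y+Z)$ and $h\colon Z\to T(Y+X)$, form the combined system $e\colon X+Z\to T(Y+(X+Z))$ whose $X$-component is $T[\inl,\inr\comp\inr]\comp g$ and whose $Z$-component is $T[\inl,\inr\comp\inl]\comp h$. Since the $Z$-component of $e$ calls only $Y$ and $X$, applying Beki\'c to $e$ has the effect that \emph{unfolding} identifies the morphism ``$g^\istar$'' of the Beki\'c identity with $h$, and its auxiliary morphism ``$h$'' simplifies to $[\eta\comp\inl,h]^\klstar\comp g$; thus Beki\'c gives $e^\istar\comp\inl_X=([\eta\comp\inl,h]^\klstar\comp g)^\istar$, the left-hand side of dinaturality. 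Applying Beki\'c to the variant $\tilde e\colon Z+X\to T(Y+(Z+X))$ obtained by interchanging the roles of $X$ and $Z$ gives, symmetrically, $\tilde e^\istar\comp\inr_X=[\eta,([\eta\comp\inl,g]^\klstar\comp h)^\istar]^\klstar\comp g$, the right-hand side. Finally $\tilde e$ is $e$ transported along the symmetry $\sigma\colon X+Z\to Z+X$, that is, $\tilde e\comp\sigma=T(\id+\sigma)\comp e$, so \emph{uniformity} gives $e^\istar=\tilde e^\istar\comp\sigma$ and hence $e^\istar\comp\inl_X=\tilde e^\istar\comp\inr_X$; comparing the two computations yields dinaturality.

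The converse — deriving Beki\'c from \emph{dinaturality}, \emph{codiagonal}, \emph{naturality}, \emph{unfolding} and \emph{uniformity} — is the more laborious direction, and is in essence the iteration-theory argument of Bloom and \'Esik~\cite[\S6.8]{BloomEsik93}: from $(T\alpha\comp[f,g])^\istar$ one enlarges the variable object, uses \emph{dinaturality} to reorganise the two components into matching summands, collapses the enlargement by \emph{codiagonal}, and reassembles the result using \emph{naturality}, \emph{unfolding} and \emph{uniformity}. In all parts of the proof the genuinely iterative content is slight, being confined to a few applications of \emph{unfolding}/\emph{uniformity} and to the quoted instances of Beki\'c (resp.\ dinaturality and codiagonal); the real work — and what I expect to be the main obstacle — is the bookkeeping of coproduct injections and associativity isomorphisms, i.e.\ checking that the (co)pairing maps one is forced to insert really have the shape demanded by each axiom.
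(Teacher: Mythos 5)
The half of the equivalence you actually prove is fine, and it is essentially the paper's own argument for that direction: your codiagonal derivation (instantiate Beki\'c with $f:=g$, collapse $h$ and the right-hand side by unfolding, then use uniformity along $[\id,\id]:X+X\to X$) is the paper's proof up to trivial rearrangement, and your dinaturality derivation via the combined system $e$ on $X+Z$, its swap $\tilde e$, and uniformity along the symmetry is the same device as the paper's $w=([T(\id+\inr)\comp h,\,T(\id+\inl)\comp g])^{\istar}$, just with the summands in the other order. The instantiations you describe (e.g.\ that the Beki\'c ``$g^{\istar}$'' becomes $h$ by unfolding, and the auxiliary ``$h$'' becomes $[\eta\comp\inl,h]^{\klstar}\comp g$) all check out.

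The genuine gap is the other direction: you do not prove that the Elgot axioms imply the Beki\'c identity, you only gesture at Bloom--\'Esik. This is not a removable formality -- it is the longer half of the paper's proof, and your one-sentence outline does not match how the argument actually has to go. The key missing idea is the explicit doubling construction: one forms $u = T((\id+\inl)+\inr)\comp[f,g]\colon Y+X\to T\bigl((Z+(Y+X))+(Y+X)\bigr)$, applies \emph{codiagonal} to $u$ \emph{first} (your sketch has dinaturality doing the reorganisation before codiagonal, which is the wrong order), identifies $(T[\id,\inr]\comp u)^{\istar}$ with $(T\alpha\comp[f,g])^{\istar}$, proves the auxiliary identity $u^{\istar}\comp\inr = T(\id+\inl)\comp g^{\istar}$ by uniformity and naturality, and only then unwinds $(u^{\istar})^{\istar}$ by repeated unfolding with a single, carefully targeted application of \emph{dinaturality} to extract $h^{\istar}=\bigl([\eta,g^{\istar}]^{\klstar}\comp f\bigr)^{\istar}$ and reassemble $[\eta,h^{\istar}]^{\klstar}\comp[\eta\comp\inr,g^{\istar}]$. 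Citing \cite[\S 6.8]{BloomEsik93} does not discharge this either, since that result is stated for iteration theories and some translation to complete Elgot monads over an arbitrary distributive category is needed -- which is exactly why the paper records a self-contained proof. As it stands, your proposal establishes only that Beki\'c implies codiagonal and dinaturality, i.e.\ one implication of the proposition.
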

\begin{proof}
  Let us show that complete Elgot monads validate the Beki\'{c}
  identity. Let
  \begin{align*}
  u = T((\id + \inl) + \inr) \comp [f, g] : Y + X \to T((Z+(Y+X))
  + (Y+X)).
  \end{align*}
  By \textit{codiagonal},
  \begin{equation}
  (T[\id, \inr] \comp u)^{\istar} =
  (u^{\istar})^{\istar}.\label{eq:u-codiag}
  \end{equation}
  Now the left-hand side of \eqref{eq:u-codiag} simplifies to
  \begin{align*}
  &~(T[\id, \inr] \comp T((\id + \inl) + \inr) \comp [f,g])^{\istar} \\
  =&~(T[\id + \inl, \inr \comp \inr] \comp [f, g])^{\istar} \\
  =&~((T\alpha) \comp [f, g])^{\istar},
  \end{align*}  
  i.e.~to the left-hand side of the Beki\'{c} identity.  Now observe
  that, by \textit{uniformity} and \textit{naturality},
  \begin{align}\label{eq:bek_proof}
  u^{\istar} \comp \inr = (T(\id + \inl) + \id) \comp g)^{\istar} = T(\id
  + \inl) \comp g^{\istar}.
  \end{align}
  Therefore, the right-hand side of \eqref{eq:u-codiag} can be
  rewritten in the form
  \begin{flalign*}
  &&(u^{\istar})^{\istar} =&~ ([\eta, u^{\istar}]^{\klstar} \comp u)^{\istar}&\by{fixpoint} \\
  &&=&~ ([\eta \comp (\id + \inl), u^{\istar} \comp \inr] \comp [f, g])^{\istar} \\
  &&=&~ ([T(\id + \inl) \comp \eta, T(\id + \inl) \comp g^{\istar}]^{\klstar} \comp [f, g])^{\istar} &\by{\ref{eq:bek_proof}}\\
  &&=&~ (T(\id + \inl) \comp [\eta, g^{\istar}]^{\klstar} \comp [f, g])^{\istar} \\
  &&=&~ (T(\id + \inl) \comp [[\eta, g^{\istar}]^{\klstar} \comp f, g^{\istar}])^{\istar} &\by{fixpoint} \\
  &&=&~ ([\eta \comp \inl, \eta \comp \inr \comp \inl]^{\klstar} \comp [[\eta, g^{\istar}]^{\klstar} \comp f, g^{\istar}])^{\istar} \\
  &&=&~ [\eta, ([\eta \comp \inl, [[\eta, g^{\istar}]^{\klstar} \comp f, g^{\istar}]]^{\klstar} \comp \eta \comp \inr \comp \inl)^{\istar}]^{\klstar}\\&&& ~~\comp [[\eta, g^{\istar}]^{\klstar} \comp f, g^{\istar}] &\by{dinaturality, Lemma~\ref{lem:dinat}} \\
  &&=&~ [\eta, ([\eta, g^{\istar}]^{\klstar} \comp f)^{\istar}]^{\klstar} \comp [[\eta, g^{\istar}]^{\klstar} \comp f, g^{\istar}] \\
  &&=&~ [([\eta, g^{\istar}]^{\klstar} \comp f)^{\istar}, [\eta, ([\eta, g^{\istar}]^{\klstar} \comp f)^{\istar}]^{\klstar} \comp g^{\istar}] &\by{fixpoint} \\
  &&=&~ [h^{\istar}, [\eta, h^{\istar}]^{\klstar} \comp g^{\istar}] \\
  &&=&~ [\eta, h^{\istar}]^{\klstar} \comp [\eta \comp \inr, g^{\istar}],
  \end{flalign*}
  i.e.~equals the right-hand side of the Beki\'{c} identity.

  For the opposite direction, we need to show that the Beki\'{c}
  identity implies \textit{codiagonal}. So let
  $k : X \to T((Y+X) + X)$. By the Beki\'{c} identity,
  \begin{align*}
  ((T\alpha) \comp [k,k])^{\istar} =&~ [\eta, ([\eta, k^{\istar}]^{\klstar} \comp
  k)^{\istar}]^{\klstar} \comp [\eta \comp \inr, k^{\istar}] \\
  =&~ [\eta, k^{\istar\istar}]^{\klstar} \comp [\eta \comp \inr, k^{\istar}].
\intertext{Thus, $((T\alpha) \comp [k, k])^{\istar} \comp \inl = ((T\alpha) \comp [k,
  k])^{\istar} \comp \inr = k^{\istar\istar}$. On the other hand, by
  \textit{uniformity},}
  ((T\alpha) \comp [k, k])^{\istar} =&~ (T[\id, \inr] \comp k)^{\istar} \comp [\id, \id]
\intertext{and therefore}
  k^{\istar\istar} =&~ ((T\alpha) \comp [k, k])^{\istar} \comp \inr =
  (T[\id, \inr] \comp k)^{\istar}
\end{align*}
  as required.
\end{proof}
\noindent Given a complete Elgot monad $\BBT$, we can parametrize the
iteration operator $\argument^\istar$ with an additional argument to
be carried over the recursion loop, i.e.\ we derive an operator
$\argument^\iistar$ sending $f:Z\times X\to T(Y+X)$ to
$f^\iistar:Z\times X\to TY$ by
\begin{equation}\label{eq:iter_par}%
f^\iistar=\bigl(T(\snd+\id)\comp (T\dist)\comp\tau_{Z,Y+X}\comp\brks{\fst,f}\bigr)^\istar.
\end{equation}
We call the derived operator $\argument^\iistar$ \emph{strong iteration}.

The key examples of complete Elgot monads are, one the one hand,
so-called $\omega$-continuous monads
(Definition~\ref{def:omega-cont}), and, on the other hand, extensions
of complete Elgot monads, e.g.~of $\omega$-continuous monads, with
free operations. The latter arise by application of the coinductive
generalized resumption transformer as introduced in
Section~\ref{sec:resumptions}. We proceed to discuss
$\omega$-continuous monads, which are defined as having a suitable
order-enrichment of their Kleisli category. Recall here that a
category $\BD$ is \emph{enriched} over a category $\BV$~\cite{Kelly82}
(in our application, $\BV$ is Cartesian; in general, $\BV$ only needs
to be monoidal) if $\BD$ has hom-objects from $\BV$ in place of
hom-sets, and both composition and selection of identities are
morphisms in $\BV$, with the usual equational laws of categories
expressed as commuting diagrams in $\BV$.
\begin{defi}\textbf{($\omega$-continuous monad)}\label{def:omega-cont}
  An \emph{$\omega$-continuous} monad consists of a monad $\BBT$ and
  an enrichment of the Kleisli category $\BC_{\BBT}$ of $\BBT$ over
  the category $\Cppo$ of $\omega$-complete partial orders with bottom and
  (nonstrict) continuous maps, satisfying the following conditions:
\begin{itemize}
 \item strength is $\omega$-continuous: $\tau \comp (\id\times\bigjoin_i f_i)=\bigjoin_i(\tau \comp (\id\times f_i))$;
 \item copairing in $\BC_{\BBT}$ is $\omega$-continuous in both arguments: $[\bigjoin_i f_i,\bigjoin_i g_i]= \bigjoin_i [f_i,g_i]$;
 \item bottom elements are preserved by strength and by postcomposition in $\BC_{\BBT}$: $\tau \comp {(\id\times\bot)}=\bot$, $f^\klstar\comp\bot=\bot$.
\end{itemize}
\end{defi}
\begin{exa}
  Many of the standard computational monads on $\Set$~\cite{Moggi91}
  are $\omega$-continuous, including nontermination ($TX=X+1$),
  nondeterminism ($TX=\PSet(X)$), and the nondeterministic state monad
  ($TX=\PSet(X\times S)^S$ for a set $S$ of states). On $\Cppo$, lifting
  ($TX=X_\bot$) and the various power domain monads are
  $\omega$-continuous.
\end{exa}

\begin{rem}\label{rem:cpo_cat}
  As observed by Kock~\cite{Kock72}, monad strength is equivalent to
  enrichment over the base category. One consequence of this
  fundamental fact is that if $\BC$ is enriched over the category
  $\Cpo$ of bottomless $\omega$-complete partial orders and
  $\omega$-continuous maps (i.e.\ $\BC$ is an $\BO$-category in the
  sense of Wand~\cite{Wand79} and of Smyth and
  Plotkin~\cite{SmythPlotkin77}), with the bi-Cartesian closed
  structure enriched in the obvious sense, then $\BC_{\BBT}$ is also
  enriched over $\Cpo$, since~$T$, being a strong functor, is an
  $\Cpo$-functor (aka \emph{locally continuous}
  functor~\cite{SmythPlotkin77}). Then $\BBT$ is $\omega$-continuous
  in the sense of Definition~\ref{def:omega-cont} iff each
  $\Hom(X,TY)$ has a bottom element preserved by strength and
  postcomposition in $\BC_{\BBT}$. This allows for incorporating
  numerous domain-theoretic examples by taking $\BC$ to be a suitable
  category of predomains, and $\BBT$, in the simplest case, the
  \emph{lifting monad} $TX=X_\bot$.
\end{rem}
\noindent
If $\BBT$ is an $\omega$-continuous monad, then the endomap
\begin{equation}\label{eq:lfp}
  h\mapsto [\eta,h]^\klstar \comp f
\end{equation}
on the hom-set $\Hom_{\BC}(A, TB)$ is continuous because copairing and
Kleisli composition in $T$ are continuous, and hence has a least
fixpoint by Kleene's fixpoint theorem. We can define an iteration
operator by taking $f^{\istar}$ to be this fixpoint; in other words,
$f^\istar$ is defined to be the least solution of the fixpoint
law as per Definition~\ref{defn:elgot}. This yields
\begin{thm}\label{thm:co_elg}
  On every $\omega$-continuous monad, defining iteration by taking
  least fixpoints determines a complete Elgot monad structure.
\end{thm}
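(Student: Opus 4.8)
The plan is to verify each of the Elgot monad axioms from Definition~\ref{defn:elgot} for the iteration operator defined via least fixpoints on an $\omega$-continuous monad $\BBT$. Since $f^\istar$ is by construction the least fixpoint of the continuous endomap $h\mapsto[\eta,h]^\klstar\comp f$ on $\Hom_\BC(X,TY)$, the \emph{unfolding} axiom holds immediately, and by the Kleene fixpoint theorem $f^\istar=\bigjoin_n f^{(n)}$ where $f^{(0)}=\bot$ and $f^{(n+1)}=[\eta,f^{(n)}]^\klstar\comp f$. The remaining axioms will be established by combining least-fixpoint reasoning with the continuity conditions from Definition~\ref{def:omega-cont}: copairing and Kleisli composition are $\omega$-continuous, and $\bot$ is preserved by $(\argument)^\klstar\comp\bot$ and by strength.

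The core technique for \emph{naturality}, \emph{dinaturality} and \emph{codiagonal} is the standard one for least fixpoints: to prove an equation $s=t$ between composites involving $\argument^\istar$, show that each side is the least fixpoint of one and the same continuous endomap, or that each side satisfies the defining fixpoint equation of the other and invoke minimality in both directions. Concretely, for \emph{naturality} one checks that $g^\klstar\comp f^\istar$ satisfies the unfolding equation for $h:=[T\inl\comp g,\eta\comp\inr]^\klstar\comp f$, using continuity of $g^\klstar$ to push it through the join, and conversely that $([T\inl\comp g,\eta\comp\inr]^\klstar\comp f)^\istar$, written as a join of finite approximants, is below $g^\klstar\comp f^\istar$; an induction on the approximant index closes the loop. \emph{Codiagonal} is handled similarly, relating the approximants of $(T[\id,\inr]\comp g)^\istar$ and of $(g^\istar)^\istar$ — here one must be careful that the inner and outer iterations each unfold to joins, so the combined object is a join over a pair of indices that can be collapsed to a diagonal. \emph{Dinaturality} is the most calculation-heavy: one shows both $([\eta\comp\inl,h]^\klstar\comp g)^\istar$ and $[\eta,([\eta\comp\inl,g]^\klstar\comp h)^\istar]^\klstar\comp g$ arise as (a component of) the least fixpoint of a single endomap obtained by ``pairing'' the two equations $g$ and $h$ over $X+Z$, again exploiting continuity of copairing to interleave approximants. \emph{Uniformity} is the cleanest: if $f\comp h=T(\id+h)\comp g$, then an easy induction shows $f^{(n)}\comp h=g^{(n)}$ for all $n$ (the base case is $\bot\comp h=\bot$, which holds since $h$ is a plain $\BC$-morphism precomposed, and the step uses the hypothesis plus functoriality), whence $f^\istar\comp h=\bigjoin_n(f^{(n)}\comp h)=\bigjoin_n g^{(n)}=g^\istar$ by continuity of precomposition. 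Finally, \emph{compatibility with strength} follows the same pattern: one shows $\tau\comp(\id\times f^{(n)})$ equals the $n$-th approximant of $(T\dist\comp\tau\comp(\id\times f))^\istar$ by induction, using the two strength conditions of Definition~\ref{def:omega-cont} ($\tau$ preserves $\bot$ and joins) together with the monad strength laws $(\text{\sc str}_3)$, $(\text{\sc str}_4)$ and distributivity, then pass to the join.

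The main obstacle I expect is \emph{dinaturality}, both because it is the axiom whose left- and right-hand sides are least syntactically parallel and because correctly setting up the combined ``interleaved'' equation on $X+Z$ and matching up its approximants with the nested iterations on the two sides requires care with the coproduct injections and the copairing laws; the $\omega$-continuity of copairing in \emph{both} arguments is exactly what makes the approximant bookkeeping go through. A secondary subtlety is that \emph{naturality} and \emph{codiagonal} each involve postcomposing or nesting iterations, so one must confirm that the relevant endomaps are genuinely continuous (which they are, as composites of continuous maps) before invoking Kleene — this is routine but should be stated. None of the steps require new hypotheses beyond those packaged into the definition of $\omega$-continuous monad, which is the point of the theorem.
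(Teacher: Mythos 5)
Your proposal is correct in substance and works with the same basic toolkit as the paper's proof — Kleene approximants, fixpoint induction, and minimality of least fixpoints — but several of the per-axiom arguments take a different route. The paper treats naturality and uniformity via the abstract uniformity rule for least fixpoints of continuous functionals ($UF=GU$ and $U(\bot)=\bot$ imply $U(\mu F)=\mu G$), which is just a packaged form of your approximant inductions; indeed for naturality your induction gives $g^\klstar\comp f^{(n)}=h^{(n)}$ outright (using $g^\klstar\comp\bot=\bot$), so the two-sided phrasing in your sketch is redundant — as literally written, both halves establish the same inequality $h^\istar\sqsubseteq g^\klstar\comp f^\istar$. For dinaturality the paper avoids the combined system on $X+Z$ altogether: it shows that $[\eta,t^\istar]^\klstar\comp g$ satisfies the unfolding equation of $s^\istar$ (hence dominates it) and then gets the converse from the symmetric inequality $t^\istar\sqsubseteq[\eta,s^\istar]^\klstar\comp h$ by monotonicity — a few lines instead of the Beki\'c-style bookkeeping you anticipate; your interleaved-system plan does work (it essentially amounts to proving the Beki\'c identity for least fixpoints, after which one could invoke Proposition~\ref{prop:bekic-from-elgot}), but it is the heavier option. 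For the codiagonal and for compatibility with strength, the paper proves one inequality by exhibiting the candidate as a fixpoint of the defining functional and the other by a (nested) fixpoint induction, rather than your diagonal collapse of a doubly indexed join; the latter is viable but additionally needs continuity of the outer functional in the parameter $g^\istar$, which holds and should be stated. Finally, your uniformity base case $\bot\comp h=\bot$ for a plain morphism $h$ is exactly what the paper also uses, so that is no gap relative to the paper's own argument.
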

\noindent This result is to be expected in the light of analogous
facts known for Bloom and \'{E}sik's \emph{$\omega$-continuous
  theories}~\cite[Theorem~8.2.15, Exercise~8.2.17]{BloomEsik93}.
\begin{proof}
Let $\BBT$ be an $\omega$-continuous monad, and let $f^\istar$ be the least fixpoint of~\eqref{eq:lfp}.

Let us verify the axioms of complete Elgot monads one by one. To that end we employ the following \emph{uniformity rule} for least fixpoints of continuous functionals~\cite{SimpsonPlotkin00}:
\begin{align}\label{eq:unif}
\infer{U(\mu F) = \mu G}{UF = GU~~&~~U(\bot) = \bot}
\end{align}
Moreover, in several places below we use \emph{fixpoint induction} to show
that $f^\istar\sqsubseteq g$ for given $f:A\to T(B+A)$ and
$g:A\to TB$: Since $f^\istar$ is a supremum of the chain
$(F^i(\bot))_{i\in\Nat}$ where $F: (A \to TB) \to (A\to TB)$ is the
functional defined by
\begin{equation*}
  F(h) = [\eta,h]^\klstar f^\istar,
\end{equation*}
$f^\istar\sqsubseteq g$ follows as soon as we prove
$F^i(\bot)\sqsubseteq g$ for all $i\in\Nat$, a claim that we typically
prove by induction on $i$. The induction base $i=0$ is always trivial,
so we consistently do only the inductive step. More generally, we can
apply the same principle to conclude $\alpha(f^\istar)\sqsubseteq r$,
for given $r:C\to TD$ and a function $\alpha:(A\to TB)\to (C\to TD)$,
from $\alpha(F^i(\bot))\sqsubseteq r$ for all $i$, provided that
$\alpha$ is $\omega$-continuous, a condition that will always be
immediate from our assumptions. In the more general case, we need to
pay attention to the base case, typically be ensuring that $\alpha$
preserves $\bot$.
\begin{citemize}
  \item \emph{Fixpoint.} This holds by definition.
  \item \emph{Naturality.} In~\eqref{eq:unif} take $F(u)=[\eta,u]^\klstar f$, $G(u)=[\eta,u]^\klstar [(T\inl) g,\eta\inr]^\klstar f$ and $U(u)=g^\klstar u$. By definition, $U(\bot)=\bot$, $\mu F=f^\istar$, $\mu G=([(T\inl) g,\eta\inr]^\klstar\comp f)^\istar$. Then we have
\begin{align*}
 U(F(u))=g^\klstar [\eta,u]^\klstar f = [\eta,g^\klstar u]^\klstar [(T\inl) g,\eta\inr]^\klstar f = G(U(u)).
\end{align*}
Therefore, by~\eqref{eq:unif}, $g^{\klstar} f^{\istar}=U(\mu F)=\mu G = ([(T \inl) g, \eta\inr]^{\klstar} f)^{\istar}$.
 \item \emph{Codiagonal.} Recall that we are claiming that
  $$(T[\id, \inr]
  g)^{\istar} = (g^{\istar})^{\istar}$$ with $g:A\to T((B+A)+A)$. We
  first show that $g^{\istar\istar}$ is a fixpoint of the functional
  defining the left-hand side as a least fixpoint, thus proving
  $\sqsubseteq$. That is, we have to show that
  \begin{equation}
  \label{eq:codiag-base-fp}
  g^{\istar\istar} = [\eta,g^{\istar\istar}]^\klstar T[\id, \inr]g.
  \end{equation}
  We proceed as follows:
  \begin{flalign*}
  &&g^{\istar\istar} & = [\eta,g^{\istar\istar}]^\klstar g^\istar &&\by{fixpoint}\\
  &&& = [\eta,g^{\istar\istar}]^\klstar[\eta,g^\istar]^\klstar g &&\by{fixpoint}\\
  &&& = [[\eta,g^{\istar\istar}],[\eta,g^{\istar\istar}]^\klstar g^\istar]^\klstar g\\
  &&& = [[\eta,g^{\istar\istar}],g^{\istar\istar}]^\klstar g &&\by{fixpoint}\\
  &&& = [[\eta,g^{\istar\istar}], [\eta,g^{\istar\istar}]\inr]^\klstar g \\
  &&& = [\eta, g^{\istar\istar}]^\klstar\,T[\id, \inr]g.
  \end{flalign*}
  For the converse inequality, we use fixpoint induction.  So let
  $f\sqsubseteq (T[\id, \inr] \comp\, g)^{\istar}$. We have to show
  \begin{equation*}
  [\eta,f]^\klstar g^\istar\sqsubseteq(T[\id, \inr]
  g)^{\istar}.
  \end{equation*}
  We establish this by a second fixpoint induction on the occurrence
  of $g^\istar$ on the left hand side. For the base case, just
  recall that Kleisli composition from the left preserves $\bot$. For
  the inductive step, assume that
  $[\eta,f]^\klstar h \sqsubseteq(T[\id, \inr] g)^{\istar}$, with
  $h:A\to T(B+A)$; we have to show that
  \begin{equation*}
  [\eta,f]^\klstar [\eta,h]^\klstar g\sqsubseteq(T[\id, \inr]
  g)^{\istar}.
  \end{equation*}
  We calculate as follows:
  \begin{flalign*}
  &&[\eta,f]^\klstar [\eta,h]^\klstar g =&\; [[\eta,f],[\eta,f]^\klstar h]^\klstar g\\
  && \sqsubseteq&\;[[\eta,f],(T[\id, \inr] g)^{\istar}]^\klstar g & \by{inner IH}\\
  && \sqsubseteq&\; [[\eta,(T[\id, \inr] g)^{\istar}],(T[\id, \inr] g)^{\istar}]^\klstar g& \by{outer IH}\\
  && =&\; [\eta,(T[\id, \inr] g)^{\istar}]T[\id, \inr] g\\
  && =&\; (T[\id, \inr] g)^{\istar}. & \by{fixpoint}
  \end{flalign*}

  \item \emph{Uniformity.} Let $f : A \to T(X+A)$, $g : B \to T(X+B)$, $h:B\to A$ and assume that $f\comp h = T(\id+h)\comp g$.
Let us define $G(u) = [\eta, u]^{\klstar} g$, $F(u) = [\eta,
	u]^{\klstar} f$, and $U(u) = u\comp h$. Then $U(\bot) = \bot$ and
	\begin{flalign*}
		&&UF(u) =&\, [\eta, u]^{\klstar}  f\comp  h &&\\
		&&=&\, [\eta, u]^{\klstar}  T(\id+h)\comp g &&\\
		&&=&\, [\eta, u\comp h]^{\klstar} \comp g &&\\
 		&&=&\,GU(u).
	\end{flalign*}
	Therefore by~\eqref{eq:unif}, $f^{\istar}\comp h=U(\mu F) = \mu G = g^{\istar}$.
\end{citemize}
To prove compatibility of strength and iteration, we proceed by first showing
\[
  ((T\dist) \tau (\id \times f))^\istar \sqsubseteq \tau (\id \times f^{\istar}).
\]
First observe that, for any $g: A \to TB$,
\begin{align}\label{eq:str_dist}
\vcenter{\vbox{
\xymatrix@R25pt@C58pt@M6pt{
C\times(B+A)
	\ar[d]_-{\id \times [\eta, g]}
	\ar@<.8ex>[r]^-{\dist}
&
C \times B + C \times A
	\ar@<.5ex>[l]^-{~\dist^{\mone}}
	\ar[d]^-{[\eta, \tau (\id \times g)]}
\\
C \times TB \ar[r]^-{\tau}
&
T(C \times B).
}
}}%
\end{align}
This is easily checked componentwise starting from $C \times B + C \times A$ and using the fact that by definition
$\dist^{\mone} = [\id \times \inl, \id \times \inr]$.
Then we have
\begin{flalign*}
	&&&~\tau (\id \times f^{\istar}) \\
	&&=&~\tau (\id \times [\eta, f^{\istar}]^{\klstar} f) \\
	&&=&~\tau (\id \times [\eta, f^{\istar}]^{\klstar}) (\id \times f) \\
	&&=&~(\tau (\id \times [\eta, f^{\istar}]))^{\klstar} \tau (\id \times f)         &\by{\sc str$_4$} \\
	&&=&~([\eta, \tau (\id \times f^{\istar})] \dist)^{\klstar} \tau (\id \times f)   &\by{\ref{eq:str_dist}} \\
	&&=&~[\eta, \tau (\id \times f^{\istar})]^{\klstar} (T\dist) \tau (\id \times f).
\end{flalign*}
Therefore,  $\tau (\id \times f^{\istar})$ is a fixed point of the functional defining $((T\dist) \tau (\id \times f))^{\istar}$ as a least fixpoint and the inequality above holds.
The converse inequality,
\[
  \tau (\id \times f^{\istar}) \sqsubseteq ((T\dist \tau) (\id \times f))^\istar,
\]
is shown by fixpoint induction. For the base case, we calculate the
left hand side:
\[
\tau(\id\times\bot)=\tau \brks{\fst, \bot \snd} = \tau \brks{\fst, \bot \fst} = \tau \brks{\id, \bot}
  \fst = \bot \fst = \bot.
\]
\noindent
For the inductive step, assume that
$\tau (\id \times g) \sqsubseteq (T\dist \tau (\id \times f))^{\istar}.$ We can then calculate
\begin{flalign*}
	&&&~\tau (\id \times [\eta, g]^{\klstar} f) \\
	&&=&~\tau (\id \times [\eta, g]^{\klstar}) (\id \times f) \\
	&&=&~(\tau (\id \times [\eta, g]))^{\klstar} \tau (\id \times f) &\by{\sc str$_4$} \\
	&&=&~([\eta, \tau (\id \times g)] \dist)^{\klstar} \tau (\id \times f) &\by{\ref{eq:str_dist}} \\
	&&\sqsubseteq&~([\eta, (T\dist \tau (\id \times f))^{\istar}] \dist)^{\klstar} \tau (\id \times f) \\
	&&=&~[\eta, (T\dist \tau (\id \times f))^{\istar}]^{\klstar} T\dist \tau (\id \times f) \\
	&&=&~(T\dist \tau (\id \times f))^{\istar}
\end{flalign*}
which completes the proof.
\end{proof}
\noindent Every complete Elgot monad $\BBT$ can express
\emph{unproductive divergence} as the generic effect
\begin{displaymath}
  \bot_{X,Y}=\bigl(X\xto{\;\eta\inr\;} T(Y+X) \bigr)^\istar:X\to TY.
\end{displaymath}
This computation never produces any effects, i.e.\ behaves like a
deadlock. If $\BBT$ is $\omega$-continuous, then unproductive
divergence coincides with the least element of $\Hom(X,TY)$, for which
reason we use the same symbol $\bot_X$, but in general, there is no
ordering in which unproductive divergence could be a least element. 
\begin{lem}\label{lem:divergence}
  Unproductive divergence is \emph{constant}, i.e.~for $f:Z\to X$, we
  have $\bot_{X,Y}f=\bot_{Z,Y}$, and \emph{coconstant}, i.e.~for
  $h:Y\to TW$ we have $h^\klstar\bot_{X,Y}=\bot_{X,W}$.
\end{lem}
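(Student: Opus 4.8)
The plan is to obtain both assertions directly from the axioms in Definition~\ref{defn:elgot}, without recourse to any order structure; indeed, the lemma should hold for arbitrary complete Elgot monads, where no least-element argument is available.

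For \emph{constancy}, I would invoke \emph{uniformity}. Given $f:Z\to X$, naturality of the unit gives
\[
  (\eta\comp\inr)\comp f=\eta\comp(\id+f)\comp\inr
  =T(\id+f)\comp(\eta\comp\inr),
\]
which is precisely the premise of the uniformity axiom with its ``$g$'' taken to be $\eta\comp\inr:Z\to T(Y+Z)$, its ``$f$'' taken to be $\eta\comp\inr:X\to T(Y+X)$, and its ``$h$'' taken to be $f$. The conclusion of the axiom then reads $(\eta\comp\inr)^\istar\comp f=(\eta\comp\inr)^\istar$, i.e.\ $\bot_{X,Y}\comp f=\bot_{Z,Y}$.

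For \emph{coconstancy}, I would invoke the \emph{naturality} axiom of iteration with its ``$g$'' taken to be $h:Y\to TW$ and the iterated morphism taken to be $\eta\comp\inr:X\to T(Y+X)$:
\[
  h^\klstar\comp\bot_{X,Y}
  =h^\klstar\comp(\eta\comp\inr)^\istar
  =\bigl([T\inl\comp h,\eta\comp\inr]^\klstar\comp\eta\comp\inr\bigr)^\istar.
\]
Using the monad law $k^\klstar\comp\eta=k$ and then $[T\inl\comp h,\eta\comp\inr]\comp\inr=\eta\comp\inr:X\to T(W+X)$, the argument of the outer $\istar$ collapses to $\eta\comp\inr:X\to T(W+X)$, so the right-hand side is $\bot_{X,W}$, as desired.

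I do not anticipate any real obstacle here: both identities are essentially one-line instances of the uniformity and naturality axioms together with the standard monad laws and naturality of the unit, and the only point needing attention is the bookkeeping of coproduct injections and of the source/target objects so that the axioms apply verbatim.
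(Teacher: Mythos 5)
Your proof is correct and follows essentially the same route as the paper: constancy via the uniformity axiom after checking the premise $\eta\inr\comp f = T(\id+f)\comp\eta\inr$ by naturality of $\eta$, and coconstancy via the naturality axiom for $\argument^\istar$ together with the monad law $k^\klstar\comp\eta=k$ collapsing the iterated morphism to $\eta\inr$. No gaps.
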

\begin{proof}
\emph{Constancy:} We have to show
$(\eta_{Y+X}\inr)^\istar f=(\eta_{Y+Z}\inr)^\istar$. By uniformity,
it suffices to show that
$\eta_{Y+X}\inr f = T(\id + f)\comp \eta_{Y+Z}\inr$. We
calculate the right-hand side:
\begin{flalign*}
  &&T(\id + f)\eta_{Y+Z}\inr &\,= \eta_{Y+X}(\id+f)\inr & \by{naturality of $\eta$}\\
  &&&\, = \eta_{Y+X}\inr f.       
\end{flalign*} 
\emph{Coconstancy:} We have 
\begin{flalign*} 
&&h^\klstar\bot_{X,Y} &\, = h^\klstar(\eta \inr_{Y+X})^\istar \\
&&        &\, = ([T\inl h,\eta_{W+X}\inr]^\klstar \eta_{Y+X}\inr)^\istar &\by{naturality of $\istar$}\\
&&        &\, = (\eta_{W+X}\inr)^\istar = \bot_{X,W}.&\qedhere
\end{flalign*}
\noqed\end{proof}

\noindent The following lemma shows that there can be only one
unproductive divergence:
\begin{lem}\label{lem:one-divergence}
  Let $e:X\to T(Y+X)$ have the form $e=\eta\inr u$ for
  $u:X\to X$. Then $e^\istar=\bot_{X,Y}$.
\end{lem}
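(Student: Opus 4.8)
The plan is to obtain the identity from \emph{uniformity} in Definition~\ref{defn:elgot}, combined with \emph{constancy} of unproductive divergence (Lemma~\ref{lem:divergence}); the key idea is to route the argument through the terminal object $1$ so that the arbitrary endomap $u:X\to X$ gets absorbed. Note first that the naive attempt — applying uniformity to the pair $\eta\comp\inr$ and $e$ along $u$ — does not work, since it would require $u$ to be idempotent (or, for a variant, a morphism $h:X\to X$ with $h\comp u=h$, which need not exist for general $u$). The remedy is that the unique map $!_X:X\to 1$ always satisfies $!_X\comp u=!_X$.

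Concretely, I would instantiate \emph{uniformity} with $f=\eta\comp\inr_{Y+1}:1\to T(Y+1)$, $g=e:X\to T(Y+X)$, and $h=\,!_X:X\to 1$. By definition, $f^\istar=\bot_{1,Y}$. The uniformity side condition $f\comp h=T(\id+h)\comp g$ holds because, using naturality of $\eta$ and $!_X\comp u=\,!_X$,
\[
  T(\id_Y+\,!_X)\comp e = T(\id_Y+\,!_X)\comp\eta\comp\inr\comp u = \eta\comp\inr_{Y+1}\comp\,!_X\comp u = \eta\comp\inr_{Y+1}\comp\,!_X = f\comp h .
\]
Uniformity then yields $\bot_{1,Y}\comp\,!_X = f^\istar\comp h = g^\istar = e^\istar$.

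Finally, applying \emph{constancy} (Lemma~\ref{lem:divergence}) to $!_X:X\to 1$ gives $\bot_{1,Y}\comp\,!_X=\bot_{X,Y}$, so $e^\istar=\bot_{X,Y}$, as claimed. The only nontrivial step is the choice of $h$; once one routes through $1$, everything else is a routine check. (As a sanity check, \emph{unfolding} independently gives $e^\istar=e^\istar\comp u$, but this alone does not determine $e^\istar$, which is why the uniformity/terminal-object argument is needed.)
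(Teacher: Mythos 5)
Your proof is correct and takes essentially the same route as the paper's: the paper likewise applies uniformity along $!_X:X\to 1$ to the morphism $\eta\comp\inr:1\to T(Y+1)$ (checking the side condition by naturality of $\eta$) and uses constancy to identify $\bot_{1,Y}\comp {!_X}$ with $\bot_{X,Y}$. The only difference is the cosmetic one of when constancy is invoked.
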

\begin{proof}
  By constancy, $\bot_{X,Y}=\bot_{1,Y}\bang_X$, so we are to show
  $(\eta_{Y+1}\inr)^\istar\bang_X=e^\istar$. By uniformity, it suffices to
  show $\eta_{Y+1}\inr\bang_X=T(\id+\bang_X)\eta\inr u$, which is immediate
  by naturality of $\eta$.
\end{proof}
\section{The Coinductive Generalized Resumption Transformer}
\label{sec:resumptions}
\noindent
We proceed to recall the definition of the coinductive generalized
resumption transformer~\cite{PirogGibbons13}. One of our main results
will be stability of the class of complete Elgot monads under this
construction (Theorem~\ref{thm:cr_elg}). In the remainder of the
paper, we work with the following set of standing assumptions.

\begin{assn}\label{asm:fp}
  We fix 
  \begin{itemize}
  \item a distributive category $\BC$;
  \item a strong functor $\SigF:\BC\to\BC$ with strength $\rho$; 
  \item a strong monad $\BBT$ on $\BC$ with strength $\tau$;
  \end{itemize}
  and assume that \emph{the final coalgebra
    $\nu\gamma.\,T(X+\SigF\gamma)$ of $T(X+\SigF)$ exists for all
    $X\in|\BC|$}.
\end{assn}
\noindent As indicated in the introduction, we think of $\Sigma$ as
specifying a signature of communication actions, and of $T$ as
encapsulating a notion of side-effect. 

We can then define a functor $\TF$ whose action on objects is given by
\begin{equation*}
\TF X=\nu\gamma.\,T(X+\SigF\gamma). %
\end{equation*}
Intuitively, $\TF X$ is a type of possibly non-terminating computation
trees, in which each step triggers a computational effect specified by
$T$, and then either terminates with a result in $X$ or branches
according to an operation from the signature represented by $\Sigma$,
with arguments being again computation trees.

\begin{rem}
  There are two broad classes of models satisfying
  Assumption~\ref{asm:fp}:
  \begin{itemize}
  \item $\BC$ is a locally presentable category and $\BBT$ is ranked; or
  \item $\BC$ is $\Cpo$-enriched and has colimits of $\omega$-chains,
  and $T$ is $\omega$-continuous (Remark~\ref{rem:cpo_cat}).
  \end{itemize}
  Satisfaction of Assumption~\ref{asm:fp} in the first case follows
  from the fact that categories of coalgebras for accessible functors
  over locally presentable categories are again locally presentable,
  in particular complete~\cite[Exercise~2.j, Chapter
  2]{AdamekRosicky94}.  This covers most of the interesting choices of
  base categories, such as $\Set$, $\Cpo$, various categories of
  predomains, and presheaf categories, as well as almost all
  computationally relevant monads~\cite{Moggi91,PlotkinPower02}.
  The fact that Assumption~\ref{asm:fp} is satisfied in the second
  case follows from Barr's work on algebraically compact
  functors~\cite[Theorem 5.4]{Barr92}, which also implies that the
  greatest fixed points of interest coincide with least fixed
  points. One example covered by the second clause but not by the
  first one is the continuation monad $TX=(X\to R)\to R$ on $\Cpo$,
  provided that~$R$ has a least element.
\end{rem}
Let
\begin{equation*}
\out_X:\TF X\to T(X+\SigF\TF X)
\end{equation*}
be the final coalgebra structure, and let $\coit(g):Y\to\TF X$ denote
the final morphism induced by a coalgebra $g:Y\to T(X+\SigF Y)$:
\begin{equation*}
  \xymatrix@R25pt@M6pt@C=8em{Y\ar[d]_g \ar[r]^{\coit(g)} & \TF X \ar[d]^{\out_X}\\
  T(X+\SigF Y)\ar[r]^{T(X+\SigF\coit(g))} & T(X+\SigF\TF X).}
\end{equation*}
Intuitively, $\coit(g)$ encapsulates (in $\TF X$) a computation tree
that begins by executing~$g$, terminates in a leaf of type $X$ if $g$
does, and otherwise (co-)recursively continues to execute~$g$, forming
a new tree node for each recursive call. By Lambek's lemma, $\out_X$
is an isomorphism. As we see below, it is also natural in $X$. Thus,
$T$ maps into $\TF$ via
\begin{equation}\label{eq:ext}
\ext=\bigl(T~\xto{~~T\inl~~} T(\Id + \SigF\TF)\xto{~~\tuo~~} \TF\bigr).
\end{equation}
We record explicitly that $\TF$ is a strong monad:
\begin{thm}\label{lem:kl_dec} Given a monad $\BBT$, $\TF$ is the functorial part of a monad $\BBTF$, with the strong monad structure denoted $\tau^\nu$, $\eta^\nu$, and $(-)^\kklstar$ (for Kleisli star) and characterized by the following properties.
\begin{enumerate}
\item The unit $\eta^{\nu}:X\to \TF X$ is defined by
  $\out\comp\,\eta^\nu=\eta\comp\inl$ (i.e.\ $\eta^\nu=\tuo\comp\,\eta\comp\inl$).
 \item Given $f:X\to\TF Y$, the Kleisli lifting \mbox{$f^\kklstar:\TF X\to\TF Y$} is the unique solution of the
	 equation
\begin{align}\label{eq:kl_def}
\out\comp f^\kklstar=[\out \comp f,\eta\comp\inr\comp \SigF f^\kklstar]^\klstar\comp\out.
\end{align}
 \item Given $f:X\to\TF Y$, let $g=[f,\eta^\nu]:X+Y\to\TF Y$; then $g^\kklstar$ 
is a final morphism from $(\TF (X+Y), [T(\id+\SigF\TF\inr)\out g,\eta\inr]^\klstar\out:\TF (X+Y)\to T(Y+{\Sigma \TF (X+Y)}))$ to
$(\TF Y,\out_Y)$,	i.e.\
\begin{align}\label{eq:cor_kl}
g^\kklstar=\coit\bigl([T(\id+\SigF\TF\inr)\out g,\eta\inr]^\klstar\out\bigr).
\end{align}
 \item The strength $\tau^\nu:X\times\TF Y\to\TF(X\times Y)$ is the unique solution of
\begin{align}\label{eq:str_def}
\out\comp\, \tau^\nu=T(\id+\SigF\tau^\nu)\comp (T\delta)\comp\tau\comp(\id\times\out)
\end{align}
with $\delta:X\times (Y+\SigF Z)\to X\times Y+\SigF(X\times Z)$ being the transformation $\delta=(\id+\rho)\comp\dist$ where $\rho_{X,Y}:X\times\Sigma Y\to\Sigma(X\times Y)$ is the strength of $\Sigma$.
\end{enumerate}
\end{thm}
\noindent
This justifies calling $\BBTF$ the \emph{coinductive generalized
  resumption monad} (over $\BBT$). The proof of Theorem~\ref{lem:kl_dec}
is facilitated by the fact that $T(X+\SigF)$ can be shown to
be a \emph{parametrized monad}, which implies that $\BBTF$ is a
monad~\cite[Theorems~3.7 and~3.9]{Uustalu03}. Alternatively, the fact
that $\BBTF$ is a monad can be read off directly from the results
of~\cite{PirogGibbons13}. What is new here is that we show that
$\BBTF$ is, in fact, strong, and hence supports an interpretation of
Moggi's computational metalanguage~\cite{Moggi91}. This amounts
to showing that the strength defined in the last item satisfies the
requisite laws in p.~\pageref{page:str}. One preliminary fact of potentially
independent interest used in the proof of these laws is
\begin{lem}\label{lem:final-functor}
  The object assignment $X\mapsto\TF X$ extends to a functor $\TF$,
  and $\out:\TF\to T(\Id+\SigF\TF)$ then becomes a natural
  transformation. For any functor $G:\BB\to\BC$,
  $\out_G:\TF G\to T(G+\Sigma\TF G)$ is a final
  $T(G+\Sigma(-))$-coalgebra in $[\BB,\BC]$.
\end{lem}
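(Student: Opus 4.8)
The plan is to bootstrap everything from the parametrized universal property of the family of final coalgebras $\out_X:\TF X\to T(X+\SigF\TF X)$. First I would show that $\TF$ is a functor: given $f:X\to Y$, the composite $T(f+\id)\comp\out_X:\TF X\to T(Y+\SigF\TF X)$ is a $T(Y+\SigF(-))$-coalgebra, so it induces a unique morphism $\TF f=\coit(T(f+\id)\comp\out_X):\TF X\to\TF Y$ making the obvious square commute; functoriality (preservation of identities and composites) follows from uniqueness of coiterative morphisms, since both sides of each equation are coalgebra morphisms into the final coalgebra. Naturality of $\out$ is precisely the commuting square $\out_Y\comp\TF f=T(f+\SigF\TF f)\comp\out_X$ that defines $\TF f$, so it comes for free.

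For the second, more general assertion, fix a functor $G:\BB\to\BC$. I would observe that the pointwise final coalgebras assemble into a coalgebra in the functor category: the object $\TF G:\BB\to\BC$ sends $B\mapsto\TF(GB)$ and, on a morphism $\phi:B\to B'$ in $\BB$, acts as $\TF(G\phi)$, which is natural by the first part; and $\out_G:\TF G\to T(G+\SigF\TF G)$, with components $\out_{GB}$, is a natural transformation, again by naturality of $\out$ established above (instantiated at the components $G\phi$). So $(\TF G,\out_G)$ is a genuine $T(G+\SigF(-))$-coalgebra in $[\BB,\BC]$, where the endofunctor on $[\BB,\BC]$ sends $H\mapsto T(G+\SigF H)$ computed componentwise.

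It remains to check finality. Given any coalgebra $\theta:H\to T(G+\SigF H)$ in $[\BB,\BC]$, each component $\theta_B:HB\to T(GB+\SigF HB)$ is a $T(GB+\SigF(-))$-coalgebra in $\BC$, hence induces a unique $\coit(\theta_B):HB\to\TF(GB)$; I would then verify that the family $(\coit(\theta_B))_B$ is natural in $B$ — this is where one uses uniqueness once more, comparing the two coalgebra morphisms $\TF(G\phi)\comp\coit(\theta_B)$ and $\coit(\theta_{B'})\comp H\phi$ from $HB$ into the final coalgebra $\TF(GB')$, both of which factor $\theta$ appropriately because $\theta$ is natural. Uniqueness of the mediating transformation in $[\BB,\BC]$ then follows componentwise from uniqueness of each $\coit(\theta_B)$. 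I expect the naturality check on $(\coit(\theta_B))_B$ to be the only nonroutine point, and even that is a standard diagram chase: it reduces to the observation that finality in $\BC$ detects naturality levelwise, i.e.\ that a pointwise-final coalgebra in a functor category is final, which is really the crux of the lemma.
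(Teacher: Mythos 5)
Your proposal is correct and, on the substantive second claim, follows essentially the same route as the paper: construct the mediating morphism componentwise by finality in $\BC$ and establish its naturality by exhibiting both composites as coalgebra morphisms out of the coalgebra $T(G\phi+\id)\comp\theta_B$ into the final coalgebra, using naturality of $\theta$. The only (minor) divergence is in the first claim: the paper obtains functoriality from the monad structure of Theorem~\ref{lem:kl_dec} (via $\TF f=(\eta^\nu f)^\kklstar$) and then derives naturality of $\out$ from the Kleisli lifting, whereas you define $\TF f=\coit(T(f+\id)\comp\out)$ directly, which gives functoriality and naturality of $\out$ at once by uniqueness and is a bit more self-contained.
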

\begin{proof}
  Functoriality follows from the fact that, as stated in
  Theorem~\ref{lem:kl_dec} and proved independently from this lemma in
  the proof of the theorem, $\TF$ carries a monad structure. That is,
  $\TF f=(\eta^\nu f)^\kklstar$, so by the description of $\kklstar$
  we have
\begin{align*}
  \out\TF f & = [\out\eta^\nu f,\eta\inr\SigF\TF f]^\klstar\out\\
  & = [\eta \inl f,\eta\inr \SigF\TF f]^\klstar\out \\
  & = T[\inl f,\inr \SigF\TF f]\out \\
  & = T(f+ \SigF\TF f)\out,
\end{align*}
i.e.\ $\out$ is natural.

To show finality, let $\beta:F\to T(G+\SigF F)$ be a natural
transformation. We define the universal arrow $f:F\to \TF G$ componentwise by the equation
\begin{equation*}
  \out f_X = T(\id+\SigF f_X)\beta_X
\end{equation*}
using finality of the components $\out_{GX}:\TF GX\to T(GX+\SigF\TF GX)$. We have to show
that~$f$ is natural (uniqueness is clear). So let $g:X\to Y$; we have
to show $f_Y Fg=(\TF Gg) f_X$. Note that we have a
$T(GY+F)$-coalgebra
\begin{equation*}
  FX\xto{~~\beta_X~~} T(GX+\SigF FX)\xto{~~T(Gg+\id)~~} T(GY+\SigF FX);
\end{equation*}
we show that both $f_Y Fg$ and $(\TF Gg)f_X$ are coalgebra
morphisms into $\TF GY$ for $T(Gg+\id)\beta_X$. On the one hand, we have
\begin{flalign*}
&&  \out f_YFg =&\; T(\id+\SigF f_Y)\beta_Y Fg 	& \by{definition of $f_Y$} \\
&& =&\; T(\id+\SigF f_Y)T(Gg+\SigF Fg)\beta_X 	& \by{naturality of $\beta$}\\
&& =&\; T(Gg+\SigF (f_YFg))\beta_X.
\end{flalign*}
On the other hand,
\begin{flalign*}
&& \out (\TF Gg)f_X =&\; T(Gg+\SigF\TF Gg)\out f_X & \by{naturality of $\out$}\\
&& =&\; T(Gg+\SigF\TF Gg)T(\id+\SigF f_X)\beta_X & \by{definition of $f_X$}\\
&& =&\; T(Gg+\SigF((\TF Gg) f_X))\beta_X.
\end{flalign*}
Using the fact that there is unique morphism from a given coalgebra to the final one,
 we conclude that indeed $f_Y Fg=(\TF Gg) f_X$.
\end{proof}
\begin{proof}[Proof of Theorem~\ref{lem:kl_dec}.]
Since $T(X+\SigF )$ extends to a \emph{parametrized monad}, as shown in~\cite[Theorems~3.7 and~3.9]{Uustalu03},
$\BBTF$ is a monad whose Kleisli lifting is uniquely characterized by~\eqref{eq:kl_def}. What is missing is
to show that $\BBTF$ is a strong monad, as we need here. 
Let $g$ be defined as in clause (3) of the theorem, and let us first show~\eqref{eq:cor_kl}.
By definition, $g^\kklstar$ is the unique morphism making the following diagram commute:
\begin{align*}
\xymatrix@R25pt@C90pt@M6pt{
\TF(X+Y) \ar[d]_-{\out}\ar[r]^-{g^\kklstar}  &  \TF Y\ar[d]^-{\out} \\
T(X+Y+\SigF\TF(X+Y)) \ \ar[r]^-{[\out g,\,\eta\inr \SigF{g^\kklstar}]^\klstar} & T(Y+\SigF\TF Y)
}
\hspace{10ex}
\end{align*}
We then have on the one hand,
\begin{flalign*}
&&\out g^\kklstar=&\;[\out [f,\eta^\nu],\,\eta\inr\SigF{g^\kklstar}]^\klstar\out& \by{definition of $\argument^\kklstar$}\\
&&=&\;[ [\out f,\out\eta^\nu],\eta\inr\SigF{g^\kklstar}]^\klstar\out\\
&&=&\;[ [\out f,\eta\inl],\eta\inr\SigF{g^\kklstar}]^\klstar\out& \by{definition of $\eta^\nu$}
\end{flalign*}
and also  on the other hand,
\begin{flalign*}
&&T(\id+\SigF{g^\kklstar})&\;[T(\id+\SigF\TF\inr)\out g,\eta\inr]^\klstar\out&\\
&&=&\;[ T(\id+\SigF(g^\kklstar\TF\inr)) [\out f,\out \eta^{\nu}],\eta\inr \SigF g^{\kklstar}]^\klstar\out\\
&&=&\;[ [\out f,\eta\inl],\eta\inr \SigF{g^\kklstar}]^\klstar\out,&
\end{flalign*}
i.e.\, indeed $g^\kklstar$ satisfies the characteristic property of the final morphism~\eqref{eq:cor_kl}.

We proceed to prove that $\BBTF$ is strong. We define the strength $\tau^\nu$ as the unique final coalgebra morphism shown in the following diagram:
\begin{align*}
\xymatrix@R25pt@C68pt@M6pt{
X\times\TF Y \ar[d]_-{\tau^\nu }\ar[r]^-{(T\delta)\tau(\id\times\out)}  &  T(X\times Y+\SigF(X\times\TF Y))\ar[d]^-{T(\id+\SigF\tau^\nu)} \\
\TF(X\times Y) \ \ar[r]^-{\out} & T(X\times Y+\SigF\TF(X\times Y))
}
\end{align*}
That is, $\tau^\nu$ is the unique solution of equation $\out \tau^\nu=T(\id+\SigF\tau^\nu) (T\delta)\tau(\id\times\out)$. By Lemma~\ref{lem:final-functor}, $\tau^\nu$ is a composite of natural transformations and hence itself natural. Let us check the axioms of strength from p.~\pageref{page:str}.
\begin{citemize}
  \item {\sc(str$_1$)} The identity $\snd=(\TF\snd)\tau^\nu$ follows from $\TF\brks{\bang,\id}\snd=\tau^\nu$ where $!$ is a suitable terminal morphism $X\to 1$, since obviously $\snd=(\TF\snd)\TF\brks{\bang,\id}\snd$. Since $\tau^\nu$ is uniquely defined by the corresponding characteristic identity~\eqref{eq:str_def}, it suffices to show that $\TF\brks{\bang,\id}\snd$ satisfies the same identity. Indeed,
\begin{flalign*}
&&T(\id +\,& \SigF (\TF\brks{\bang,\id}\snd)) (T\delta)\tau(\id\times\out)\\
&&=~&T(\brks{\bang,\id}\snd+\,\SigF(\TF\brks{\bang,\id}\snd))(T\delta)\tau(\id\times\out)\\
&&=~&T(\brks{\bang,\id}+\,\SigF\TF\brks{\bang,\id})T(\snd+\SigF\snd)(T\delta)\tau(\id\times\out)\\
&&=~&T(\brks{\bang,\id}+\,\SigF\TF\brks{\bang,\id})T(\snd+\snd)(T\dist)\tau(\id\times\out)&\by{{\sc str$_1$} for $\rho$} \\
&&=~&T(\brks{\bang,\id}+\SigF\TF\brks{\bang,\id}) (T\snd)\tau(\id\times\out)&\by{defintion of $\dist$}\\
&&=~&T(\brks{\bang,\id}+\SigF\TF\brks{\bang,\id})\out\snd&\by{{\sc str$_1$} for $\tau$} \\
&&=~&\out(\TF\brks{\bang,\id})\snd.&\by{naturality of $\tau$}
\end{flalign*}
\item {\sc(str$_2$)} In order to prove that $(\TF\assoc)\tau^\nu =
  \tau^\nu (\id\times\tau^\nu)\assoc:(X\times Y)\times\TF Z\to
  \TF((X\times Y)\times Z)$,
  it suffices to show that $(\TF\assoc^{\mone})\tau^\nu (\id\times\tau^\nu)\assoc$
  satisfies the characteristic identity~\eqref{eq:str_def} for $\tau^\nu$, i.e.
  \begin{multline*}
  \out(\TF\assoc^{\mone})\tau^\nu(\id\times\tau^\nu)\assoc
  =T(\id+\SigF((\TF\assoc^{\mone})\tau^\nu(\id\times\tau^\nu)\assoc))(T\delta)\tau(\id\times\out).
  \end{multline*}
  We calculate, transforming the left hand side,
  \begin{flalign*}
  && \out(\TF&\assoc^{\mone})\tau^\nu(\id\times\tau^\nu)\assoc \\
  &&=\;& T(\assoc^{\mone}+\SigF\TF\assoc^{\mone})\out\tau^\nu(\id\times\tau^\nu)\assoc&\by{naturality of $\out$}\\
  &&=\;& T(\assoc^{\mone}+\SigF\TF\assoc^{\mone})\\
  &&&~~T(\id+\SigF\tau^\nu)(T\delta)\tau(\id\times\out)(\id\times\tau^\nu)\assoc &\qquad\by{definition of $\tau^\nu$}\\
  &&=\;& T(\assoc^{\mone}+\SigF((\TF\assoc^{\mone})\tau^\nu))\\
  &&&~~(T\delta)\tau(\id\times T(\id+\SigF\tau^\nu)(T\delta)\tau(\id\times\out))\assoc&\by{definition of $\tau^\nu$}\\
  &&=\;& T(\assoc^{\mone}+\SigF((\TF\assoc^{\mone})\tau^\nu))\\
  &&&~~(T\delta) T(\id\times(\id+\SigF\tau^\nu)\delta)\tau(\id\times\tau(\id\times\out))\assoc&\by{naturality of $\tau$}.
\intertext{
and then continue to transform the last part of the term:}
  && \tau(\id\times\,\tau&(\id\times\out))\assoc\\
  &&=\;& \tau(\id\times\tau)(\id\times(\id\times\out))\assoc\\
  &&=\;& \tau(\id\times\tau)\assoc((\id\times\id)\times\out) &\by{naturality of $\assoc$}\\
  &&=\;& (T\assoc)\tau(\id\times\out) & \by{{\sc str$_2$} for $\tau$} 
  \end{flalign*}
  (contracting a product of identities into an identity in the last
  step).  Summing up, it remains to show that
  \begin{align*}
  &T(\assoc^{\mone}+\SigF((\TF\assoc^{\mone})\tau^\nu))(T\delta) T(\id\times(\id+\SigF\tau^\nu)\delta)(T\assoc)\tau(\id\times\out)\\
  =\; &T(\id+\SigF((\TF\assoc^{\mone})\tau^\nu(\id\times\tau^\nu)\assoc))(T\delta)\tau(\id\times\out),
  \end{align*}
  which we reduce, removing $\tau(\id\times\out)$ and $T$, multiplying
  from the left with $\assoc+\Sigma\TF\assoc$, and removing
  $\tau^\nu$ on the left, to
  \begin{equation*}
  \delta(\id\times(\id+\SigF\tau^\nu)\delta)\assoc =
  (\assoc+(\SigF(\id\times\tau^\nu)\assoc))\delta.
  \end{equation*}
  For the latter we calculate
  \begin{flalign*}
 && \delta(\id&\times(\id+\SigF\tau^\nu)\delta)\assoc  \\
 && & = \delta(\id\times(\id+\SigF\tau^\nu))(\id\times\delta)\assoc \\
 && & = (\id\times\id+\SigF(\id\times\tau^\nu))\delta(\id\times\delta)\assoc  & \by{naturality of $\delta$}\\
 && & = (\id+\SigF(\id\times\tau^\nu))(\id+\rho)\dist(\id\times(\id+\rho)\dist)\assoc  & \by{definition of $\delta$}\\
 && & = (\id+\SigF(\id\times\tau^\nu))(\id+\rho(\id\times\rho))\dist(\id\times\dist)\assoc  & \by{naturality of $\dist$}\\
 && & = (\id+\SigF(\id\times\tau^\nu))(\id+\rho(\id\times\rho))(\assoc+\assoc)\dist  & \by{\ref{eq:assoc_dist}}\\
 && & = (\id+\SigF(\id\times\tau^\nu))(\assoc+\SigF\assoc\rho)\dist&\by{{\sc str$_2$} for $\rho$} \\
 && & = (\id+\SigF(\id\times\tau^\nu))(\assoc+\SigF\assoc)\delta& \by{definition of $\delta$}\\
 && & = (\assoc+\SigF((\id\times\tau^\nu)\assoc))\delta.
  \end{flalign*}
  Here, we use the obvious coherence property
  \begin{equation}\label{eq:assoc_dist}
  \dist(\id\times\dist)\assoc = (\assoc+\assoc)\dist.
  \end{equation}
\item {\sc(str$_3$)} In order to obtain the identity $\tau^\nu(\id\times\eta^\nu)=\eta^\nu$, we show that the left hand side satisfies the characteristic equation for $\eta^\nu$, i.e.\ $\out\tau^\nu (\id\times\eta^\nu)=\eta\inl$. Indeed,
\begin{flalign*}
&&\out\tau^\nu (\id\times\eta^\nu) =\;& T(\id+\SigF\tau^\nu) (T\delta)\tau(\id\times\out) (\id\times\eta^\nu) & \by{definition of $\tau^\nu$} \\
&&=\;& T(\id+\SigF\tau^\nu) (T\delta)\tau(\id\times\eta)(\id\times\inl) & \by{definition of $\eta^\nu$} \\
&&=\;& T(\id+\SigF\tau^\nu) (T\delta)\eta(\id\times\inl) &\by{{\sc str$_3$} for $\tau$} \\
&&=\;& T(\id+\SigF\tau^\nu) (T\delta)T(\id\times\inl)\eta & \by{naturality of $\eta$} \\
&&=\;& T(\id+\SigF\tau^\nu) (T\inl)\eta &\\
&&=\;& \eta\inl. &
\end{flalign*}
\item {\sc(str$_4$)} Given $f:X\to\TF Z$, we show that $(\tau^\nu(\id\times f))^\kklstar\tau^\nu=\tau^\nu (\id\times f^\kklstar)$. Let $g=[f,\eta^\nu]$ and let us show first that $(\tau^\nu(\id\times g))^\kklstar\tau^\nu=\tau^\nu (\id\times g^\kklstar)$. This implies the identity for $f$ as follows:
\begin{flalign*}
&&(\tau^\nu(\id\times f))^\kklstar\tau^\nu=\;&(\tau^\nu(\id\times g)(\id\times \inl))^\kklstar\tau^\nu &\\
&&=\;&(\tau^\nu(\id\times g))^\kklstar \TF(\id\times \inl)\tau^\nu & \\
&&=\;&(\tau^\nu(\id\times g))^\kklstar \tau^\nu(\id\times \TF\inl) & \by{naturality of $\tau^\nu$} \\
&&=\;&\tau^\nu (\id\times g^\kklstar)(\id\times \TF\inl) &\by{{\sc str$_4$} for $g$ and $\tau^\nu$} \\
&&=\;&\tau^\nu (\id\times g^\kklstar\TF\inl) &\\
&&=\;&\tau^\nu (\id\times f^\kklstar). &
\end{flalign*}
As we have shown above both $g^\kklstar$ and $\tau^\nu$ are final morphisms from suitable coalgebras. By composing the corresponding commutative squares we obtain the following diagram:
\begin{align*}
\xymatrix@L=8pt@R20pt@C58pt@M6pt{
X\times\TF Y\ar[r]^->>>>>>>>>>>>>>>{\id\times[T(\id+\SigF\TF\inr)\out g,\eta\inr]^\klstar\out}\ar[dd]_-{\id\times g^\kklstar} & X\times T(Z+\SigF\TF Y)\ar[r]+<-60pt,0pt>^-{(T\delta)\tau}\ar[d]+<0pt,0pt>^-{\id\times T(\id+\SigF{g^\kklstar})} &
\save[]+<+3mm,0mm>*{T(X\times Z+\SigF(X\times\TF Y))}\ar@/^2.2pc/[]+<+10pt,-12pt>;[dddl]+<+60pt,-20pt>^->>>>>>>>>>>>>>>{{T(\id+\SigF(\tau^\nu(\id\times g^\kklstar)))}} \restore\\
 &  \save[]+<0mm,-5mm>*{X\times T(Z+\TF Z)} \ar[dd]+<0mm,+8mm>^-{(T\delta)\tau} \restore \\
X\times\TF Z\ar[dd]_-{\tau^\nu}\ar[ur]+<-45pt,-5mm>^-{\id\times\out}\ar[dr]+<-65pt,+5mm>_-{(T\delta)\tau(\id\times\out)} & \\
 & \save[]+<0mm,+5mm>*{T(X\times Z+\SigF(X\times\TF Z))} \ar[d]+<0mm,5mm>^-{T(\id+\SigF\tau^\nu)} \restore \\
\TF(X\times Z)\ar[r]_-{\out} & \TF(X\times Z +\SigF\TF(X\times Z))
}
\end{align*}
from which we conclude that
\begin{align*}
\tau^\nu (\id\times g^\kklstar) = \coit\bigl((T\delta)\tau(\id\times[h,\eta\inr]^\klstar\out)\bigr)
\end{align*}
where $h$ denotes $T(\id+\SigF\TF\inr)\out g$.

We will be done once we show that also $(\tau^\nu(\id\times g))^\kklstar\tau^\nu$ is a morphism from the same coalgebra to the final one, i.e.\ the identity
\begin{align}\label{eq:str4}
T(\id+\SigF((\tau^\nu(\id\times g))^\kklstar\tau^\nu))(T\delta)\tau(\id\times[h ,\eta\inr]^\klstar\out)=\out(\tau^\nu(\id\times g))^\kklstar\tau^\nu.
\end{align}
Let us show that the following diagram commutes:
\begin{align}\label{eq:str4_dia}
\vcenter{\vbox{
\xymatrix@R25pt@C58pt@M6pt{
T(X\times (Y+\SigF(\TF Y)))\ar[d]_-{(\tau(\id\times [h,\eta\inr]))^\klstar} \ar[r]^-{T\delta} & T(X\times Y+ \SigF(X\times\TF Y))\ar[d]^-{[(T\delta)\tau(\id\times h),\eta\inr]^\klstar}\\
T(X\times (Z+\SigF(\TF Y)))\ar[r]^-{T\delta} & T(X\times Z+ \SigF(X\times\TF Y))
}}}
\end{align}
Indeed,
\begin{flalign*}
&&[(T\delta)&\tau(\id\times h),\eta\inr]^\klstar (T\delta)\\
&&=\;&([(T\delta)\tau(\id\times h),\eta\inr]\delta)^\klstar\\
&&=\;&([(T\delta)\tau(\id\times h),\eta\inr](\id+\rho) \dist)^\klstar&\by{definiton of~$\delta$}\\
&&=\;&([(T\delta)\tau(\id\times h),\eta\inr\rho]\dist)^\klstar\\
&&=\;&([(T\delta)\tau(\id\times h),\eta\,\delta(\id\times\inr)]\dist)^\klstar&\by{definiton of~$\delta$}\\
&&=\;&([(T\delta)\tau(\id\times h),(T\delta)\tau(\id\times\eta)(\id\times\inr)]\dist)^\klstar&\by{{\sc str$_3$} for~$\tau$}\\
&&=\;&((T\delta)\tau[\id\times h,\id\times\eta\inr]\dist)^\klstar\\
&&=\;&(T\delta)(\tau(\id\times [h,\eta\inr]))^\klstar
\end{flalign*}
where the last step is due to the obvious identity $[u\times v,u\times w]=(u\times [v,w])\dist^{\mone}$.
Finally, we obtain~\eqref{eq:str4} as follows:
\begin{flalign*}
&&T(\id&+\SigF((\tau^\nu(\id\times g))^\kklstar\tau^\nu))\\
&&&~~(T\delta)\tau(\id\times[h ,\eta\inr]^\klstar\out)\\
&&=\;&T(\id+\SigF((\tau^\nu(\id\times g))^\kklstar\tau^\nu))\\
&&&~~(T\delta)(\tau(\id\times[h ,\eta\inr]))^\klstar\tau(\id \times \out)&\by{{\sc str$_4$} for $\tau$}\\
&&=\;&T(\id+\SigF((\tau^\nu(\id\times g))^\kklstar\tau^\nu))\\
&&&~~[(T\delta)\tau(\id\times h ),\eta\inr]^\klstar(T\delta)\tau(\id \times \out)&\by{\ref{eq:str4_dia}}\\
&&=\;&T(\id+\SigF((\tau^\nu(\id\times g))^\kklstar\tau^\nu))\\
&&&~~[(T\delta) \tau(\id\times T(\id+\SigF\TF\inr)) (\id\times\out g),\eta\inr]^\klstar\\
&&&~~(T\delta)\tau(\id\times\out)&\by{definition of~$h$}\\
&&=\;&T(\id+\SigF((\tau^\nu(\id\times g))^\kklstar\tau^\nu))\\
&&&~~[(T\delta) T(\id\times (\id+\SigF\TF\inr)) \tau(\id\times\out g),\eta\inr]^\klstar\\
&&&~~(T\delta)\tau(\id\times\out)&\by{naturality of $\tau$}\\
&&=\;&T(\id+\SigF((\tau^\nu(\id\times g))^\kklstar\tau^\nu))\\
&&&~~[T(\id+\SigF(\id\times\TF\inr))(T\delta)\tau(\id\times\out g),\eta\inr]^\klstar\\
&&&~~(T\delta)\tau(\id\times\out)&\by{naturality of $\delta$}\\
&&=\;&[T(\id+\SigF((\tau^\nu(\id\times g))^\kklstar\tau^\nu(\id\times\TF\inr)))(T\delta)\tau(\id\times\out g),\\
&&&~~\eta\inr \SigF((\tau^\nu(\id\times g))^\kklstar\tau^\nu) ]^\klstar(T\delta)\tau(\id\times\out)&\by{naturality of $\eta$}\\
&&=\;&[T(\id+\SigF((\tau^\nu(\id\times g\inr))^\kklstar\tau^\nu))(T\delta)\tau(\id\times\out g),\\
&&&~~\eta\inr \SigF((\tau^\nu(\id\times g))^\kklstar\tau^\nu))]^\klstar(T\delta)\tau(\id\times\out)&\by{naturality of~$\tau^\nu$}\\
&&=\;&[T(\id+\SigF((\tau^\nu(\id\times\eta^\nu))^\kklstar\tau^\nu))(T\delta)\tau(\id\times\out g),\\
&&&~~\eta\inr \SigF((\tau^\nu(\id\times g))^\kklstar\tau^\nu))]^\klstar(T\delta)\tau(\id\times\out)&\by{since $g=[f,\eta^\nu]$}\\
&&=\;&[T(\id+\SigF\tau^\nu)(T\delta)\tau(\id\times\out g),\\
&&&~~\eta\inr \SigF((\tau^\nu(\id\times g))^\kklstar\tau^\nu))]^\klstar(T\delta)\tau(\id\times\out)&\by{{\sc str$_3$} for $\tau^{\nu}$}\\
&&=\;&[T(\id+\SigF\tau^\nu)(T\delta)\tau(\id\times\out)(\id\times g),\\
&&&~~\eta\inr\SigF((\tau^\nu(\id\times g))^\kklstar\tau^\nu)]^\klstar(T\delta)\tau(\id\times\out)\\
&&=\;&[\out\tau^\nu(\id\times g),\eta\inr\SigF((\tau^\nu(\id\times g))^\kklstar\tau^\nu)]^\klstar(T\delta)\tau(\id\times\out)&\by{definition of~$\tau^\nu$}\\
&&=\;&[\out\tau^\nu(\id\times g),\eta\inr\SigF(\tau^\nu(\id\times g))^\kklstar)]^\klstar\\
&&&~~T(\id+\SigF\tau^\nu)(T\delta)\tau(\id\times\out)\\
&&=\;&[\out\tau^\nu(\id\times g),\eta\inr\SigF(\tau^\nu(\id\times g))^\kklstar]^\klstar\out\tau^\nu&\by{definition of~$\tau^\nu$}\\
&&=\;&\out(\tau^\nu(\id\times g))^\kklstar\tau^\nu.&\by{definition of~$\argument^\kklstar$}
\end{flalign*}
\end{citemize}
We have thus shown all properties ({\sc str$_1$})--({\sc str$_4$}) and the proof is completed.
\end{proof}
\noindent Following Uustalu~\cite{Uustalu03} (and other
work~\cite{PirogGibbons13,AczelAdamekEtAl03}), we next introduce a
notion of guardedness.
\begin{defi}\textbf{(Guardedness)}\label{defn:guard}
A morphism $f:X\to \TF(Y+Z)$ is \emph{guarded} if there is $u:X\to T(Y+\SigF\TF(Y+Z))$ such that $\out \comp f=T(\inl+\id)\comp u$:
\begin{equation*}
\xymatrix@L=8pt@R20pt@C58pt@M6pt{
X\ar[r]^{f}\ar[d]_{u}& \TF(Y+Z)\ar[d]^{\out}\\
  T(Y+\SigF\TF(Y+Z))\ar[r]^-{T(\inl+\id)} & T((Y+Z)+\SigF\TF(Y+Z)).}
\end{equation*}
\end{defi}

\noindent Guardedness of $f:X\to \TF(Y+Z)$ intuitively means that any
call to a computation of type $Z$ in $f$ occurs only under a free
operation, i.e.~via the right hand summand in
$T((Y+Z)+\SigF\TF(Y+Z))$. A familiar instance of this notion occurs in
process algebra~\cite{BergstraPonseEtAl01}, illustrated in simplified
form as follows.
\begin{exa}\label{exp:guard_pa}
  Let $\BBT$ be the countable powerset monad over a suitable category,
  i.e.\ $TX=\PSet_{\omega_1} X=\{Y\subseteq X\mid |Y|\leq\omega\}$.
  Take $\Sigma= A\times (-)$; then the object $T_\Sigma X=\nu\gamma.\,\PSet_{\omega_1}(X+A\times\gamma)$
  can be considered as the domain of possibly infinite countably
  nondeterministic processes over actions from $A$ with final results
  in $X$. A morphism $n\to T_\Sigma(X+n)$ can be seen as a system of $n$
  mutually recursive process definitions; the latter is guarded in the
  sense of Definition~\ref{defn:guard} iff every recursive call of a
  process is preceded by an action, which coincides with the standard
  notion of guardedness from process algebra. We recall an example of
  an \emph{unguarded} definition in this setting in
  Section~\ref{sec:bsp}.
\end{exa}
\noindent

\section{Iteration on Coinductive Resumptions}
\label{sec:cr_elgot}
We next establish one of the main technical contributions of the paper
by proving that iteration operators, i.e.~Elgot monad structures,
propagate uniquely along extensions \mbox{$\BBT\to\BBTF$}, implying
that Elgot monads are closed under the coinductive generalized
resumption transformer.
\begin{thm}\label{thm:cr_elg}
  Let\/ $\BBT$ be a complete Elgot monad and let 
  $\BBTF$ be the monad identified in Theorem~\ref{lem:kl_dec}, i.e.~the
  coinductive generalized resumption monad over $\BBT$.
\begin{enumerate}
\item\label{item:it-tab} There is a unique iteration operator making
  $\BBTF$ a complete Elgot monad that extends iteration of $\BBT$
  in the sense that for $f:X\to\TF(Y+X)$ and $g:X\to T(Y+X)$, if
  \begin{align}\label{eq:cung}
  \out\comp f=(T\inl)\comp g
  \end{align}
  (i.e.\ $f=\tuo\comp(T\inl)\comp g$) then
  \begin{align}\label{eq:cungit}
  \out\comp f^\istar=(T\inl)\comp g^\istar.
\end{align}
 \item \label{item:it-tab-guarded} For any guarded morphism $f:X\to \TF(Y+X)$, $f^\istar$ is the \emph{unique} morphism satisfying the fixpoint law \mbox{$[\eta^\nu, f^{\istar}]^{\kklstar}\comp f = f^{\istar}$.}
\end{enumerate}
\end{thm}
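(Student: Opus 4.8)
\noindent
The two parts are best proved together: iteration on $\BBTF$ is built in two stages, and part~(2) is the lemma governing the second one. Since $\out$ is a natural isomorphism and $\BC$ is distributive, any $f\colon X\to\TF(Y+X)$ presents as $\out\comp f\colon X\to T\bigl((Y+X)+\SigF\TF(Y+X)\bigr)$, which separates the recursive calls of $f$ into the \emph{unguarded} ones (in the $X$-summand, directly under the outer $T$) and the \emph{guarded} ones (buried inside $\SigF\TF(Y+X)$-subtrees, which corecursion will absorb). Stage one eliminates the unguarded calls by iterating in $\BBT$; stage two forms the unique solution of the remaining guarded equation.

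\medskip\noindent\textbf{Part~(2).}
Let $f\colon X\to\TF(Y+X)$ be guarded with witness $u\colon X\to T(Y+\SigF\TF(Y+X))$, so $\out\comp f=T(\inl+\id)\comp u$, and put $d=[[\eta\comp\inl,u],\eta\comp\inr]$ and $c=d^\klstar\comp\out\colon\TF(Y+X)\to T(Y+\SigF\TF(Y+X))$; crucially $c$ depends only on $u$. I would first note the one-line identity $d^\klstar\comp T(\inl+\id)=\id$ and use it, with the defining equation of $\coit(c)$ and naturality of $\out$, to verify
\[
  \out\comp\coit(c)=\bigl[\out\comp[\eta^\nu,\coit(c)\comp f],\;\eta\comp\inr\comp\SigF\coit(c)\bigr]^\klstar\comp\out .
\]
By uniqueness of solutions to~\eqref{eq:kl_def} this means $\coit(c)=[\eta^\nu,\coit(c)\comp f]^\kklstar$, so $\coit(c)\comp f$ solves the unfolding equation. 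Conversely, if $g\colon X\to\TF Y$ is any solution, then \eqref{eq:kl_def} and the guardedness witness give $\out\comp g=T(\id+\SigF[\eta^\nu,g]^\kklstar)\comp u$, and substituting this back into \eqref{eq:kl_def} shows that $[\eta^\nu,g]^\kklstar$ is a $T(Y+\SigF(-))$-coalgebra morphism from $(\TF(Y+X),c)$ to the final coalgebra $(\TF Y,\out)$; finality forces $[\eta^\nu,g]^\kklstar=\coit(c)$ and hence $g=[\eta^\nu,g]^\kklstar\comp f=\coit(c)\comp f$. Thus the unfolding equation for a guarded $f$ has the unique solution $\coit(c)\comp f$, which will coincide with $f^\istar$ once stage one is known to leave guarded morphisms fixed.

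\medskip\noindent\textbf{Part~(1).}
For arbitrary $f\colon X\to\TF(Y+X)$, rearrange $\out\comp f$ along the reassociation iso $\omega\colon(Y+X)+\SigF\TF(Y+X)\xrightarrow{\ \sim\ }(Y+\SigF\TF(Y+X))+X$ into a $\BBT$-iteration datum $T\omega\comp\out\comp f$, and set $\bar g=(T\omega\comp\out\comp f)^\istar\colon X\to T(Y+\SigF\TF(Y+X))$, the iterate in $\BBT$. Then $\tilde f:=\out^{-1}\comp T(\inl+\id)\comp\bar g$ is guarded with witness $\bar g$, and I \emph{define} $f^\istar:=\tilde f^\istar$, the unique solution supplied by part~(2). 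The points to settle are: (i) if $f$ is already guarded, the rearranged datum factors through $\inl$, so \emph{unfolding} in $\BBT$ gives $\bar g=u$ and $\tilde f=f$, making the definition consistent; (ii) the extension law \eqref{eq:cungit}: if $\out\comp f=T\inl\comp g$ then naturality of iteration in $\BBT$ gives $\bar g=T\inl\comp g^\istar$, and the guarded-solution formula $\out\comp\tilde f^\istar=T(\id+\SigF[\eta^\nu,\tilde f^\istar]^\kklstar)\comp\bar g$ collapses to $T\inl\comp g^\istar$; (iii) the Elgot axioms for $\argument^\istar$ on $\BBTF$. For \emph{unfolding} one shows $[\eta^\nu,\tilde f^\istar]^\kklstar\comp f=[\eta^\nu,\tilde f^\istar]^\kklstar\comp\tilde f$ by checking that the map $[[\eta\comp\inl,\bar g],\eta\comp\inr]^\klstar\comp\out$ carries both $f$ and $\tilde f$ to $\bar g$, using $\bar g=(T\omega\comp\out\comp f)^\istar$, \emph{unfolding} in $\BBT$, and the factorisation $[[\eta\comp\inl,\bar g],\eta\comp\inr]=[\eta,\bar g]\comp\omega$; each of \emph{naturality}, \emph{dinaturality}, \emph{codiagonal}, \emph{uniformity} and \emph{compatibility with strength} is then obtained by transporting the corresponding identity across the guardification step, rewriting it as a $\BBT$-iteration identity together with a guarded $\TF$-equation, and invoking the matching $\BBT$-axiom with the uniqueness from part~(2) and Lemma~\ref{lem:final-functor}. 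Uniqueness of the operator: any Elgot operator on $\BBTF$ obeying \eqref{eq:cung}--\eqref{eq:cungit} agrees with $\argument^\istar$ on guarded morphisms by part~(2), and for general $f$ one uses \emph{dinaturality} to detach the unguarded head, the extension law to pin that head to $\bar g$ computed in $\BBT$, and part~(2) on the residual guarded equation $\tilde f$, so the two operators coincide.

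\medskip\noindent
The step I expect to be the real work is item~(iii): pushing \emph{codiagonal}, \emph{dinaturality} and \emph{compatibility with strength} through the guardification construction. These force one to commute the $\BBT$-iteration step past nested and parametrized iteration and to re-expose recursive calls that have been buried inside $\SigF\TF$-subtrees (and, for strength, to thread everything through the distributor $\delta$ of Theorem~\ref{lem:kl_dec}); although each instance reduces in principle to a $\BBT$-axiom plus finality, the coherence bookkeeping is substantial, and the uniqueness-of-operator argument is of the same character.
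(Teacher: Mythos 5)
Your construction coincides with the paper's: your reassociation $\omega$ is the paper's $\pi=[\inl+\id,\inl\inr]$, your $\tilde f$ is exactly $\guard f=\tuo\comp T(\inl+\id)\comp((T\pi)\out f)^\istar$, and the definition $f^\istar:=\tilde f^\istar$ is the paper's \eqref{eq:guard_def}. Your part~(2) is a correct, self-contained coalgebraic argument (the paper instead imports this from Uustalu's result on parametrized monads), and your consistency check~(i), the extension law~(ii), and the unfolding argument via $c'f=c'\tilde f=\bar g$ are sound.

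The gap is item~(iii) together with uniqueness, and it is not a bookkeeping gap but a missing-content one: the assertion that naturality, dinaturality, codiagonal, uniformity and strength-compatibility follow by ``transporting the identity across guardification and invoking the matching $\BBT$-axiom with the uniqueness from part~(2)'' does not describe a proof. After guardification the two sides of, e.g., the codiagonal are \emph{not} the guarded forms of a common $\BBT$-datum: one must first establish the commutation $\guard(\TF[\id,\inr]\comp\guard g)=\guard(\TF[\id,\inr]\comp g)$ and then climb a case ladder (first $\TF[\id,\inr]\,g$ guarded, then $(\TF\pi)g$ guarded, then $g$ guarded, then general $g$), applying the $\BBT$-codiagonal inside expressions where the inner iterate has already been absorbed into $\Sigma$-arguments via $\kklstar$; dinaturality likewise needs a bespoke bridging identity between the guardifications of the two interleaved equations $[\eta^\nu\inl,h]^\kklstar g$ and $[\eta^\nu\inl,g]^\kklstar h$ (the paper's \eqref{eq:dim_lem}), and the strength case has to be threaded through the distributor $\delta$ via an auxiliary law such as \eqref{eq:tab_str_iter}. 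These intermediate identities are the bulk of the paper's proof (Lemmas~\ref{lem:ext_ax}--\ref{lem:ext_str}) and none of them is a formal consequence of part~(2) plus the corresponding $\BBT$-axiom. Similarly, your uniqueness sketch (``dinaturality to detach the unguarded head'') is not carried out; the paper's argument decomposes an arbitrary $f$ as $[h,\eta^\nu\inr]^\kklstar g$ with the \emph{completely unguarded} part $g=\tuo T(\inl\,\pi)\out f$ and $h=\tuo\eta(\inl+\id)$, and then uses naturality and the \emph{codiagonal} of the candidate operator, together with the extension law, to force $f^\istar=(\guard f)^\istar$. So while the framework and everything you actually prove are correct, the central verifications that make $\BBTF$ a complete Elgot monad, and the uniqueness of the operator, remain unproven in your proposal.
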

\noindent The proof of Theorem~\ref{thm:cr_elg} relies on a fairly
complicated chain of calculations and will, to aid readability, be
partitioned into separate lemmas. Before we dive into these details,
let us outline the general idea.

Uustalu already proves that guarded morphisms $f$ have unique iterates
$f^{\istar}$ satisfying the fixpoint law~\cite[Theorem~3.11]{Uustalu03}, which readily implies the
second clause.  In showing the first clause of
Theorem~\ref{thm:cr_elg}, the key step is then to define $f^{\istar}$
for unrestricted $f$ in a consistent manner. For $f: X \to \TF(Y+X)$,
let $\guard f : X \to \TF(Y+X)$ be the composite
    \begin{align*}\samepage
    X\xto{~~w^\istar~}~&T(Y+\SigF\TF(Y+X))\\
     \xto{~T(\inl + \id)~}~&T((Y+X)+\SigF\TF(Y+X))\\
    \xto{~~\tuo~~}~&\TF(Y+X)
    \intertext{(guarded by definition), where $w$ is the composite}
	X\xto{~~f~~}~&\TF(Y+X)\\
	\xto{~\out~}~&T((Y+X)+\SigF\TF(Y+X))\\
	\xto{~~T\pi~~}~&T((Y+\SigF\TF(Y+X))+X)
	\end{align*}
  with $\pi = \left[\inl+\id, \inl \inr \right]$. That is,
  $\guard f$ makes $f$ guarded by iterating
  \begin{equation*}
  \out \comp f : X \to T( (Y + X) + \SigF\TF(Y+X))
  \end{equation*}
  (in the complete Elgot monad $\BBT$) over the middle summand
  $X$. It is easy to check that $\guard f = f$ when $f$ is
  guarded. We hence can consistently define
  \begin{equation}\label{eq:guard_def}
  f^{\istar} = (\guard f)^{\istar}
  \end{equation}
  (in $\BBTF$). The remaining technical challenge is now to prove that this
  definition indeed satisfies the axioms of complete Elgot monads and that it is the
unique such iteration operator on $\BBT_{\Sigma}$ extending the given iteration operator on $\BBT$.
\begin{lem}\label{lem:ext_ax}
Given $f:X\to \TF(Y+X)$, $f^\istar:f:X\to\TF Y$ defined by~\eqref{eq:guard_def}
satisfies \emph{fixpoint}, \emph{naturality}, and \emph{uniformity}.
\end{lem}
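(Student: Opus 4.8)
The plan is to derive the four laws from the corresponding laws of the complete Elgot monad $\BBT$ together with Uustalu's theorem that \emph{guarded} morphisms have \emph{unique} iterates in $\BBTF$~\cite[Theorem~3.11]{Uustalu03}. The first step is to record that $\guard f = f$ whenever $f$ is already guarded: if $\out\comp f = T(\inl+\id)\comp u$, then $\pi\comp(\inl+\id)=\inl$, so $w = (T\inl)\comp u$ and hence $w^\istar = [\eta,w^\istar]^\klstar\comp(T\inl)\comp u = \eta^\klstar\comp u = u$ by \emph{unfolding} in $\BBT$, whence $\guard f = \tuo\comp T(\inl+\id)\comp u = \tuo\comp\out\comp f = f$. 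Thus $f^\istar = (\guard f)^\istar$ is, for every $f$, by construction the \emph{unique} morphism satisfying the guarded unfolding equation $[\eta^\nu,(\guard f)^\istar]^\kklstar\comp\guard f = (\guard f)^\istar$, and the single device used throughout is: to prove $f^\istar = s$ it suffices to exhibit $s$ as an unfolding solution of the guarded morphism $\guard f$ (and, dually, a guarded morphism may be freely replaced by anything having the same iterate).

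\emph{Unfolding} of $f^\istar$, i.e.\ $[\eta^\nu,(\guard f)^\istar]^\kklstar\comp f = (\guard f)^\istar$, is proved directly: applying the iso $\out$, unwinding $\kklstar$ via~\eqref{eq:kl_def}, rewriting $\out\comp f$ in terms of $w$, and invoking the $\BBT$-unfolding equation $w^\istar=[\eta,w^\istar]^\klstar\comp w$, the left-hand side collapses to the $\out$-image of the guarded unfolding equation above. \emph{Naturality} and \emph{uniformity} follow by a common recipe. On the one hand, their guarded instances hold in $\BBTF$ purely by uniqueness: e.g.\ if $g:Z\to\TF(Y+Z)$ is guarded and $\guard f\comp h=\TF(\id+h)\comp g$ for $h:Z\to X$, then, using the unfolding equation for $\guard f$ and the Kleisli laws of Theorem~\ref{lem:kl_dec} to compute $[\eta^\nu,(\guard f)^\istar]^\kklstar\comp\TF(\id+h)=[\eta^\nu,(\guard f)^\istar\comp h]^\kklstar$, one sees that $(\guard f)^\istar\comp h$ is an unfolding solution of $g$, hence equals $g^\istar$; the guarded instance of naturality is obtained the same way. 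On the other hand, $\guard$ commutes with the operations occurring in these laws: $\guard\bigl([\TF\inl\comp g,\eta^\nu\inr]^\kklstar\comp f\bigr)=[\TF\inl\comp g,\eta^\nu\inr]^\kklstar\comp\guard f$, and $\guard f\comp h=\TF(\id+h)\comp\guard g$ whenever $f\comp h=\TF(\id+h)\comp g$. Each such identity is checked by applying $\out$, unwinding $\kklstar$ via~\eqref{eq:kl_def}, observing that postcomposition by the relevant Kleisli map (resp.\ precomposition by $h$) modifies the $\BBT$-equation $w$ only \emph{away from} the middle summand that $\guard$ iterates out, and then invoking \emph{naturality} (resp.\ \emph{uniformity}) of $\argument^\istar$ in $\BBT$; it is also easily seen that postcomposition by a Kleisli map preserves guardedness, so that the uniqueness argument applies to the transformed morphisms. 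Combining the two halves yields the law for arbitrary $f$, e.g.\ $g^\kklstar\comp f^\istar = g^\kklstar\comp(\guard f)^\istar = \bigl([\TF\inl\comp g,\eta^\nu\inr]^\kklstar\comp\guard f\bigr)^\istar = \bigl([\TF\inl\comp g,\eta^\nu\inr]^\kklstar\comp f\bigr)^\istar$.

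\emph{Dinaturality} is the main obstacle, because both iterated morphisms $[\eta^\nu\inl,h]^\kklstar\comp g$ and $[\eta^\nu\inl,g]^\kklstar\comp h$ get guardified, and $\guard$ can no longer be pushed through a single postcomposition: the recursive variable $X$ of the first morphism (and $Z$ of the second) is produced only after a complete round trip through $g$ and then $h$. The plan is to compute, via Theorem~\ref{lem:kl_dec}, the $\BBT$-equations $w_1,w_2$ underlying $\guard([\eta^\nu\inl,h]^\kklstar\comp g)$ and $\guard([\eta^\nu\inl,g]^\kklstar\comp h)$; one finds that $w_1$ has the shape $\bigl[[\eta\inl\inl,\,(T\pi)\comp\out\comp h],\,\eta\inl\inr\comp\SigF[\eta^\nu\inl,h]^\kklstar\bigr]^\klstar\comp\out\comp g$ and symmetrically for $w_2$, so that the relation between $w_1^\istar$ and $w_2^\istar$ is precisely an instance of \emph{dinaturality} (together with \emph{naturality} and \emph{unfolding}) in $\BBT$. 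Setting $b = ([\eta^\nu\inl,g]^\kklstar\comp h)^\istar = (\guard([\eta^\nu\inl,g]^\kklstar\comp h))^\istar$, one then verifies --- by applying $\out$, unwinding $\kklstar$, substituting the formula for $w_1^\istar$ and the guarded unfolding equation for $b$, and applying \emph{dinaturality}, \emph{naturality} and \emph{unfolding} in $\BBT$ --- that $[\eta^\nu,b]^\kklstar\comp g$ is an unfolding solution of the guarded morphism $\guard([\eta^\nu\inl,h]^\kklstar\comp g)$; by uniqueness this gives the dinaturality identity. The genuinely laborious point, and the only real difficulty, is the bookkeeping of the coproduct injections in this last calculation; everything else is routine.
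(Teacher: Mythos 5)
Your proposal is correct and follows essentially the same route as the paper: $\guard f=f$ for guarded $f$, $f^\istar=(\guard f)^\istar$ as the unique unfolding solution via Uustalu's theorem, unfolding by direct calculation, naturality and uniformity by combining their guarded instances (obtained from uniqueness) with commutation of $\guard$ against Kleisli postcomposition resp.\ precomposition with $h$, and dinaturality by showing that $[\eta^\nu,t^\istar]^\kklstar\comp g$ satisfies the unfolding equation of the guardified morphism, using $\BBT$'s dinaturality on the underlying $T$-coalgebra equations. The only minor divergences are that the paper proves the $\guard$-commutation step for naturality using the codiagonal law of $\BBT$ in addition to naturality (both being available anyway), and that the paper carries out in full the coproduct-injection bookkeeping that you defer.
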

\begin{proof}
To make sure that definition~\eqref{eq:guard_def}
introduces the iteration consistently with the iteration for guarded morphisms 
we check that $\guard f=f$ whenever $f$ is guarded. Suppose that
$\out f=T(\inl+\id) u$. Then $f=\tuo T(\inl+\id) u$ and therefore
\begin{align*}
\guard f=&~\tuo T(\inl+\id) ((T\pi)\out  f)^\istar\\
=&~ \tuo T(\inl+\id) ((T\pi)\out  \tuo T(\inl+\id) u)^\istar\\
=&~ \tuo T(\inl+\id) ((T\pi) T(\inl+\id) u)^\istar\\
=&~ \tuo T(\inl+\id) (T\inl u)^\istar\\
=&~ \tuo T(\inl+\id) u\\
=&~ f.
\end{align*}
Let us check \emph{fixpoint}, \emph{naturality},  and \emph{uniformity} (Definition~\ref{defn:elgot}) in order.
\begin{citemize}
  \item \emph{Fixpoint.} For any $f:X\to \TF(Y+X)$ we have
\begin{flalign*}
&&f^{\istar} =&~[\eta^\nu,f^{\istar}]^{\kklstar}  \guard f &\by{definition of $\argument^\istar$}\\
&&=&~[\eta^\nu,f^{\istar}]^{\kklstar} \tuo T(\inl+\id) ((T\pi)\out  f)^\dagger&\by{definition of $\guard$}\\
&&=&~\tuo  \bigl[[\eta\inl,\out  f^{\istar}], \eta\inr \SigF{[\eta^\nu,f^{\istar}]^\kklstar}\bigr]^\klstar\\
&&&~~T(\inl+\id) ((T\pi)\out  f)^\istar&\by{definition of $\argument^\kklstar$}\\
&&=&~\tuo  \bigl[\eta\inl, \eta\inr \SigF{[\eta^\nu,f^{\istar}]^\kklstar}\bigr]^\klstar ((T\pi)\out  f)^\istar\\
&&=&~\tuo  T(\id+\SigF{[\eta^\nu,f^{\istar}]^\kklstar}) ((T\pi)\out  f)^\istar
\intertext{and thus we obtain the following intermediate equation:}
&&\out f^{\istar} =&~  T\bigl(\id+\SigF{[\eta^\nu,f^{\istar}]^\kklstar}\bigr) ((T\pi)\out  f)^\istar.\numberthis\label{eq:unf_int}\\
\intertext{Now, continuing the above calculation we obtain}
&&f^{\istar} =&~\tuo  T(\id+\SigF{[\eta^\nu,f^{\istar}]^\kklstar}) ((T\pi)\out  f)^\istar\\
&&=&~\tuo  T(\id+\SigF{[\eta^\nu,f^{\istar}]^\kklstar}) [\eta,((T\pi)\out  f)^\istar]^\klstar (T\pi)\out  f &\by{fixpoint}\\
&&=&~\tuo  [T(\id+\SigF{[\eta^\nu,f^{\istar}]^\kklstar})\eta,\out f^\istar]^\klstar (T\pi)\out  f &\by{\ref{eq:unf_int}}\\
&&=&~\tuo  [\eta(\id+\SigF{[\eta^\nu,f^{\istar}]^\kklstar}),\out f^\istar]^\klstar (T\pi)\out  f &\by{naturality of $\eta$}\\
&&=&~\tuo  \bigl[[\eta\inl,\out f^\istar], \eta\inr \SigF{[\eta^\nu,f^{\istar}]^\kklstar}\bigr]^\klstar \out  f \\
&&=&~\tuo  \bigl[\out [\eta^\nu,f^\istar], \eta\inr \SigF{[\eta^\nu,f^{\istar}]^\kklstar}\bigr]^\klstar \out  f &\by{definition of $\eta^\nu$}\\
&&=&~[\eta^\nu,f^{\istar}]^\kklstar \tuo \out  f&\by{naturality of $\argument^\kklstar$}\\
&&=&~[\eta^\nu,f^{\istar}]^\kklstar  f.
\end{flalign*}
\item\emph{Naturality.} Assume that $h:X\to\TF(Y+X)$ is guarded and show that so is $[(\TF \inl) g,\eta^{\nu}
	\inr]^\kklstar h$ for any $g:Y\to Z$. Let $u$ be such that $\out h=T(\inl+\id) u$ and let $w=[(\TF \inl)
	g,\eta^{\nu} \inr]$. Then
\begin{align*}
\out [&\,(\TF \inl) g,\eta^{\nu} \inr]^\kklstar h\\
 =&\; [\out w,\eta\inr \SigF w^\kklstar]^\klstar\out h\\
=&\; [\out w,\eta\inr \SigF w^\kklstar]^\klstar T(\inl+\id) u\\
=&\; [\out w\inl,\eta\inr \SigF w^\kklstar]^\klstar u\\
=&\; [\out (\TF \inl) g,\eta\inr \SigF w^\kklstar ]^\klstar u\\
=&\; [T(\inl+\SigF\TF \inl)\out g,\eta\inr \SigF w^\kklstar]^\klstar u\\
=&\; T(\inl+\id) \bigl[T(\id+\SigF\TF \inl)\out g,\eta\inr \SigF w^\kklstar\bigr]^\klstar u.
\end{align*}
Now, since $t=[(\TF \inl) g,\eta^{\nu} \inr]^\kklstar \guard f$ is guarded, it is the unique fixpoint of the map
\begin{displaymath}
	t\mapsto [\eta^\nu,t]^\kklstar [(\TF \inl) g,\eta^{\nu} \inr]^\kklstar \guard f.
\end{displaymath}
However, on the other hand,
\begin{flalign*}
	[\eta^\nu,g^\kklstar f^\istar]^\kklstar &[(\TF \inl) g,\eta^{\nu} \inr]^\kklstar \guard f\\
 =&\;  [ g,g^\kklstar f^\istar]^\kklstar \guard f\\
 =&\; [ g,g^\kklstar (\guard f)^\istar]^\kklstar \guard f\\
 =&\; g^\kklstar[\eta^\nu, (\guard f)^\istar]^\kklstar \guard f\\
 =&\; g^\kklstar f^\istar
\end{flalign*}
and therefore $t^\istar=g^\kklstar f^\istar$. It remains to show that 
\begin{align*}
[(\TF \inl) g,\eta^{\nu} \inr]^\kklstar \guard f=
\guard[(\TF \inl) g,\eta^{\nu} \inr]^\kklstar f.
\end{align*} 
Since $\guard f$ is guarded by definition, we know by the calculation above
that $[(\TF \inl) g,\eta^{\nu} \inr]^\kklstar \guard f$ is guarded and therefore
\[ [(\TF \inl) g,\eta \inr]^\kklstar \guard f = \guard [(\TF \inl) g, \eta \inr]^{\kklstar} \guard f. \]
To finish the proof, we calculate
\[
\guard [(\TF \inl) g, \eta^{\nu} \inr]^{\kklstar} \guard f = \tuo T(\inl + \id) ((T\pi) \out [(\TF \inl) g,
\eta^{\nu}\inr]^{\kklstar} \guard f)^{\istar}.
\]
Further transforming the dagger expression in the previous term yields
\begin{flalign*}
&& ((T\pi)&\, \out [(\TF \inl) g, \eta \inr]^{\kklstar} \guard f)^{\istar} \\
&&=&~((T\pi) [\out w, \eta \inr \SigF w^{\kklstar}]^{\klstar} T(\inl + \id) ((T\pi) \out f)^{\istar})^{\istar}\\
&&=&~((T\pi) [\out (\TF\inl) g, \eta \inr \SigF w^{\kklstar}]^{\klstar} ((T\pi) \out f)^{\istar})^{\istar}\\
&&=&~([(T\inl) (T\pi) [\out (\TF\inl) g, \eta \inr \SigF w^{\kklstar}], \eta \inr]^{\klstar} (T\pi) \out f)^{\istar\istar} &\by{nat. for $\BBT$}\\
&&=&~(T[\id,\inr] [(T\inl) (T\pi) [\out (\TF\inl) g, \eta \inr \SigF w^{\kklstar}], \eta \inr]^{\klstar}\comp(T\pi) \out f)^{\istar} &\by{codiag. for $\BBT$}\\
&&=&~([[(T\pi) \out (\TF\inl) g, \eta \inr],(T\pi) \eta \inr \SigF w^{\kklstar}]^{\klstar} \out f)^{\istar} \\
&&=&~((T\pi) [[\out(\TF\inl) g, \eta \inl \inr],\eta \inr \SigF w^{\kklstar}]^{\klstar} \out f)^{\istar} \\
&&=&~((T\pi) [\out [(\TF\inl) g, \eta^{\nu} \inr],\eta \inr \SigF w^{\kklstar}]^{\klstar} \out f)^{\istar} \\
&&=&~((T\pi) \out [(\TF\inl) g, \eta^{\nu} \inr]^{\kklstar} f)^{\istar}
\end{flalign*}
and therefore
\begin{align*}
	\guard [(\TF \inl) g, \eta \inr]^{\kklstar} \guard f = \guard [(\TF\inl) g, \eta \inr]^{\kklstar} f.
\end{align*}
 
\item\emph{Uniformity.} First, we show uniformity under the assumption that $g$ is guarded. Suppose that $f h = \TF(\id+ h) g$. It is then sufficient to verify that $f^{\istar} h$ satisfies the fixpoint law for $g$. Indeed,
	  \begin{align*}
		  f^{\istar}  h = ~& [\eta^\nu, f^{\istar}]^{\kklstar}  f h \\
		  = ~& [\eta^\nu, f^{\istar}]^{\kklstar} \TF(\id +h) g \\
		  = ~& [\eta^\nu, f^{\istar} h]^{\kklstar}  g.
	  \end{align*}
Now consider the general case. Suppose that again we have $f h = \TF(\id+ h) g$. We prove the following auxiliary identity:
\begin{align}\label{eq:unif_int}
((T\pi)\out f)^{\istar} h = T\bigl(\id+\SigF\TF(\id+h)\bigr)((T\pi)\out g)^{\istar}.
\end{align}
Observe that
\begin{align*}
((T\pi)\out f) h =&\; (T\pi)\out \TF(\id+ h) g\\
=&\; (T\pi) T(\id+h+\SigF\TF(\id+h))\out g\\
=&\; T(\id +h) T( (\id+\SigF\TF(\id +h)) +\id) (T\pi)\out g,
\end{align*}
from which by uniformity of the iteration operator of $\BBT$, we obtain
\begin{align*}
((T\pi)\out f)^{\istar} h =\bigl(T( (\id+\SigF\TF(\id +h)) +\id) (T\pi)\out g\bigr)^{\istar}.
\end{align*}
After transforming the right hand side by naturality of the iteration operator of $\BBT$ we arrive at~\eqref{eq:unif_int}.

Next we prove that $(\guard{f}) h = \TF(\id+ h)\guard{g}$:
\begin{flalign*}
&&(\guard{f}) h =\;& \tuo  T(\inl +\id)((T\pi)\out f)^{\istar} h &\by{definition of~$\guard$}\\
&&=\;& \tuo T(\inl +\id) T(\id+\SigF\TF(\id+h))((T\pi)\out g)^{\istar}&\by{\ref{eq:unif_int}} \\
&&=\;& \tuo T((\id + h) +\SigF\TF(\id + h))  T(\inl +\id)((T\pi)\out g)^{\istar} \\
&&=\;& \TF(\id + h)\tuo  T(\inl +\id)((T\pi)\out g)^{\istar} &\by{Lemma~\ref{lem:final-functor}} \\
&&=\;& \TF(\id + h)\guard{g}.&\by{definition of~$\guard$}
\end{flalign*}
\noindent
We have shown before that for guarded $g$ uniformity holds, and therefore $f^\istar h=(\guard{f})^\istar h=(\guard g)^\istar=g^\istar$.\qed
\end{citemize}
\noqed\end{proof}
\noindent We now deal with the last axiom, \emph{codiagonal}, whose
proof is more involved that that of the other properties and therefore
handled in a separate lemma:
\begin{lem}\label{lem:ext_cod}
  The assignment of $f^\istar:X\to\TF Y$ to $f:X\to \TF(Y+X)$ defined
  by~\eqref{eq:guard_def} satisfies the \emph{codiagonal} law.
\end{lem}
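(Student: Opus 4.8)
The plan is to reduce everything to uniqueness of iterates of guarded morphisms (Uustalu~\cite[Theorem~3.11]{Uustalu03}; clause~\ref{item:it-tab-guarded}) and then to propagate the \emph{codiagonal} axiom down to the base monad $\BBT$. Fix $g:X\to\TF((Y+X)+X)$ and put $m=(g^\istar)^\istar$. Since $\guard(\TF[\id,\inr]\comp g)$ is guarded by construction, $(\TF[\id,\inr]\comp g)^\istar$ is the \emph{unique} $v$ with $[\eta^\nu,v]^\kklstar\comp\guard(\TF[\id,\inr]\comp g)=v$, so it suffices to check that $m$ satisfies the same equation,
\begin{equation*}
[\eta^\nu,m]^\kklstar\comp\guard(\TF[\id,\inr]\comp g)=m,\tag{$\ast$}
\end{equation*}
which, as $\out$ is an isomorphism, we verify after applying $\out$ to both sides.

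I reuse two facts extracted from the proof of \emph{unfolding} in Lemma~\ref{lem:ext_ax}. For an arbitrary $h:X\to\TF(Y'+X)$ and arbitrary $v$, the computation leading to~\eqref{eq:unf_int} never uses $v=f^\istar$, and thus shows more generally that $\out\comp([\eta^\nu,v]^\kklstar\comp\guard h)=T(\id+\SigF{[\eta^\nu,v]^\kklstar})\comp((T\pi)\comp\out\comp h)^\istar$; and the same manipulation for a \emph{guarded} $h$ with $\out\comp h=T(\inl+\id)\comp u$ gives $\out\comp([\eta^\nu,v]^\kklstar\comp h)=T(\id+\SigF{[\eta^\nu,v]^\kklstar})\comp u$. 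Applying the second fact to the guarded morphism $\guard g$ (whose witness is $u=((T\pi)\comp\out\comp g)^\istar$) together with $g^\istar=[\eta^\nu,g^\istar]^\kklstar\comp\guard g$ yields $\out\comp g^\istar=T(\id+\SigF{[\eta^\nu,g^\istar]^\kklstar})\comp((T\pi)\comp\out\comp g)^\istar$. Substituting the first fact into the left-hand side of $(\ast)$ (with $h=\TF[\id,\inr]\comp g$, $v=m$) and the second fact into the right-hand side (using $m=[\eta^\nu,m]^\kklstar\comp\guard(g^\istar)$) cancels the common prefix reasoning and reduces $(\ast)$ to the base-level identity
\begin{equation*}
T\bigl(\id+\SigF{[\eta^\nu,m]^\kklstar}\bigr)\comp\bigl((T\pi)\comp\out\comp(\TF[\id,\inr]\comp g)\bigr)^\istar=T\bigl(\id+\SigF{[\eta^\nu,m]^\kklstar}\bigr)\comp\bigl((T\pi)\comp\out\comp g^\istar\bigr)^\istar.\tag{$\dagger$}
\end{equation*}

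To prove $(\dagger)$, I bring both sides to the common shape $(T\Phi\comp\out\comp g)^\istar$ for suitable cotuples $\Phi:((Y+X)+X)+\SigF\TF((Y+X)+X)\to(Y+\SigF\TF Y)+X$ and compare them. On the left, naturality of $\out$ rewrites $\out\comp(\TF[\id,\inr]\comp g)$ as $T([\id,\inr]+\SigF\TF[\id,\inr])\comp\out\comp g$, and the remaining prefixes are moved into the iterate using naturality of iteration in $\BBT$ in the form $T\phi\comp k^\istar=(T(\phi+\id)\comp k)^\istar$. On the right, one substitutes the formula for $\out\comp g^\istar$ and moves $T(\pi\comp(\id+\SigF{[\eta^\nu,g^\istar]^\kklstar}))$ inside $((T\pi)\comp\out\comp g)^\istar$ by the same law; this leaves a \emph{double} iterate $((T((\phi+\id)\comp\pi)\comp\out\comp g)^\istar)^\istar$, which the \emph{codiagonal} axiom of $\BBT$ collapses to $(T([\phi,\inr]\comp\pi)\comp\out\comp g)^\istar$ (computing $[\id,\inr]\comp(\phi+\id)=[\phi,\inr]$), after which a final application of the naturality law absorbs the outer prefix $T(\id+\SigF{[\eta^\nu,m]^\kklstar})$. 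The two resulting cotuples $\Phi_L,\Phi_R$ agree on the summand $(Y+X)+X$ by a routine coproduct computation, and on $\SigF\TF((Y+X)+X)$ both equal $\inl\comp\inr\comp\SigF{[[\eta^\nu,m],m]^\kklstar}$: for the side coming from $\TF[\id,\inr]\comp g$ because $[\eta^\nu,m]^\kklstar\comp\TF[\id,\inr]=([\eta^\nu,m]\comp[\id,\inr])^\kklstar=[[\eta^\nu,m],m]^\kklstar$ by the monad laws, and for the side coming from $g^\istar$ because $[\eta^\nu,m]^\kklstar\comp[\eta^\nu,g^\istar]^\kklstar=([\eta^\nu,m]^\kklstar\comp[\eta^\nu,g^\istar])^\kklstar=[[\eta^\nu,m],m]^\kklstar$, using $[\eta^\nu,m]^\kklstar\comp g^\istar=m$, i.e.\ \emph{unfolding} of $m=(g^\istar)^\istar$. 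Hence $\Phi_L=\Phi_R$, which gives $(\dagger)$, hence $(\ast)$, hence the codiagonal identity.

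The hard part is this last stretch: orchestrating the coproduct reassociations so that the $\BBT$-level codiagonal applies to exactly the intended pair of nested recursion variables, and spotting the two telescoping identities $[\eta^\nu,m]^\kklstar\comp\TF[\id,\inr]=[[\eta^\nu,m],m]^\kklstar=[\eta^\nu,m]^\kklstar\comp[\eta^\nu,g^\istar]^\kklstar$ that render the cotuples $\Phi_L$ and $\Phi_R$ literally equal. Everything else consists of standard diagram chases with naturality of $\out$, of $\kklstar$, and of $\BBT$-iteration, together with the definitions of $\eta^\nu$, $\pi$, and $\guard$.
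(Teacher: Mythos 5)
Your proof is correct, but it takes a genuinely different route from the paper's. The paper also starts from the observation that $(\TF[\id,\inr]\comp g)^\istar$ is the unique fixpoint of $\gamma\mapsto[\eta^\nu,\gamma]^\kklstar\comp\guard(\TF[\id,\inr]\comp g)$ and aims at the same fixpoint equation $(\ast)$, but it then proceeds by a case analysis on guardedness (first $\TF[\id,\inr]\comp g$ guarded, then $(\TF\pi)\comp g$ guarded, then $g$ guarded, then general $g$), supported by two auxiliary identities proved separately: $\guard(\TF[\id,\inr]\comp\guard g)=\guard(\TF[\id,\inr]\comp g)$ and an unfolding-type identity $\guard(g^\istar)=[\eta^\nu,g^\istar]^\kklstar\comp(\TF\pi)\guard(\TF\pi)g$; the general case is then assembled from these, using in particular the already-established naturality of $\istar$ on $\BBTF$. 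You instead bypass the case distinction entirely: applying $\out$, reusing the two computations from the unfolding proof (which, as you note, hold for arbitrary $v$ in place of $f^\istar$, so extracting them is legitimate), and then normalizing both sides of $(\dagger)$ to the shape $(T\Phi\comp\out\comp g)^\istar$ by naturality of $\out$, the special case $T\phi\comp k^\istar=(T(\phi+\id)\comp k)^\istar$ of naturality in $\BBT$, and one application of the base-level codiagonal, after which the cotuples agree thanks to the telescoping identities $[\eta^\nu,m]^\kklstar\comp\TF[\id,\inr]=[[\eta^\nu,m],m]^\kklstar=[\eta^\nu,m]^\kklstar\comp[\eta^\nu,g^\istar]^\kklstar$ (the latter using unfolding for $m=(g^\istar)^\istar$, which Lemma~\ref{lem:ext_ax} already provides). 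I checked the reduction of $(\ast)$ to $(\dagger)$, the typing needed for the codiagonal application (with $Z=Y+\SigF\TF(Y+X)$), and the componentwise comparison of $\Phi_L$ and $\Phi_R$ on all four summands; everything goes through. What each approach buys: yours is shorter, needs only unfolding at the $\TF$ level plus naturality and codiagonal of $\BBT$ (no appeal to $\BBTF$-level naturality), and concentrates the bookkeeping in one explicit cotuple comparison; the paper's longer case analysis produces intermediate identities about $\guard$ that it can reuse, but at the price of more global plumbing.
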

\begin{proof}
Let $g:X\to\TF((Y+X)+X)$. We shall
  show below that
\begin{align}
\guard(\TF[\id,\inr] \guard g) =&\; \guard(\TF[\id,\inr] g).
\label{eq:guard_nab} 	%
\end{align}

\noindent
Since $\TF[\id,\inr]^\istar g$ is the unique fixpoint of the map
\begin{displaymath}
\gamma\mapsto [\eta^\nu,\gamma]^\kklstar\guard(\TF[\id,\inr] g)
\end{displaymath}
we will be done once we show that $g^{\istar\istar}$ is also a fixpoint of the same map, i.e.\
\begin{align}\label{eq:fix_dd}
g^{\istar\istar} = [\eta^\nu,g^{\istar\istar}]^\kklstar\guard(\TF[\id,\inr] g).
\end{align}
Let us again denote by $\pi: (Y+X)+X\to (Y+X)+X$ the morphism swapping the last
 two components of the coproduct. We consider the following three cases.
\begin{cenumerate}
 \item\label{item:bi-guard} \emph{$\TF[\id,\inr] g$ is guarded.} Then we obtain~\eqref{eq:fix_dd} directly as follows
\begin{flalign*}
&&g^{\istar\istar} =\;& [\eta^\nu,g^{\istar\istar}]^\kklstar g^\istar&\by{fixpoint}\\
&&=\;& [\eta^\nu,g^{\istar\istar}]^\kklstar [\eta^\nu,g^\istar]^\kklstar g&\by{fixpoint}\\
&&=\;& \bigl[[\eta^\nu,g^{\istar\istar}], [\eta^\nu,g^{\istar\istar}]^\kklstar g^\istar\bigr]^\kklstar g\\
&&=\;& \bigl[[\eta^\nu,g^{\istar\istar}], g^{\istar\istar}\bigr]^\kklstar g&\by{fixpoint}\\
&&=\;& [\eta^\nu,g^{\istar\istar}]^\kklstar\TF[\id,\inr] g\\
&&=\;&[\eta^\nu,g^{\istar\istar}]^\kklstar\guard(\TF[\id,\inr] g).
\end{flalign*}
\item\label{item:co-guard} \emph{~$(\TF\pi) g$ is guarded.} E.g.\ let $(\TF\pi) g=\tuo T(\inl+\id) u$. Then
	$\TF[\id,\inr]\guard g$ is also guarded, which is certified by the following calculation, involving the definitions of $g$, $\guard$ and the naturality law for $\argument^\istar$:
\begin{flalign*}
&&\TF[\id,&\inr]\guard g\\
&&=\;& \TF[\id,\inr]\guard((\TF\pi)\tuo T(\inl+\id) u)&\\
&&=\;& \TF[\id,\inr]\tuo T(\inl+\id)\bigl((T\pi)\out(\TF\pi)\tuo T(\inl+\id) u\bigr)^\istar\\
&&=\;& \TF[\id,\inr]\tuo T(\inl+\id)\bigl((T\pi) T(\pi+\SigF\TF\pi) T(\inl+\id) u\bigr)^\istar\\
&&=\;& \TF[\id,\inr]\tuo T(\inl+\id)\bigl(T((\inl+\id)+\id)(T\pi)T(\id+\SigF\TF\pi) u\bigr)^\istar\\
&&=\;& \TF[\id,\inr]\tuo T(\inl+\id)T(\inl+\id)\bigl((T\pi)T(\id+\SigF\TF\pi)  u\bigr)^\istar\\
&&=\;& \TF[\id,\inr]\tuo T(\inl\inl+\id)\bigl((T\pi)T(\id+\SigF\TF\pi)  u\bigr)^\istar\\
&&=\;& \tuo T([\id,\inr]+\SigF\TF[\id,\inr]) T(\inl\inl+\id)\bigl((T\pi)T(\id+\SigF\TF\pi)  u\bigr)^\istar\\
&&=\;& \tuo T(\inl+\SigF\TF[\id,\inr])\bigl((T\pi)T(\id+\SigF\TF\pi)  u\bigr)^\istar\\
&&=\;& \tuo T(\inl+\id)T(\id+\SigF\TF[\id,\inr])\bigl((T\pi)T(\id+\SigF\TF\pi) u\bigr)^\istar.
\end{flalign*}
The proof of~\eqref{eq:fix_dd} now can be completed as follows:
\begin{flalign*}
&&g^{\istar\istar} =\;&(\guard g)^{\istar\istar}&\by{definition of $\argument^\istar$}\\
&&=\;& [\eta^\nu,(\guard g)^{\istar\istar}]^\kklstar\guard( \TF[\id,\inr]\guard g)&\by{Clause~\eqref{item:bi-guard}}\\
&&=\;& [\eta^\nu,g^{\istar\istar}]^\kklstar\guard( \TF[\id,\inr] g).&\by{\ref{eq:guard_nab}}
\end{flalign*}
\item\label{item:guard} \emph{~$g$ is guarded.} Let $h=(\TF\pi)\guard(\TF\pi)g$. It is easy to see that $h$ is guarded. We use the following identity
\begin{align}\label{eq:guard_unfold}
\guard g^\istar = [\eta^\nu, g^\istar]^\kklstar h
\end{align}
whose proof runs as follows. Let $g = \tuo T(\inl + \id) u$ for some $u$ and observe that $\pi \inl = (\inl + \id)$. We
apply $\out$ to the right-hand side of the equation,
\begin{flalign*}
	&&\out [\eta^{\nu},\,&g^{\istar}]^{\kklstar} (\TF\pi) \guard(\TF\pi)g \\
	&&&=[\out [\eta^{\nu}, g^{\istar}], \eta \inr \SigF[\eta^{\nu}, g^{\istar}]^{\kklstar}]^{\klstar} \\
    &&&\qquad \out (\TF\pi) \tuo T(\inl + \id) ((T\pi) \out (\TF \pi) g)^{\istar}&\by{defn.~of~$\argument^\istar$, $\guard$} \\
	&&&=[\out [\eta^{\nu}, g^{\istar}]\pi \inl, \eta \inr \SigF([\eta^{\nu},g^{\istar}]^{\kklstar}
	\TF\pi)]^{\klstar}((T\pi) \out (\TF\pi) g)^{\istar} \\
	&&&=[\out [\eta^{\nu} \inl, g^{\istar}], \eta \inr \SigF([\eta^{\nu},g^{\istar}]^{\kklstar}
	\TF\pi)]^{\klstar}\\
    &&&\qquad ((T\pi) T(\pi + \SigF\TF\pi) \out g)^{\istar} \\
	&&&=([(T\inl) [\out [\eta^{\nu} \inl, g^{\istar}], \eta \inr ([\eta^{\nu},g^{\istar}]^{\kklstar} \SigF\TF\pi)],
	\eta \inr]^{\klstar} \\&&&\qquad (T\pi) T(\pi + \SigF\TF\pi) \out g)^{\istar} &\by{naturality} \\
	&&&=([[(T\inl)\out[\eta^{\nu} \inl, g^{\istar}], \eta \inl \inr \SigF([\eta^{\nu},g^{\istar}]^{\kklstar}
	\TF\pi)], \eta \inr]^{\klstar} \\&&&\quad (T\pi) T(\pi +\SigF\TF\pi) \out g)^{\istar} \\
	&&&=([[(T\inl)\out[\eta^{\nu} \inl, g^{\istar}], \eta \inr], \eta \inl \inr \SigF([\eta^{\nu},g^{\istar}]^{\kklstar}
	\TF\pi)]^{\klstar} \\&&&\qquad T(\pi + \SigF\TF\pi) \out g)^{\istar} &\by{defn.\ $T\pi$} \\
	&&&=([[(T\inl)\out[\eta^{\nu} \inl, g^{\istar}], \eta \inr]\pi, \\
    &&&\qquad\eta \inl \inr \SigF([\eta^{\nu},g^{\istar}]^{\kklstar}
	 (\TF\pi)(\TF\pi))]^{\klstar} T(\inl + \id) u)^{\istar} &\by{$g$ guarded} \\
	&&&=([[(T\inl)\out[\eta^{\nu} \inl, g^{\istar}], \eta \inr] \pi \inl, \eta \inl \inr
	\SigF[\eta^{\nu},g^{\istar}]^{\kklstar}]^{\klstar} u)^{\istar} \\
	&&&=([[\eta \inl \inl \inl, \eta \inr], \eta \inl \inr \SigF[\eta^{\nu},g^{\istar}]^{\kklstar}]^{\klstar}
	u)^{\istar} \\
	&&&=([\eta (\inl \inl + \id), \eta \inl \inr \SigF[\eta^{\nu},g^{\istar}]^{\kklstar}]^{\klstar}
	u)^{\istar}.
\intertext{On the other hand, applying $\out$ to the left-hand side yields the same result:}
	&&&~\out \guard (g^{\istar})& \\
	&&&=T(\inl + \id) ((T\pi) \out (g^{\istar}))^{\istar} \\
	&&&=([(T\inl) \eta (\inl + \id), \eta \inr]^{\klstar} (T\pi) \out (g^{\istar}))^{\istar} &\by{naturality} \\
	&&&=([ [\eta \inl \inl \inl, \inl \inr], \eta \inr]^{\klstar} (T\pi) \out [\eta^{\nu}, g^{\istar}]^{\kklstar} g)^{\istar} \\
	&&&=([ \eta (\inl \inl + \id), \eta \inl \inr]^{\klstar} [\out [\eta^{\nu}, g^{\istar}], \eta \inr
	\SigF[\eta^{\nu},g^{\istar}]^{\kklstar}]^{\klstar} \out g)^{\istar} &\by{defn.\ $T\pi$} \\
	&&&=([ \eta (\inl \inl + \id), \eta \inl \inr]^{\klstar} [ [\eta \inl, \out g^{\istar}], \eta \inr
	\SigF[\eta^{\nu},g^{\istar}]^{\kklstar}]^{\klstar} \\&&&\qquad T(\inl + \id) u)^{\istar} &\by{$g$ guarded} \\
	&&&=([ \eta (\inl \inl + \id), \eta \inl \inr]^{\klstar} [ \eta \inl, \eta \inr
	\SigF[\eta^{\nu},g^{\istar}]^{\kklstar}]^{\klstar} u)^{\istar} \\
	&&&=([ \eta (\inl \inl + \id), \eta \inl \inr \SigF[\eta^{\nu},g^{\istar}]^{\kklstar}]^{\klstar} u)^{\istar}.
\intertext{Then the goal can be obtained as follows. First, observe the following:}
&&g^{\istar\istar}\, &= ([\eta^\nu, g^\istar]^\kklstar h)^\istar&\by{\ref{eq:guard_unfold}}\\
&&&= ([[\eta^\nu\inl,\eta^\nu\inr], g^\istar]^\kklstar h)^\istar\\
&&&= ([[\eta^\nu\inl, g^\istar], \eta\inr]^\kklstar \guard (\TF\pi)\, g)^\istar&\by{defn.~of~$\pi$}\\
&&&= ( (\TF[\id,\inr]) [[\eta^\nu\inl\inl, \TF\inl g^\istar], \eta\inr]^\kklstar \guard (\TF\pi)\, g)^\istar\\
&&&= ([[\eta^\nu\inl\inl, \TF\inl g^\istar], \eta\inr]^\kklstar \guard (\TF\pi)\, g)^{\istar\istar}&\by{Clause~\eqref{item:co-guard}}\\
&&&= ([\TF\inl[\eta^\nu\inl, g^\istar], \eta\inr]^\kklstar \guard (\TF\pi)\, g)^{\istar\istar}\\
&&&= ([\eta^\nu\inl, g^\istar]^\kklstar(\guard (\TF\pi)\, g)^\istar)^\istar&\by{naturality}\\
&&&= ([\eta^\nu\inl, g^\istar]^\kklstar((\TF\pi)\, g)^\istar)^\istar.&\by{defn.~of~$\argument^\istar$}\\
&&&= [\eta^\nu,([\eta^\nu\inl, g^\istar]^\kklstar((\TF\pi)\, g)^\istar)^\istar]^\kklstar [\eta^\nu\inl, g^\istar]^\kklstar((\TF\pi)\, g)^\istar&\by{fixpoint}\\
&&&= [\eta^\nu,g^{\istar\istar}]^\kklstar [\eta^\nu\inl, g^\istar]^\kklstar((\TF\pi)\, g)^\istar\\
&&&= [\eta^\nu, [\eta^\nu,g^{\istar\istar}]^\kklstar g^\istar]^\kklstar((\TF\pi)\, g)^\istar\\
&&&= [\eta^\nu, g^{\istar\istar}]^\kklstar ((\TF\pi)\, g)^\istar.&\by{fixpoint}
\end{flalign*}
It is easy to see that $((\TF\pi)\, g)^\istar$ is guarded, and hence, by the previous calculation,
$g^{\istar\istar}=((\TF\pi)\, g)^{\istar\istar}$. Finally, by Clause~\eqref{item:co-guard}, $((\TF\pi)\,
g)^{\istar\istar}=( (\TF[\id,\inr]) (\TF\pi)\, g)^\istar=(\TF[\id,\inr] g)^\istar$.
\item \emph{~$g$ is unrestricted.} Then,
\begin{flalign*}
&&g^{\istar\istar}=&\,(\guard g)^{\istar\istar}\\
&&=&\,((\TF[\id,\inr])\guard g)^\istar&\by{Clause~\eqref{item:guard}}\\
&&=&\,(\guard(\TF[\id,\inr])\guard(\TF\pi) g)^\istar\\
&&=&\,(\guard(\TF[\id,\inr])(\TF\pi) g)^\istar&\by{\ref{eq:guard_nab}}\\
&&=&\,(\TF[\id,\inr] g)^\istar
\end{flalign*}
and we are done.
It remains to prove \eqref{eq:guard_nab}. Observe that by definiton,
\begin{flalign*}
	&&&\guard(\TF[\id,\inr])\guard(\TF\pi) g\\
	&&=&~\guard\TF[\id,\inr] \tuo T(\inl+\id) ( (T\pi) \out g)^{\istar} \\
	&&=&~\tuo T(\inl+\id) ( (T\pi) \out \TF[\id,\inr] \tuo T(\inl + \id) ( (T\pi) \out g)^{\istar})^{\istar}
\intertext{Let us further transform the expression after $\tuo T(\inl+\id)$:}
	&&&~( (T\pi) T([\id,\inr] + \SigF\TF[\id,\inr]) T(\inl +\id) ( (T\pi) \out g)^{\istar})^{\istar}\\
	&&=&~( (T\pi) T(\id + \SigF\TF[\id,\inr]) ( (T\pi) \out g)^{\istar})^{\istar} \\
	&&=&~(  ([T\inl \pi \eta (\id + \SigF\TF[\id,\inr]),\eta \inr]^{\klstar} (T\pi) \out
	g)^{\istar})^{\istar}&\by{naturality} \\
	&&=&~([(T\pi) \eta (\id + \SigF\TF[\id,\inr]),(T\pi) \eta \inl \inr]^{\klstar} (T\pi) \out g)^{\istar} &\by{codiagonal} \\
	&&=&~( (T\pi) [ [\eta \inl, \eta \inl \inr], \eta \inr \SigF\TF[\id,\inr]]^{\klstar} \out g)^{\istar} \\
	&&=&~( (T\pi) [ \out (\eta^{\nu} [\id, \inr]), \eta \inr \SigF\TF[\id, \inr]]^{\klstar} \out g)^{\istar} \\
	&&=&~( (T\pi) \out (\TF[\id,\inr]) g)^{\istar}.
\intertext{Therefore,}
	&&&~\guard(\TF[\id,\inr])\guard(\TF\pi) g \\
	&&=&~\tuo T(\inl+\id) ( (T\pi) \out (\TF[\id,\inr]) g)^{\istar} \\
	&&=&~\guard \TF[\id,\inr] g
\end{flalign*}
and we are done.\qed
\pacman{
Let $h=(\TF\pi)\guard(\TF\pi)g$. Then $(\TF\pi) h$ is guarded. Using the previous clause we obtain
\begin{flalign*}
&&h^{\istar\istar} =\;& [\eta^\nu,h^{\istar\istar}]^\kklstar\guard\bigl(\TF(\id+\nabla) \guard h\bigr)&\by{Clause~2}\\
&&=\;& [\eta^\nu,h^{\istar\istar}]^\kklstar\guard\bigl(\TF(\id+\nabla) \guard (\TF\pi)\guard(\TF\pi)g\bigr)\\
&&=\;& [\eta^\nu,h^{\istar\istar}]^\kklstar\guard\bigl(\TF(\id+\nabla) (\TF\pi)\guard(\TF\pi)g\bigr)&\by{\ref{eq:guard_nab}}\\
&&=\;& [\eta^\nu,h^{\istar\istar}]^\kklstar\guard\bigl(\TF(\id+\nabla) \guard(\TF\pi)g\bigr)\\
&&=\;& [\eta^\nu,h^{\istar\istar}]^\kklstar\guard\bigl(\TF(\id+\nabla) (\TF\pi)g\bigr)&\by{\ref{eq:guard_nab}}\\
&&=\;& [\eta^\nu,h^{\istar\istar}]^\kklstar\guard\bigl(\TF(\id+\nabla) g\bigr)\\
&&=\;& [\eta^\nu,h^{\istar\istar}]^\kklstar\guard\bigl(\TF(\id+\nabla) \guard g\bigr).&\by{\ref{eq:guard_nab}}.
\end{flalign*}
\newcommand{\lshort}{\mathsf{l}} \newcommand{\rshort}{\mathsf{r}}
By~\eqref{eq:fix_dd}, we are done once we show that
\begin{equation}
h^{\istar\istar}=g^{\istar\istar}.\label{eq:inner-guard2}
\end{equation}
This will heavily involve nesting of coproduct injections $\inl$,
$\inr$, which we therefore abbreviate to $\lshort$, $\rshort$,
respectively. Moreover, we generally write $\guard' f$ for
$\TF\pi\guard \TF\pi f$. To begin, we note some easily proved
fixpoint laws for $\istar$, $\guard$, and $\guard'$:
\begin{align}
  \label{eq:unfold-istar}
  \out f^\istar & =
  (T[[\lshort\lshort,\rshort],\lshort\rshort \SigF([\eta^\nu,f^\istar]^\klstar)]
  \out f)^\istar\\
  \label{eq:unfold-guard}
  \out\guard f & =
  (T[[\lshort\lshort\lshort,\rshort],\lshort\rshort]\out f)^\istar\\
  \label{eq:unfold-guard2}
  \out\guard' f & =
  (T[[[\lshort\lshort\lshort\lshort,\rshort],\lshort\lshort\rshort],
  \lshort\rshort]\out f)^\istar.
\end{align}
It is easy to see that $h^\istar=(\guard' g)^\istar$ is guarded. It
therefore suffices to show that $g^{\istar\istar}$ satisfies the
fixpoint law for $(\guard' g)^{\istar\istar}$,
\begin{equation}
  \label{eq:unfold-inner-guard2}
  [\eta^\nu,g^{\istar\istar}](\guard' g)^\istar =
  g^{\istar\istar}.
\end{equation}
Moreover, note that as a special case of naturality, we generally have
\begin{equation*}
  Tf(u)^\istar = (T(f+\id)u)^\istar.
\end{equation*}
We proceed to prove \eqref{eq:unfold-inner-guard2} by applying $\out$
to both sides and unfolding. We start with the left hand side, which
by \eqref{eq:unfold-istar} equals
\begin{equation}
  \label{eq:istarr-lhs-unfold}
  (T[[\lshort\lshort,\rshort],\lshort\rshort \SigF([\eta^\nu,g^{\istar\istar}]^*)]
  \out(\guard'g)^\istar)^\istar.
\end{equation}
For readability, we now focus on the subterm $\out(\guard'g)^\istar$:
\begin{flalign*}
  &&\out(\guard'g)^\istar & =
  (T[[\lshort\lshort,\rshort],\lshort\rshort \SigF([\eta^\nu,g^\istar]^\klstar)]
  \out\guard'g)^\istar
  & \by{by \eqref{eq:unfold-istar}}\\
  &&& =(T[[\lshort\lshort,\rshort],\lshort\rshort \SigF([\eta^\nu,g^\istar]^\klstar)]
  (T[[[\lshort\lshort\lshort\lshort,\rshort],\lshort\lshort\rshort],
  \lshort\rshort]\out g)^\istar)^\istar
  & \by{by \eqref{eq:unfold-guard2}}\\
  &&& =(T([[\lshort\lshort,\rshort],\lshort\rshort \SigF([\eta^\nu,g^\istar]^\klstar)]+\id_X)\\
  &&& \qquad\qquad\qquad\qquad
  T[[[\lshort\lshort\lshort\lshort,\rshort],\lshort\lshort\rshort],
  \lshort\rshort]\out g)^{\istar\istar}
  & \by{naturality}\\
  &&& =
  (T[[[\lshort\lshort\lshort\lshort,\rshort],\lshort\rshort],\lshort\lshort
  \rshort \SigF ([\eta^\nu,g^\istar]^\klstar)]\out g)^{\istar\istar}\\
  &&& =
  (T[[[\lshort\lshort\lshort,\rshort],\rshort],\lshort\rshort
  \SigF([\eta^\nu,g^\istar]^\klstar)]\out g)^\istar.
  & \by{codiagonal}
 \end{flalign*}
 Applying naturality, we obtain that the entire left hand side equals
 \begin{align*}
 & (T([[\lshort\lshort,\rshort],\lshort\rshort
 \SigF([\eta^\nu,g^{\istar\istar}]^\klstar)]+\id)
 T[[[\lshort\lshort\lshort,\rshort],\rshort],\lshort\rshort
 \SigF([\eta^\nu,g^\istar]^\klstar)]\out g)^{\istar\istar}\\
 & =
 (T[[[\lshort\lshort\lshort,\rshort],\rshort],\lshort\lshort\rshort
 \SigF([\eta^\nu,g^{\istar\istar}]^\klstar)
 \SigF([\eta^\nu,g^\istar]^\klstar)]\out g)^{\istar\istar}\\
 & =
 (T[[[\lshort\lshort\lshort,\rshort],\rshort],\lshort\lshort\rshort
 \SigF([\eta^\nu,g^{\istar\istar}]^\klstar)
 \SigF([\eta^\nu,g^\istar]^\klstar)]\out g)^\istar.
 && \by{codiagonal}
 \end{align*}
 For comparison, we unfold the right hand side using
 \eqref{eq:unfold-istar} (twice) and naturality, obtaining
 \begin{align*}
 \out g^{\istar\istar} & =
 (T([[\lshort\lshort,\rshort],\lshort\rshort
 \SigF([\eta^\nu,g^{\istar\istar}]^\klstar)]+\id)
 T[[\lshort\lshort,\rshort],\lshort\rshort
 \SigF([\eta^\nu,g^\istar]^\klstar)]\out g)^{\istar\istar}\\
 & =
 (T[[[\lshort\lshort\lshort,\lshort\rshort],\rshort],\lshort\lshort\rshort
 \SigF([\eta^\nu,g^{\istar\istar}]^\klstar)
 \SigF([\eta^\nu,g^\istar]^\klstar)]\out g)^{\istar\istar}
 \end{align*}
 which by another application of the codiagonal law equals the
 expansion of the left hand side.

 It remains to prove \eqref{eq:guard_nab}, equivalently unfolded,
 i.e.\ with $\out$ applied to both sides. To be precise about
 bracketing of coproducts, we write $[\id_{Y+X},\rshort]$ in place of
 $\id+\nabla$. We then have as the unfolding of the left hand side
 \begin{flalign*}
 &&& \out\guard(\TF[\id_{Y+X},\rshort] \guard g) \\
 &&& = (T[[\lshort\lshort\lshort,\rshort],\lshort\rshort]\out
 \TF[\id_{Y+X},\rshort] \guard g)^\istar & \by{by
   \eqref{eq:unfold-guard}}\\
 &&& = (T[[\lshort\lshort\lshort,\rshort],\lshort\rshort]
 T([\id_{Y+X},\rshort]+\SigF\TF[\id_{Y+X},\rshort]) \out\guard
 g)^\istar
 & \by{naturality of $\out$}\\
 &&& =
 (T[[[\lshort\lshort\lshort,\rshort],\rshort],\lshort\rshort
 \SigF\TF[\id_{Y+X},\rshort]] \out\guard g)^\istar\\
 &&& =
 (T[[[\lshort\lshort\lshort,\rshort],\rshort],\lshort\rshort
 \SigF\TF[\id_{Y+X},\rshort]]  (T[[\lshort\lshort\lshort,\rshort],\lshort\rshort]\out g)^\istar)^\istar
 & \by{by
   \eqref{eq:unfold-guard}}\\
 &&& =
 ((T[[[\lshort\lshort\lshort,\rshort],\rshort],\lshort\rshort
 \SigF\TF[\id_{Y+X},\rshort]]+\id_X) T[[\lshort\lshort\lshort,\rshort],\lshort\rshort]\out g)^{\istar\istar}
 & \by{naturality}\\
 &&& = (T[[[\lshort\lshort\lshort\lshort,\lshort\rshort],\rshort],\lshort\lshort\rshort
 \SigF\TF[\id_{Y+X},\rshort]]\out g)^{\istar\istar}\\
 &&& = (T[[[\lshort\lshort\lshort,\rshort],\rshort],\lshort\rshort
 \SigF\TF[\id_{Y+X},\rshort]]\out g)^\istar.
 & \by{codiagonal}
 \end{flalign*}
For comparison, we calculate the unfolding of the right hand side:
\begin{flalign*}
  &&& \out\guard(\TF(\id+\nabla) g)\\
  &&& = (T[[\lshort\lshort\lshort,\rshort],\lshort\rshort]\out
 \TF[\id_{Y+X},\rshort] g)^\istar && \by{by
   \eqref{eq:unfold-guard}}\\
 &&& = (T[[\lshort\lshort\lshort,\rshort],\lshort\rshort]
 T([\id_{Y+X},\rshort]+\SigF\TF[\id_{Y+X},\rshort]) \out
 g)^\istar
 && \by{naturality of $\out$}\\
 &&& =
 (T[[[\lshort\lshort\lshort,\rshort],\rshort],\lshort\rshort
 \SigF\TF[\id_{Y+X},\rshort]] \out g)^\istar,
\end{flalign*}
which is thus identical to that of the left hand side.
}
\end{cenumerate}
\noqed\end{proof}
\begin{lem}\label{lem:ext_str}
  The assignment of $f^\istar:X\to\TF Y$ to $f:X\to \TF(Y+X)$ defined
  by~\eqref{eq:guard_def} is compatible with strength, i.e.
\begin{align*}
\tau^\nu (\id \times f^{\istar}) = ((\TF\dist) \tau^\nu (\id \times f))^{\istar}.
\end{align*}
\end{lem}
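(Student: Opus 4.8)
The plan is to adopt the two-step pattern already used for the other
axioms in Lemmas~\ref{lem:ext_ax} and~\ref{lem:ext_cod}: first settle
the claim for \emph{guarded} $f$ using uniqueness of iterates of guarded
morphisms (Theorem~\ref{thm:cr_elg}(\ref{item:it-tab-guarded})), and
then reduce the general case by showing that the guardification operator
$\guard$ commutes with the relevant operation. A preliminary fact used
repeatedly is the $\BBTF$-analogue of~\eqref{eq:str_dist}: since
$\id\times[\eta^\nu,g]=[\id\times\eta^\nu,\id\times g]\comp\dist$, which
is immediate from $\dist^{-1}=[\id\times\inl,\id\times\inr]$,
compatibility of $\tau^\nu$ with $\eta^\nu$ ({\sc str$_3$} for $\TF$,
established in Theorem~\ref{lem:kl_dec}) gives
$\tau^\nu\comp(\id\times[\eta^\nu,g])=[\eta^\nu,\tau^\nu\comp(\id\times g)]\comp\dist$
for every $g:A\to\TF B$.

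A second preliminary observation is that
$(\TF\dist)\comp\tau^\nu\comp(\id\times f)$ is guarded whenever $f$ is:
if $\out\comp f=T(\inl+\id)\comp u$, then unfolding
$\out\comp(\TF\dist)\comp\tau^\nu\comp(\id\times f)$ by naturality of
$\out$, the characteristic equation~\eqref{eq:str_def} for $\tau^\nu$,
and naturality of $\tau$, and then using $\dist\comp(\id\times\inl)=\inl$
together with naturality of $\dist$, yields
$\out\comp(\TF\dist)\comp\tau^\nu\comp(\id\times f)=T(\inl+\id)\comp T(\id+\SigF(\TF\dist\comp\tau^\nu)\comp\rho)\comp(T\dist)\comp\tau\comp(\id\times u)$,
which exhibits guardedness.

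Next, for \emph{arbitrary} $f$, I would show that
$\tau^\nu\comp(\id\times f^\istar)$ satisfies the unfolding equation
defining $((\TF\dist)\comp\tau^\nu\comp(\id\times f))^\istar$: starting
from the already-established unfolding law
$f^\istar=[\eta^\nu,f^\istar]^\kklstar\comp f$ (Lemma~\ref{lem:ext_ax}),
and using functoriality of products, {\sc str$_4$} for $\TF$
(Theorem~\ref{lem:kl_dec}), and the preliminary identity above, the same
computation as in the strength part of the proof of
Theorem~\ref{thm:co_elg} gives
$\tau^\nu\comp(\id\times f^\istar)=[\eta^\nu,\tau^\nu\comp(\id\times f^\istar)]^\kklstar\comp(\TF\dist)\comp\tau^\nu\comp(\id\times f)$.
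For guarded $f$, $(\TF\dist)\comp\tau^\nu\comp(\id\times f)$ is guarded
by the previous paragraph, so
Theorem~\ref{thm:cr_elg}(\ref{item:it-tab-guarded}) upgrades this to
$\tau^\nu\comp(\id\times f^\istar)=((\TF\dist)\comp\tau^\nu\comp(\id\times f))^\istar$.
For general $f$, using $f^\istar=(\guard f)^\istar$ from~\eqref{eq:guard_def},
it then suffices to prove the commutation identity
\begin{equation*}
  (\TF\dist)\comp\tau^\nu\comp(\id\times\guard f)=\guard\bigl((\TF\dist)\comp\tau^\nu\comp(\id\times f)\bigr),
\end{equation*}
for then, since its left-hand side is guarded, the guarded case applied
to $\guard f$ gives
$\tau^\nu\comp(\id\times f^\istar)=\tau^\nu\comp(\id\times(\guard f)^\istar)=((\TF\dist)\comp\tau^\nu\comp(\id\times\guard f))^\istar=(\guard((\TF\dist)\comp\tau^\nu\comp(\id\times f)))^\istar=((\TF\dist)\comp\tau^\nu\comp(\id\times f))^\istar$.

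Proving this commutation identity is the main obstacle. I would prove it
by applying the invertible map $\out$ to both sides. The right-hand side
unfolds, via the definition of $\guard$ and naturality of $\out$, to
$T(\inl+\id)\comp((T\pi)\comp\out\comp(\TF\dist)\comp\tau^\nu\comp(\id\times f))^\istar$
with $\pi$ the rotation from the definition of $\guard$, and then, by
naturality of $\out$ and~\eqref{eq:str_def}, the iterand becomes an
explicit $\BC$-morphism applied to $\tau\comp(\id\times\out\comp f)$. The
left-hand side unfolds via naturality of $\out$, \eqref{eq:str_def},
naturality of $\tau$, the definition of $\guard f$, and crucially
\emph{compatibility of strength with iteration in $\BBT$} --- part of
Definition~\ref{defn:elgot}, available since $\BBT$ is a complete Elgot
monad, and converting $\tau\comp(\id\times((T\pi)\comp\out\comp f)^\istar)$
into an $\istar$ in $\BBT$ --- likewise into an iterate of a
$\BC$-morphism applied to $\tau\comp(\id\times\out\comp f)$. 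After
absorbing the remaining ``pure'' prefixes into the respective iterates
by naturality of $\istar$ in $\BBT$ (in the form
$Tp\comp g^\istar=(T(p+\id)\comp g)^\istar$), both sides take the form
$(Tq\comp\tau\comp(\id\times\out\comp f))^\istar$, and it remains to
check equality of the two morphisms $q$. This is a coherence statement
in the distributive category relating $\dist$, $\delta=(\id+\rho)\comp\dist$,
the strength $\rho$ of $\Sigma$, $\SigF\tau^\nu$, and the two
ternary-coproduct rotations (one on $(Y+X)+\SigF\TF(Y+X)$, one on the
corresponding object over $Z\times Y$ and $Z\times X$); it is checked by
splitting the source $Z\times((Y+X)+\SigF\TF(Y+X))$ along $\dist$ into
its three summands and tracing $Z\times Y$, $Z\times X$, and
$Z\times\SigF\TF(Y+X)$ through both composites, which on the first two
summands is pure coproduct bookkeeping and on the third reduces to
agreement after $\rho$ and $\SigF(\TF\dist\comp\tau^\nu)$. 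This
verification is mechanical but lengthy, and is where essentially all the
work of the lemma resides.
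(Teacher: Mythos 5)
Your proposal is correct and follows essentially the same route as the paper's proof: settle the guarded case by showing $\tau^\nu(\id\times f^\istar)$ satisfies the unfolding equation (via {\sc str$_4$} for $\TF$ and the $\TF$-analogue of~\eqref{eq:str_dist}) and invoking uniqueness of solutions for guarded morphisms, then reduce the general case to the commutation identity $(\TF\dist)\tau^\nu(\id\times\guard f)=\guard((\TF\dist)\tau^\nu(\id\times f))$, proved using compatibility of strength with iteration in $\BBT$, naturality of $\istar$, and a distributivity/coproduct coherence check. The only difference is presentational: you unfold both sides of the commutation identity equationally and absorb all pure prefixes into a single iterate, whereas the paper organizes the same calculation as a three-part diagram chase, so the content coincides.
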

\begin{proof}
Let $f$ be guarded with $\out f = T(\inl + \id) u$. Then, $f' = (\TF \dist) \tau^{\nu} (\id \times f)$ is also
guarded with $\out f' = T(\inl + \id)  T(\id + \SigF((\TF \dist) \tau^\nu)) (T\delta) \tau (\id \times u)$ where
$\delta$ is as in Theorem~\ref{lem:kl_dec} (besides guardedness of $f$, the proof of this equation uses naturality of
$\out$ and the definitions of $\tau$ and $\dist$).
The following calculation shows that $\tau^\nu (\id \times f^{\istar})$ satisfies the fixpoint law for $((\TF\dist) \tau^\nu
  (\id \times f))^\istar$:
\begin{flalign*}
	&&&~\tau^\nu (\id \times f^{\istar}) \\
	&&=&~\tau^\nu (\id \times [\eta^\nu, f^{\istar}]^{\kklstar} f) \\
	&&=&~\tau^\nu (\id \times [\eta^\nu, f^{\istar}]^{\kklstar}) (\id \times f) \\
	&&=&~(\tau^\nu (\id \times [\eta^\nu, f^{\istar}]))^{\kklstar} \tau^\nu (\id \times f) &\by{\sc str$_4$} \\
	&&=&~([\eta^\nu, \tau^\nu (\id \times f^{\istar})] \dist)^{\kklstar} \tau^\nu (\id \times f) &\by{\ref{eq:str_dist}} \\
	&&=&~[\eta^\nu, \tau^\nu (\id \times f^{\istar})]^{\kklstar} (\TF \dist) \tau^\nu (\id \times f) \\
	&&=&~[\eta^\nu, \tau^\nu (\id \times f^{\istar})]^{\kklstar} \guard{(\TF \dist) \tau^\nu (\id \times f)},
\end{flalign*}
and hence $\tau^\nu (\id \times f^{\istar})$ and $((\TF \dist) \tau^\nu(\id \times f))^\istar$ are equal.

The general case reduces to the
guarded case by means of the equation
\begin{equation}\label{eq:tab_str_iter}
	(\TF \dist) \tau^{\nu} (\id \times \guard f) = \guard ((\TF \dist)
	\tau^{\nu} (\id \times f)),
\end{equation}
as follows:
\begin{flalign*}
&&\tau^\nu (\id \times f^{\istar})
=&\,\tau^\nu (\id \times (\guard f)^{\istar})&\by{definition of~$\argument^\istar$}\\
&&=&\, ((T\dist) \tau^\nu(\id \times\guard f))^\istar\\
&&=&\, (\guard (\TF \dist \tau^{\nu} (\id \times f)))^\istar&\by{\ref{eq:tab_str_iter}}\\
&&=&\, (\TF \dist \tau^{\nu} (\id \times f))^\istar.&\by{definition of~$\argument^\istar$}
\end{flalign*}
We show~\eqref{eq:tab_str_iter} by establishing commutativity of the following diagram where $Q = C \times B + C \times A$ 
(the identity in question is read from the border):

\bigskip
\begin{equation*}
\xymatrix@R25pt@C65pt@M5pt{
&
C \times A
	\ar@/_1pc/[dl]_*!/d6pt/-[@!15]{\scriptstyle\qquad\id \times ((T\pi) \out f)^{\istar}\quad}
	\ar@/^1pc/[dr]^-[@!-18]{\quad ((T\pi) \out (\TF \dist) \tau^{\nu} (\id \times f))^{\istar}\qquad}
&
\\
C \times T(B+\SigF\TF(B+A))
	\ar[d]_-{\id \times T(\inl + \id)}
	\ar[rr]^-{T(\id + \SigF\TF \dist) T(\id + \SigF\tau^{\nu}) (T\delta) \tau}
&
&
T(C\times B + \SigF\TF Q)
	\ar[d]^-{T(\inl + \id)}
\\
C \times T( (B+A) + \SigF\TF(B+A))
	\ar@/_1pc/[dr]_-[@!-17]{\qquad(\TF \dist) \tau^{\nu}(\id \times \tuo)}
	\ar[rr]^-{T(\dist + \SigF\TF \dist) T(\id + \SigF\tau^{\nu}) (T\delta) \tau}
&
&
T( Q + \SigF\TF Q)
	\ar@/^1pc/[dl]^*!/u4pt/-[@!10]{\scriptstyle\tuo}
&
\\
&
\TF Q
&
}
\end{equation*}

\bigskip\bigskip
\noindent
The bottom triangle commutes as follows:
\begin{align*}
	&~(\TF\dist) \tau^\nu (\id \times \tuo) \\
	=&~(\TF\dist) \tuo T(\id + \SigF\tau^\nu) (T\delta) \tau (\id \times \out) (\id
	\times \tuo) \\
	=&~\tuo \out (\TF\dist) \tuo T(\id + \SigF\tau^\nu) (T\delta) \tau \\
	=&~\tuo T(\dist + \SigF\TF\dist) T(\id + \SigF\tau^\nu) (T\delta) \tau.
\end{align*}
The middle square commutes by properties of~$\tau$, $\dist$ and~$\delta$:
\begin{flalign*}
	&&&~T(\dist + \SigF(\TF\dist \tau^\nu)) (T\delta) \tau (\id \times T(\inl + \id)) \\
	&&=&~T(\dist + \SigF(\TF\dist \tau^\nu))  (T\delta) T(\id \times (\inl + \id)) \tau &\by{naturality~of~$\tau$}\\
	&&=&~T(\dist + \SigF(\TF\dist \tau^\nu)) T( (\id \times \inl) + \id) (T\delta) \tau &\by{naturality~of~$\delta$}\\
	&&=&~T( \dist (\id \times \inl) + \id) T(\id + \SigF(\TF\dist \tau^\nu)) (T\delta) \tau \\
	&&=&~T(\inl + \id) T(\id + \SigF(\TF\dist \tau^\nu)) (T\delta) \tau.
\end{flalign*}
This leaves us with the top triangle. Let $\alpha = (\id + \SigF(\TF \dist \tau^{\nu})) \delta$. We apply the
assumption that $\tau$ is compatible with iteration to $(T\alpha) \tau (\id \times ((T\pi) \out f)^{\istar})$
and further calculate as follows:
\begin{flalign*}
	&&&~(T\alpha) \tau (\id \times ((T\pi) \out f)^{\istar})&\\
	&&=&~(T\alpha) (T\dist \tau (\id \times (T\pi) \out f))^{\istar} \\
	&&=&~(T(\alpha + \id) (T\dist) \tau (\id \times (T\pi) \out f))^{\istar} \tag{naturality} \\
	&&=&~(T(\alpha + \id) (T\dist) T(\id \times \pi)
	\tau (\id \times \out) (\id \times f))^{\istar} &\by{naturality of~$\tau$}\\
  &&=&~(T((\id + \SigF(\TF \dist \tau^\nu) + \id) \\
  &&&\quad T(\delta + \id) (T\dist) T(\id \times \pi)
  \tau (\id \times \out) (\id \times f))^{\istar} &\by{definition of~$\alpha$}\\
  &&=&~(T((\id + \SigF(\TF \dist \tau^\nu) + \id) \\
  &&&\quad T((\id+\rho) + \id) T(\dist + \id) (T\dist) T(\id \times \pi)\\
  &&&\quad\tau (\id \times \out) (\id \times f))^{\istar}. &\by{definition of~$\delta$}
\end{flalign*}
At this position we apply the obvious identity
\[ T(\dist + \id) (T\dist) T(\id \times \pi) = (T\pi)
  T(\dist + \id) (T\dist) \]
and then proceed as follows:
\begin{flalign*}
  &&&~(T((\id + \SigF(\TF \dist \tau^\nu) + \id) \\
  &&&\quad T((\id+\rho) + \id) (T\pi)
  T(\dist + \id) (T\dist)\\
  &&&\quad\tau (\id \times \out) (\id \times f))^{\istar}\\
  &&=&~(T((\id + \SigF(\TF \dist \tau^\nu) + \id)\\ 
  &&&\quad(T\pi) T(\dist+\rho) (T\dist)
  \tau (\id \times \out) (\id \times f))^{\istar} \\
  &&=&~(T((\id + \SigF(\TF \dist \tau^\nu)) + \id) \\
  &&&\quad(T\pi) T(\dist +\id) (T\delta) \tau (\id
  \times \out) (\id \times f))^{\istar} &\by{definition of~$\delta$}\\
	&&=&~((T\pi) T(\dist + \SigF\TF \dist)\\&&&\quad T(\id + \SigF\tau^{\nu}) (T\delta)  \tau (\id
	\times \out) (\id \times f))^{\istar}\\
	&&=&~((T\pi) T(\dist +\SigF\TF \dist) \out \tau^{\nu} (\id \times f))^{\istar}  &\by{definition of~$\tau^\nu$}\\
	&&=&~((T\pi) \out (\TF \dist) \tau^{\nu} (\id \times f))^{\istar}.
\end{flalign*}
This yields the proof of the top triangle of the diagram and therefore completes
the proof of the lemma.
\end{proof}
\noindent Finally, we can return to the proof of
Theorem~\ref{thm:cr_elg}.
\begin{proof}[Proof of Theorem~\ref{thm:cr_elg}]
As we indicated above, the second clause is already proved by Uustalu~\cite{Uustalu03}.
To show the existence part of the first clause we call on the above Lemmas~\ref{lem:ext_ax},~\ref{lem:ext_cod} 
and~\ref{lem:ext_str} and additionally prove that iteration on $\BBTF$ extends iteration on $\BBT$, i.e.~that~\eqref{eq:cung} implies~\eqref{eq:cungit}.
Let us call morphisms~$f$ for which there is $g$ satisfying~\eqref{eq:cung} \emph{completely unguarded}. Suppose that~\eqref{eq:cung} holds. Then the proof of~\eqref{eq:cungit}
runs as follows: 
\begin{flalign*}
  &&\out f^{\istar} =&~ (\guard f)^{\istar} \\
 && =&~ \out (\tuo{} T(\inl+\id) ((T\pi)\out f)^{\istar})^{\istar} \\
  &&=&~ \out (\tuo{} T(\inl+\id) ((T\pi)(T\inl)g)^{\istar})^{\istar} &\eqref{eq:cung} \\
  &&=&~ \out (\tuo{} T(\inl+\id) ((T\inl+\id)g)^{\istar})^{\istar} \\
 && =&~ \out (\tuo{} T(\inl+\id) (T\inl) g^{\istar})^{\istar} &\by{naturality} \\
 && =&~ \out (\tuo{} T(\inl \inl) g^{\istar})^{\istar} \\
  &&=&~ \out [\eta^{\nu}, f^{\istar}]^{\kklstar} \tuo{} T(\inl \inl) g^{\istar} &\by{fixpoint} \\
  &&=&~ \out [\eta^{\nu}, f^{\istar}]^{\kklstar} \tuo{} T(\inl \inl) g^{\istar} \\
 && =&~ [\out [\eta^{\nu},f^{\istar}],\eta \inr
  \Sigma[\eta^{\nu},f^{\istar}]^{\kklstar}]^{\klstar} T(\inl \inl) g^{\istar} &\eqref{eq:kl_def} \\
  &&=&~ [\out \eta^{\nu}]^{\klstar} g^{\istar} \\
  &&=&~ [\eta \inl]^{\klstar} g^{\istar} \\
  &&=&~ (T\inl)\,g^{\istar}.
\end{flalign*}
It remains to show the uniqueness part of the first clause. 
To that end we first show that any morphism $f: X \to \TF(Y+X)$ can be
decomposed by means of morphisms $g : X \to \TF(Z+X)$ and $h :
Z \to \TF(Y+X)$, where $Z = Y + \SigF\TF(Y+X)$, as
\begin{align}
	f = [h, \eta^\nu \inr]^{\kklstar} g \label{eq:it_uniq_1a}
\end{align}
with completely unguarded $g$. Next we show that
\begin{align}
	f^{\istar} = (h^{\kklstar} g^{\istar})^{\istar} \label{eq:it_uniq_2a}
\end{align}
and that
\begin{align}
	h^{\kklstar} g^{\istar} = \guard{f}. \label{eq:it_uniq_3}
\end{align}
In summary, we obtain that $f^{\istar} = (h^{\kklstar} g^{\istar})^{\istar}=(\guard f)^\istar$. The following proofs of~\eqref{eq:it_uniq_2a} and~\eqref{eq:it_uniq_3} do not depend on the concrete definition of $\argument^\istar$ on $\BBTF$ but only use its abstract properties as an iteration operator of a complete Elgot monad and compatibility with the underlying iteration operator for $\BBT$. Hence, the identity $f^{\istar}=(\guard f)^\istar$ would be valid for any other such operator, but since $(\guard f)^\istar$ is uniquely defined all of them must agree.

Let $g = \tuo{}T({\inl}\,\pi) \out{} f$ (recall that
$\pi = \left[\inl+\id, \inl \inr \right]$), which is, by definition,
completely unguarded, and let $h = \tuo{} \eta (\inl + \id)$.

Then the proof of~\eqref{eq:it_uniq_1a} runs as follows:
\begin{flalign*}
	&&\pheq [h, &\eta^{\nu} \inr]^{\kklstar} g\\
	&&&=[\tuo{} \eta (\inl + \id), \eta^{\nu} \inr]^{\kklstar} g\\
	&&&=[\tuo{} \eta (\inl + \id), \tuo{} \eta \inl \inr]^{\kklstar} g
	&&&\by{Theorem~\ref{lem:kl_dec}}\\
	&&&=(\tuo{} \eta [\inl + \id, \inl \inr])^{\kklstar} g\\
	&&&=(\tuo{} \eta\,\pi)^{\kklstar} g\\
	&&&=\tuo{} [\eta\,\pi, \eta \inr \SigF(\tuo{}{} \eta\,\pi)^{\kklstar}]^{\klstar} \out{} g
	&&&\by{Theorem~\ref{lem:kl_dec}}\\
	&&&=\tuo{} (\eta\, [\pi, \inr \SigF(\tuo{} \eta\,\pi)^{\klstar}])^{\klstar}T({\inl}\,\pi) \out{} f\\
	&&&=\tuo{} (\eta\,\pi)^{\klstar} (T\pi) \out{} f\\
	&&&=\tuo{} (T\pi)\, (T\pi) \out{} f\\
	&&&=f.
  \end{flalign*}
  Next, we show~\eqref{eq:it_uniq_2a}:
\begin{flalign*}
	&&\pheq (h^{\kklstar} g^{\istar})^{\istar}&=( (\TF{\inl}\ h, \eta^{\nu} \inr]^{\kklstar}  g)^{\istar} )^{\istar} &\by{naturality}\\
	&&&=\left( \TF[\id,\inr] [(\TF{\inl}) h, \eta^{\nu} \inr]^{\kklstar} g
	\right)^{\istar} &\by{codiagonal} \\
	&&&=\left( [h, \TF[\id, \inr] \eta^{\nu} \inr]^{\kklstar} g
	\right)^{\istar} \\
	&&&=\left( [h, \eta^{\nu} \inr]^{\kklstar} g \right)^{\istar} \\
	&&&=f^{\istar}.
\intertext{Finally, we prove (\ref{eq:it_uniq_3}):}
	&&\pheq h^{\kklstar} g^{\istar}&=(\tuo{} \eta (\inl + \id))^{\kklstar} g^{\istar} &\by{definition of $\argument^\kklstar$}\\
	&&&=\tuo{} [\eta (\inl + \id), \eta \inr \SigF h^{\kklstar}]^{\klstar} \out{} g^{\istar}\\
	&&&=\tuo{} [\eta (\inl + \id), \eta \inr \SigF h^{\kklstar}]^{\klstar} (T{\inl}) ((T\pi)
	\out{} f)^{\istar}&\by{$g$ compl.\ ung.}\\
	&&&=\tuo{} T(\inl + \id) ((T\pi) \out{} f)^{\istar}\\
	&&&=\guard f.
\end{flalign*}
This finishes the proof.
\end{proof}
\section{A Coproduct Characterization of Coinductive Resumptions}
\label{sec:sums}
\noindent Our second main result is a universal characterization of
the coinductive resumption monad transformer. Essentially, we show
that $\BBTF$ arises as the coproduct of $\BBT$ with the free complete
Elgot monad over $\Sigma$ (modulo existence of the latter) in the
category of complete Elgot monads on~$\BC$ (see
Section~\ref{sec:related} for discussion of a similar result on
\emph{completely iterative} monads). In other words, $\BBTF$ really
does freely extend $\BBT$ by $\Sigma$ in a fully formal sense. We
begin by recording the relevant notion of morphism of complete Elgot
monads:
 
\begin{defi}%
\label{defn:mor} 
  A \emph{complete Elgot monad morphism} $\xi:\BBR\to \BBS$ between complete
  Elgot monads $\BBR$, $\BBS$ is a morphism $\xi$ between the underlying
  strong monads (i.e.\ $\xi \comp \eta = \eta$,
  $\xi \comp f^\klstar = (\xi \comp f)^\klstar \comp \xi$ for $f:X\to RY$, and
  $\xi \comp \tau = \tau \comp (\id\times \xi)$, see~\cite{Moggi89}) additionally satisfying
  \begin{equation*}
  (\xi \comp g)^\istar = \xi \comp g^\istar
  \end{equation*}
  for $g:X\to R(Y+X)$. Complete Elgot monads over $\BC$ and their
  morphisms form an (overlarge) category $\mathbf{CElg}(\BC)$. We have
  a forgetful functor from $\mathbf{CElg}(\BC)$ to the category of
  strong functors and strong natural transformations; mention of
  \emph{free} complete Elgot monads refers to this forgetful functor.
\end{defi}
\noindent Note next that the coinductive resumption monad $\TF$
implements, by construction, all the operations of $\Sigma$, that is,
we have a canonical strong natural transformation
$\iota^\BBT:\Sigma\to\TF$, given by
\begin{equation*}
  \iota^{\BBT}_X=\tuo\eta\inr\Sigma\eta^\nu
\end{equation*}
where the typing of the composite is shown in
\begin{equation*}
  \Sigma X\xto{~~\Sigma\eta^\nu~~}
  S \TF X\xto{~~\inr~~} 
  X+ S \TF X \xto{~~\eta~~}
  T(X+ S \TF X) \xto{~~\tuo~~} 
  \TF X.
\end{equation*}
Moreover, recall that $T$ maps into $\TF$ via a natural transformation
$\ext:T\to\TF$ defined in Equation~\eqref{eq:ext}. We have
\begin{lem}\label{lem:ext-mm}
  The natural transformation $\ext:\BBT \to \BBTF$ is a complete Elgot
  monad morphism.
\end{lem}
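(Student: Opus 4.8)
The plan is to verify the four conditions of Definition~\ref{defn:mor} for $\ext\colon\BBT\to\BBTF$: preservation of the unit, of Kleisli lifting, of strength, and of iteration. For the three monad-morphism conditions the uniform strategy is to postcompose with $\out$, which is an isomorphism by Lambek's lemma, and then simplify using the defining identity $\out\comp\ext=T\inl$ from~\eqref{eq:ext} together with the characteristic equations for $\eta^\nu$, $(-)^\kklstar$ and $\tau^\nu$ supplied by Theorem~\ref{lem:kl_dec}.

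Preservation of the unit is immediate: $\out\comp\ext\comp\eta=T\inl\comp\eta=\eta\comp\inl=\out\comp\eta^\nu$ by naturality of $\eta$ and the definition of $\eta^\nu$. For Kleisli lifting, given $f\colon X\to TY$ I would compute $\out\comp\ext\comp f^\klstar=T\inl\comp f^\klstar=(T\inl\comp f)^\klstar=(\out\comp\ext\comp f)^\klstar$, and, for the other side, using~\eqref{eq:kl_def} and $\out\comp\ext=T\inl$, I would get $\out\comp(\ext\comp f)^\kklstar\comp\ext=[\out\comp\ext\comp f,\ \eta\comp\inr\comp\Sigma(\ext\comp f)^\kklstar]^\klstar\comp T\inl$; the key observation is that precomposing a copairing $[a,b]^\klstar$ with $T\inl=(\eta\comp\inl)^\klstar$ collapses it to $a^\klstar$, so the $\Sigma$-summand drops out and this side too equals $(\out\comp\ext\comp f)^\klstar$. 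For strength, applying $\out$ to $\ext\comp\tau$ and to $\tau^\nu\comp(\id\times\ext)$: the former yields $T\inl\comp\tau$, while for the latter I would use~\eqref{eq:str_def}, rewrite $\tau\comp(\id\times T\inl)$ as $T(\id\times\inl)\comp\tau$ via~({\sc str$_3$}) and~({\sc str$_4$}), and then simplify $(\id+\Sigma\tau^\nu)\comp\delta\comp(\id\times\inl)$ to $\inl$ by unfolding $\delta=(\id+\rho)\comp\dist$ and using $\dist\comp(\id\times\inl)=\inl$; this recovers $T\inl\comp\tau$, so the two sides agree. In each case the conclusion follows because $\out$ is monic.

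Finally, preservation of iteration, $(\ext\comp g)^\istar=\ext\comp g^\istar$ for $g\colon X\to T(Y+X)$, is essentially a restatement of the extension property already established: setting $f=\ext\comp g$ we have $\out\comp f=\out\comp\ext\comp g=(T\inl)\comp g$, which is exactly hypothesis~\eqref{eq:cung} of the first clause of Theorem~\ref{thm:cr_elg}, so~\eqref{eq:cungit} gives $\out\comp f^\istar=(T\inl)\comp g^\istar=\out\comp\ext\comp g^\istar$, whence $f^\istar=\ext\comp g^\istar$. I do not expect a genuine obstacle here; the only steps requiring a moment's care are the collapsing-copairing simplification in the Kleisli-lifting computation and the parallel $\delta$-simplification for strength, while the iteration clause comes essentially for free from Theorem~\ref{thm:cr_elg}.
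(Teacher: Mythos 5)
Your proposal is correct and follows essentially the same route as the paper's proof: all three monad-morphism identities are verified by postcomposing with the isomorphism $\out$ and using $\out\comp\ext=T\inl$ together with the characterizations of $\eta^\nu$, $(-)^\kklstar$ and $\tau^\nu$ from Theorem~\ref{lem:kl_dec} (including the same copairing-collapse and $\delta$-simplification steps), and preservation of iteration is obtained exactly as you say from clause~(1) of Theorem~\ref{thm:cr_elg}. The only cosmetic point is that $\tau\comp(\id\times T\inl)=T(\id\times\inl)\comp\tau$ is simply naturality of $\tau$, though your derivation via ({\sc str$_3$}) and ({\sc str$_4$}) with $T\inl=(\eta\comp\inl)^\klstar$ also works.
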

\begin{proof}
Let us verify the identities
\begin{align}\label{eq:inr_mor}
\xi\eta = \eta && \xi f^\klstar = (\xi f)^\klstar\xi && \xi\tau = \tau(\id\times \xi) && (\xi g)^\istar = \xi g^\istar
\end{align}
with $f:X\to TY$ and $g:X\to T(Y+X)$ from left to right.
\begin{citemize}
	\item Compatibility of $\ext$ with unit is a straightforward consequence of Theorem~\ref{lem:kl_dec}: $\ext{} \eta = \tuo{} (T{\inl})\eta=\tuo{} \eta\inl =\eta^\nu$.
	\item In order to show compatibility of $\ext$ with Kleisli star we call the definition of the latter from Theorem~\ref{lem:kl_dec}:
		\begin{align*}
			(\ext{} g)^{\kklstar} \ext =&~ (\tuo{} (T  {\inl}) g)^{\kklstar}
			\tuo{} (T  {\inl}) \\
			=&~\tuo{} [\out{} \tuo{} (T  {\inl})  g, \eta \inr \SigF (\ext{}
			g)^{\kklstar}]^{\klstar} (T  {\inl}) \\
			=&~\tuo{} ((T  {\inl}) g)^{\klstar} \\
			=&~\tuo{} (T  {\inl})  g^{\klstar} \\
			=&~\ext{} g^{\klstar}.
		\end{align*}
	\item Recall the distributivity transformation $\delta: A \times (B+ \SigF C) \to A \times B + \SigF(A \times C)$ from Theorem~\ref{lem:kl_dec}. Then by the corresponding definition of $\tau^\nu$,
		\begin{align*}
			\tau^{\nu} (\id \times \ext)
			=&~\tuo{} T (\id + \SigF\tau^{\nu}) (T {\delta}) \tau
			(\id \times \out\ext) \\
			=&~\tuo{} T (\id + \SigF\tau^{\nu}) (T {\delta}) \tau
			(\id \times T {\inl}) \\
			=&~\tuo{} T (\id + \SigF\tau^{\nu}) T(\id+\rho) (T\dist) T(\id+\inl)\tau \\
			=&~\tuo{} T (\id + \SigF\tau^{\nu}) T(\id+\rho) (T\inl) \tau \\
			=&~\tuo{} T (\id + \SigF\tau^{\nu}) (T {\inl}) \tau \\
			=&~\tuo{} (T {\inl}) \tau \\
			=&~\ext{} \tau.
		\end{align*}
	\item Since $\out\,\ext g = (\TF{\inl}) g$, then by Theorem~\ref{thm:cr_elg}, $\out (\ext g)^{\istar} = (T\inl) g^\istar$,
		from which the last identity in~\eqref{eq:inr_mor} follows by composition with $\tuo$ on the left.
\qed
\end{citemize}
\noqed\end{proof}
\noindent Summing up, we have, slightly abusing notation, a cospan of
strong natural transformations
\begin{equation*}
  \BBT \xto{~~\ext~~} \BBTF \xleftarrow{~{}~~\iota^\BBT} \Sigma
\end{equation*}
with the left arrow being a complete Elgot monad morphism. It turns
out that this gives a universal characterization of $\BBTF$ in terms
of being composed of $\BBT$ and $\Sigma$:
\begin{thm}\label{thm:resump-universal}
  The cospan
  $\BBT \xto{~~\ext~~} \BBTF \xleftarrow{~{}~~\iota^\BBT} \Sigma$
  is universal. Explicitly: Given a complete Elgot monad $\BBS$, a
  strong natural transformation $\upsilon: \SigF\to S$, and a complete
  Elgot monad morphism $\sigma:\BBT\to\BBS$, there exists a unique
  complete Elgot monad morphism $\xi:\BBTF\to\BBS$ such that
  $\xi\ext=\sigma$ and $\xi\iota^\BBT=\upsilon$:
  \begin{equation}\label{diag:resump-universal}
  \xymatrix@C=3em{\BBT \ar[r]^\ext\ar[dr]_{\sigma} & \BBTF\ar@{.>}[d]_{\xi} & 
  \Sigma\ar[l]_{\iota^\BBT}\ar[dl]^{\upsilon}\\
  &\BBS.}
  \end{equation}
  Specifically, $\xi$ is given as $\xi=\zeta^\istar$ with
  $\zeta$ defined componentwise by
  \begin{equation*}
  \TF X\xto{~~\out~~} T(X+\SigF\TF X)\xto{~~\sigma~~}
  S(X+\SigF\TF X)
  \xto{~~[\eta\inl,(S\inr)\upsilon]^\klstar~~} S(X+\TF X).
  \end{equation*}
\end{thm}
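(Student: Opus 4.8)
The overall plan is to set $\xi:=\zeta^\istar$, with $\zeta$ as in the statement, and then split the work into (a) verifying that $\xi$ is a complete Elgot monad morphism making the two triangles commute, and (b) showing that this forces $\xi$ to be unique.

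I would begin (a) by unfolding the iterate once in $\BBS$ and simplifying the resulting copairing. Since $\zeta_X=[\eta\inl,(S\inr)\upsilon_{\TF X}]^\klstar\comp\sigma_{X+\SigF\TF X}\comp\out_X$, the unfolding law of $\BBS$ yields the ``one-step identity''
\begin{equation*}
  \xi_X=[\eta,\ \xi_X^\klstar\comp\upsilon_{\TF X}]^\klstar\comp\sigma_{X+\SigF\TF X}\comp\out_X,
\end{equation*}
where $\xi_X^\klstar$ is the Kleisli lifting of $\xi_X:\TF X\to SX$ in $\BBS$. From this, the equations $\xi\comp\eta^\nu=\eta$, $\xi\comp\ext=\sigma$, and $\xi\comp\iota^\BBT=\upsilon$ fall out at once: precompose with $\out\comp\eta^\nu=\eta\inl$, $\out\comp\ext=T\inl$, and $\out\comp\iota^\BBT=\eta\inr\comp\SigF\eta^\nu$ (Theorem~\ref{lem:kl_dec} and Equation~\eqref{eq:ext}) and simplify using naturality of $\sigma$ and $\upsilon$ and the law $\sigma\comp\eta=\eta$. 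Naturality of $\zeta$, hence of $\xi=\zeta^\istar$ (by uniformity), is routine, so $\xi:\TF\to S$ is a natural transformation.

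The substantial part of (a) is that $\xi$ commutes with Kleisli lifting, with strength, and with iteration. For each, the strategy is to unfold both sides once --- for $\xi$ via the one-step identity, for $f^\kklstar$, $\tau^\nu$ and $g^\istar$ via their characterisations in Theorem~\ref{lem:kl_dec} (including the $\coit$-description of $\argument^\kklstar$) and the definition $g^\istar=(\guard g)^\istar$ underlying Theorem~\ref{thm:cr_elg} --- and thereby reduce both sides to a common recursive form in $\BBS$. The delicate point is that $\BBS$-iteration is not unique, so matching the once-unfolded forms does not by itself identify the two sides; one must additionally equate the two $\BBS$-iterates. This is possible because in all the recursive forms arising here the recursive call sits strictly under a $\SigF$-layer coming from $\out$, so the two coalgebras being iterated differ only by finitely many applications of ``unfold''; their iterates then agree by the naturality, dinaturality, codiagonal and uniformity axioms of $\BBS$ (equivalently, one may pass to the unique iterate of a guarded morphism via Theorem~\ref{thm:cr_elg} and Uustalu~\cite{Uustalu03}, which is especially convenient in the Kleisli-lifting case, where one treats $f=[f_0,\eta^\nu]$ with $f^\kklstar=\coit(e)$ and rewrites $\xi\comp f^\kklstar$ via uniformity and naturality of $\BBS$-iteration as $(\zeta'')^\istar$ for an explicit $\zeta''$ differing from the coalgebra computing $(\xi\comp f)^\klstar\comp\xi$ only by one unfolding step on the parameter leaves). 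I expect this reduction, with its heavy bookkeeping of nested coproduct injections, to be the main obstacle.

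For (b), the argument is short. For each $X$ put
\begin{equation*}
  \theta_X:=[\eta^\nu\comp\inl,\ \iota^\BBT\comp\SigF\inr]^\kklstar\comp\ext_{X+\SigF\TF X}\comp\out_X:\TF X\to\TF(X+\TF X),
\end{equation*}
with $\inl:X\to X+\TF X$ and $\inr:\TF X\to X+\TF X$. A direct computation with $\out$ shows $\theta_X$ is guarded and that $\id_{\TF X}$ satisfies its unfolding equation $\gamma=[\eta^\nu,\gamma]^\kklstar\comp\theta_X$; hence $\theta_X^\istar=\id_{\TF X}$ by uniqueness of iterates of guarded morphisms (Theorem~\ref{thm:cr_elg}). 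Now let $\xi':\BBTF\to\BBS$ be any complete Elgot monad morphism with $\xi'\comp\ext=\sigma$ and $\xi'\comp\iota^\BBT=\upsilon$. Since $\xi'$ preserves iteration, $\xi'_X=\xi'_X\comp\theta_X^\istar=(\xi'_{X+\TF X}\comp\theta_X)^\istar$; and since $\xi'$ is a monad morphism satisfying the two triangles, evaluating $\xi'_{X+\TF X}\comp\theta_X$ using $\xi'\comp\ext=\sigma$, $\xi'\comp\iota^\BBT=\upsilon$, $\xi'\comp\eta^\nu=\eta$ and naturality of $\upsilon$ collapses it exactly to $\zeta_X$. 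Thus $\xi'_X=\zeta_X^\istar$, which both identifies the morphism constructed in (a) as the one asserted and shows it is the only one. The formula for $\zeta$ in the statement is the one used throughout, so no extra check is needed.
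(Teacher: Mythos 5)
Your uniqueness argument is correct and is in substance the paper's own, run in the opposite direction: your $\theta_X$ is exactly the morphism $[\eta^\nu\inl,\tuo\eta\inr\,\SigF(\eta^\nu\inr)]^\kklstar\ext\out$ whose iterate the paper shows to be $\id$, and both proofs combine $\theta_X^\istar=\id$ (guardedness plus Theorem~\ref{thm:cr_elg}(2)) with the collapse of $\xi'\comp\theta_X$ to $\zeta_X$. The triangle identities, $\xi\comp\eta^\nu=\eta$, naturality and strength are also handled essentially as in the paper. The genuine gap is in the core of the existence half: preservation of iteration, $\xi\comp f^\istar=(\xi\comp f)^\istar$ for arbitrary $f:X\to\TF(Y+X)$, which the paper itself identifies as the main difficulty. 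Your justification rests on the claim that in all the recursive forms arising here ``the recursive call sits strictly under a $\SigF$-layer coming from $\out$'', so that the relevant coalgebras differ only by finitely many unfoldings. That premise fails precisely in the case that matters: $f$ may be completely unguarded, so the recursive occurrences of $X$ are not protected by any $\SigF$-layer; and on the $\BBS$-side there is no uniqueness of iterates to fall back on --- Theorem~\ref{thm:cr_elg} and Uustalu give uniqueness only for guarded morphisms in $\BBTF$, not for the $\BBS$-iterate $(\xi\comp f)^\istar$. Hence ``matching the once-unfolded forms'' and then equating the two $\BBS$-iterates is exactly the unproved content, not a consequence of the axioms as listed.

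What is actually required (and what the paper does) is a structured reduction with purpose-built intermediate iterations: first relate $(\xi f)^\istar$ to $(\xi\,\guard f)^\istar$ by repackaging $\out f$ as a morphism into $S((Y+X)+X)$ and invoking the codiagonal law of $\BBS$; then reduce guarded $f$ to a \emph{strongly guarded} morphism $t$ with codomain $\TF(Y+\TF(Y+X))$, using dinaturality and codiagonal in $\BBS$ together with uniqueness of guarded iterates in $\BBTF$ to identify $t^\istar$ with $[\eta^\nu,f^\istar]^\kklstar$; only for strongly guarded $f$ does a single uniformity argument in $\BBS$ close the case. None of this machinery (nor the notion of strong guardedness) appears in your sketch, and your proposed direct treatment of Kleisli-lifting compatibility inherits the same problem --- note that $\zeta\comp g^\kklstar$ does not factor as $S(\id+g^\kklstar)$ composed with a coalgebra on $\TF X$, so a one-shot uniformity argument does not apply; the paper instead derives Kleisli compatibility from iteration compatibility via Lemma~\ref{lem:klstar-iter}. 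As it stands, the existence half of your proof is not established.
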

\noindent In other words, $\BBTF$ is free as a complete Elgot monad over
$\Sigma$ that extends $\BBT$.
\renewcommand{\thmcontinues}[1]{Continued}
\begin{exa}\label{expl:ops}
  Let us spell out what a strong natural transformation
  $\upsilon:\Sigma\to S$ amounts to in the case
  $\Sigma X = \sum_{i} a_i\times X^{b_i}$ (Example~\ref{expl:sig}).  A
  natural transformation $\upsilon:\Sigma\to S$ is equivalent to a
  family of natural transformations $a_i\times(-)^{b_i}\to S$,
  equivalently $(-)^{b_i}\to S^{a_i}$, each of which is, by the
  enriched Yoneda lemma, equivalent to an element of $(Sb_i)^{a_i}$,
  i.e.~a morphism $u_i:a_i\to Sb_i$.
  Concretely, $\upsilon$ is assembled
  from the $u_i$ as follows:
\begin{align*}
\upsilon_X = [\lambda\brks{x,f}.\,(S f)(u_1(x)),\ldots,\lambda\brks{x,f}.\,(S f)(u_n(x))]. 
\end{align*} 
Note that the above generic argument makes use of the assumption that
$\BC$ is Cartesian closed. In fact it suffices to assume that only the
exponentials $(-)^{b_i}$ exist (in particular we do not actually need
the exponentials $(-)^{a_i}$ mentioned in between). The expressions
$\lambda\brks{x,f}.\,(S f)(u_i(x))$ above then have to be read as
\begin{displaymath}
a_i\times X^{b_i}
\xto{~\mathsf{swap}~} X^{b_i}\times a_i 
\xto{~\id\times u_i~} X^{b_i}\times Sb_i
\xto{~\tau~} S(X^{b_i}\times b_i)
\xto{~S\ev~} SX.
\end{displaymath}
where $\mathsf{swap}$ and $\ev$ are the obvious swapping and evaluation 
transformations respectively. 
\end{exa}

\noindent If $\mathbf{CElg}(\BC)$ has an initial object, then the
statement of Theorem~\ref{thm:resump-universal} can be phrased
slightly more concisely. We later give a sufficient criterion on $\BC$
that ensures this (Theorem~\ref{thm:lifting-initial}).
\begin{cor}\label{cor:sum}
  Suppose that $\mathbf{CElg}(\BC)$ has an initial object $\BBL$. Then
  \begin{enumerate}
  \item\label{item:free} $\BBL_\SigF$ is the free complete Elgot monad
  over the strong functor $\SigF :\BC\to\BC$, with universal arrow
  \begin{equation*}
  \iota^\BBL:\Sigma\to L_\Sigma.
  \end{equation*}
  \item\label{item:sum} For any complete Elgot monad $\BBT$, the
  coinductive generalized resumption monad $\BBTF$ is the coproduct
  of\/ $\BBT$ and $\BBL_\SigF$ in $\mathbf{CElg}(\BC)$, with left
  injection $\ext:\BBT\to\BBTF$ and with the right injection being
  the free extension of $\iota^\BBT:\Sigma\to\TF$ to $\BBL_\Sigma$.
\end{enumerate}
\end{cor}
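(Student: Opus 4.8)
The plan is to obtain both clauses as direct consequences of the universal property established in Theorem~\ref{thm:resump-universal}, used once with $\BBT$ replaced by the initial complete Elgot monad $\BBL$ and once in combination with Clause~\ref{item:free}. First I would note that $\BBL_\SigF$ is a legitimate object of $\mathbf{CElg}(\BC)$: it exists by Assumption~\ref{asm:fp}, and it carries a complete Elgot monad structure by Theorem~\ref{thm:cr_elg} applied to $\BBL$.

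For Clause~\ref{item:free}, let $\BBS$ be a complete Elgot monad and $\upsilon:\SigF\to S$ a strong natural transformation. Since $\BBL$ is initial in $\mathbf{CElg}(\BC)$, there is a unique complete Elgot monad morphism $\sigma:\BBL\to\BBS$, and Theorem~\ref{thm:resump-universal} then provides a unique complete Elgot monad morphism $\xi:\BBL_\SigF\to\BBS$ with $\xi\ext=\sigma$ and $\xi\iota^\BBL=\upsilon$, which gives existence in the freeness property. For uniqueness, given any complete Elgot monad morphism $\xi':\BBL_\SigF\to\BBS$ with $\xi'\iota^\BBL=\upsilon$, the composite $\xi'\ext$ is a complete Elgot monad morphism $\BBL\to\BBS$ (using Lemma~\ref{lem:ext-mm}), hence equals $\sigma$ by initiality of $\BBL$; thus $\xi'$ meets both hypotheses of Theorem~\ref{thm:resump-universal} and therefore $\xi'=\xi$. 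This shows that $\iota^\BBL$ is a universal arrow from $\SigF$ to the forgetful functor, proving Clause~\ref{item:free}.

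For Clause~\ref{item:sum}, write $\hat\iota:\BBL_\SigF\to\BBTF$ for the free extension of $\iota^\BBT:\SigF\to\TF$ along $\iota^\BBL$ supplied by Clause~\ref{item:free}, so $\hat\iota$ is a complete Elgot monad morphism with $\hat\iota\iota^\BBL=\iota^\BBT$. I would verify the coproduct property for the cospan $\BBT\xrightarrow{\ext}\BBTF\xleftarrow{\hat\iota}\BBL_\SigF$ directly: given complete Elgot monad morphisms $\sigma:\BBT\to\BBS$ and $\rho:\BBL_\SigF\to\BBS$, set $\upsilon=\rho\iota^\BBL:\SigF\to S$ and let $\xi:\BBTF\to\BBS$ be the unique complete Elgot monad morphism with $\xi\ext=\sigma$ and $\xi\iota^\BBT=\upsilon$ delivered by Theorem~\ref{thm:resump-universal}. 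Then $\xi\hat\iota$ is a complete Elgot monad morphism with $(\xi\hat\iota)\iota^\BBL=\xi\iota^\BBT=\upsilon=\rho\iota^\BBL$, so $\xi\hat\iota=\rho$ by the uniqueness half of Clause~\ref{item:free}; hence $\xi$ is a mediating morphism for the cocone $(\sigma,\rho)$. Conversely, any complete Elgot monad morphism $\xi'$ with $\xi'\ext=\sigma$ and $\xi'\hat\iota=\rho$ satisfies $\xi'\iota^\BBT=\xi'\hat\iota\iota^\BBL=\rho\iota^\BBL=\upsilon$, so $\xi'=\xi$ by the uniqueness statement of Theorem~\ref{thm:resump-universal}. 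This gives the coproduct, with the stated injections.

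This is essentially bookkeeping with universal properties; the only points needing care are the correct orchestration of the two distinct uniqueness statements (that of Theorem~\ref{thm:resump-universal} and that of freeness in Clause~\ref{item:free}) and the observation that in Clause~\ref{item:free} the side condition $\xi'\ext=\sigma$ comes for free from initiality of $\BBL$, while in Clause~\ref{item:sum} it is part of the given data. I do not expect any genuine obstacle beyond this.
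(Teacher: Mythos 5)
Your proposal is correct and takes essentially the same route as the paper, which simply instantiates Theorem~\ref{thm:resump-universal} with $\BBT=\BBL$ for Clause~\ref{item:free} and declares Clause~\ref{item:sum} immediate. Your writeup merely makes explicit the bookkeeping the paper leaves implicit, in particular that initiality of $\BBL$ renders the side condition $\xi'\ext=\sigma$ automatic in the freeness argument.
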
%
\begin{proof}
  Claim~(\ref{item:free}) is proved by taking $\BBT=\BBL$ in
  Theorem~\ref{thm:resump-universal}. Claim~(\ref{item:sum}) is then
  immediate.
\end{proof}
\noindent We assemble some auxiliary results before embarking on the
proof of Theorem~\ref{thm:resump-universal}.
\begin{lem}\label{lem:klstar-iter}
	The Kleisli composition of a complete Elgot monad\/ $\BBT$ can be characterized in terms of iteration as follows:
	\begin{equation}
		g^{\klstar}  f = [T (\inr  \inr) f, (T\inl)\  g]^{\istar}  \inl
	\end{equation}
\end{lem}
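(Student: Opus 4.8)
The plan is to derive the identity directly from the \emph{unfolding} axiom, without invoking any selection principle, uniqueness, or the remaining Elgot axioms: the recursive system appearing on the right-hand side is ``almost solved'' already, since its $Y$-component makes no recursive calls.

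First I would abbreviate $e = [T(\inr\comp\inr)\comp f,\,(T\inl)\comp g] : X+Y \to T(Z+(X+Y))$, where $f:X\to TY$ and $g:Y\to TZ$, so that the claimed right-hand side is $e^\istar\comp\inl$, and record the instance $e^\istar = [\eta, e^\istar]^\klstar\comp e$ of the unfolding law. Precomposing with $\inr:Y\to X+Y$ yields $e^\istar\comp\inr = [\eta,e^\istar]^\klstar\comp(T\inl)\comp g$; rewriting $T\inl = (\eta\comp\inl)^\klstar$ and using the monad laws $(h^\klstar\comp k)^\klstar = h^\klstar\comp k^\klstar$, $h^\klstar\comp\eta = h$ and $\eta^\klstar = \id$, this collapses to $g$. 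Precomposing the same unfolding identity with $\inl:X\to X+Y$ yields $e^\istar\comp\inl = [\eta,e^\istar]^\klstar\comp T(\inr\comp\inr)\comp f$; rewriting $T(\inr\comp\inr) = (\eta\comp\inr\comp\inr)^\klstar$ and applying the same monad laws turns this into $(e^\istar\comp\inr)^\klstar\comp f$, which by the previous step is $g^\klstar\comp f$. This is exactly the assertion.

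I do not expect any genuine obstacle: the whole argument is a single use of unfolding (precomposed with each coproduct injection) together with elementary monad-law bookkeeping. The only point needing a little care is the nesting of coproduct injections --- in particular that $\inr\comp\inr$ denotes $\inr_{X+Y}\comp\inr_Y:Y\to Z+(X+Y)$, and that $[\eta,e^\istar]\comp\inr_{X+Y}\comp\inr_Y$ must be recognised as $e^\istar\comp\inr_Y:Y\to TZ$ before the computation of $e^\istar\comp\inr$ can be reused. For readers preferring a more structural derivation, the same identity also falls out of the Beki\'{c} identity (Proposition~\ref{prop:bekic-from-elgot}) applied to the obvious splitting of the variable object $X+Y$ with the $Y$-block already in solved form, but the direct unfolding computation is the shortest route.
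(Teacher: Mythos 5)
Your proposal is correct and follows essentially the same route as the paper: two applications of the unfolding axiom to $e=[T(\inr\comp\inr)\comp f,(T\inl)\comp g]$, one precomposed with each coproduct injection, plus the standard Kleisli-triple bookkeeping; the paper merely substitutes the unfolding of $e^\istar\comp\inr$ in-line rather than recording $e^\istar\comp\inr=g$ as a separate intermediate identity. No gap.
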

\begin{proof} By straightforward calculation:
	\begin{flalign*}
		&&&~[T (\inr  \inr)  f, (T\inl)  g]^{\istar}  \inl\\
		&&=&~[\eta, [T (\inr  \inr)  f, (T\inl)  g]^{\istar}]^{\klstar}  T (\inr  \inr)
		 f&\by{fixpoint}\\
		&&=&~([\eta, [T (\inr  \inr)  f, (T\inl)  g]^{\istar}]  \inr  \inr)^{\klstar}
		f\\
		&&=&~([T (\inr  \inr)  f, (T\inl)  g]^{\istar}  \inr)^{\klstar}  f \\
		&&=&~([\eta, [T (\inr  \inr)  f, (T\inl)\,  g]^{\istar}]^{\klstar}  (T\inl)\
		g)^{\klstar}  f &\by{fixpoint}\\
		&&=&~g^{\klstar}  f.\tag*{\qedhere}\\[-5pt]&&&&
	\end{flalign*}
\noqed\end{proof}
\begin{lem}\label{lem:iter-T-sum}
Let $f:X\to T(Y+X)$. Then $[\eta,f^\istar]^\klstar=(T(\id+f))^\istar$.
\end{lem}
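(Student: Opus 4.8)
The plan is to derive the identity using only two of the complete Elgot monad axioms from Definition~\ref{defn:elgot}, namely \emph{uniformity} and \emph{unfolding}, together with the ordinary Kleisli-triple laws; no explicit fixpoint reasoning is needed. Throughout, write $e = T(\id + f) : T(Y+X)\to T(Y+T(Y+X))$ for the morphism appearing on the right-hand side, so that $e^\istar$ is its iterate with respect to the second summand $T(Y+X)$. Note that $e^\istar : T(Y+X)\to TY$ has exactly the same type as $[\eta,f^\istar]^\klstar$, so the statement is at least well typed.

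First I would establish the auxiliary relation $e^\istar\comp f = f^\istar$ by an application of \emph{uniformity}. Here the roles in the uniformity axiom are matched by taking the larger carrier to be $T(Y+X)$, with the iterated map being $e$ itself; the smaller system is $f:X\to T(Y+X)$; and the connecting morphism is $h = f : X\to T(Y+X)$. The uniformity premise $e\comp f = T(\id + f)\comp f$ then holds trivially, since by definition $e = T(\id+f)$, so both sides are literally the same composite. Consequently uniformity (with $g = f$) yields the desired $e^\istar\comp f = f^\istar$.

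Next I would unfold $e^\istar$ exactly once. By \emph{unfolding}, $e^\istar = [\eta, e^\istar]^\klstar\comp e = [\eta, e^\istar]^\klstar\comp T(\id+f)$. Applying the monad law $p^\klstar\comp Tq = (p\comp q)^\klstar$ collapses this to $([\eta, e^\istar]\comp(\id + f))^\klstar$, and computing the copairing componentwise gives $[\eta, e^\istar]\comp(\id + f) = [\eta,\, e^\istar\comp f]$, because $\id_Y + f$ sends the left summand $Y$ via $\inl$ (so $[\eta,e^\istar]\comp\inl = \eta$) and the right summand $X$ via $\inr\comp f$ (so $[\eta,e^\istar]\comp\inr\comp f = e^\istar\comp f$). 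Substituting the relation $e^\istar\comp f = f^\istar$ from the previous step turns this into $[\eta, f^\istar]$, whence $e^\istar = [\eta, f^\istar]^\klstar$, which is precisely the claim.

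Since each step is a one-line calculation using a single named axiom plus the monad laws, I do not anticipate a real obstacle; the only place that warrants care is the bookkeeping of coproduct injections, in particular verifying that $\id_Y + f$ routes the recursion summand $X$ through $\inr\comp f$ so that the copairing simplifies as stated, and confirming that the three carriers in the uniformity axiom are correctly instantiated as the large carrier $T(Y+X)$, the small carrier $X$, and the connecting map $h = f$.
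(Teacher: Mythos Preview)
Your proposal is correct and follows essentially the same approach as the paper: both proofs apply uniformity (with the connecting morphism $h=f$ and the trivially commuting square $T(\id+f)\comp f = T(\id+f)\comp f$) to obtain $(T(\id+f))^\istar\comp f = f^\istar$, and then combine this with a single use of unfolding and the monad law $p^\klstar\comp Tq=(p\comp q)^\klstar$ to collapse $[\eta,(T(\id+f))^\istar]^\klstar\comp T(\id+f)$ into the claimed equality. The only difference is cosmetic: the paper starts from $[\eta,f^\istar]^\klstar$ and rewrites forward, while you start from $(T(\id+f))^\istar$ and rewrite backward.
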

\begin{proof}
Consider the following trivially commuting diagram
\begin{align*}
	\xymatrix@R20pt@C60pt@M6pt{
	X \ar[d]^-{f} \ar[r]^-{f} & T(Y+X) \ar[d]^-{T(\id+f)} \\
	T(Y+X)  \ar[r]^-{T(\id+f)} & T(Y+T(Y+X))
	}
\end{align*}
By uniformity, this implies $f^\istar =(T(\id+f))^\istar f$. Therefore $[\eta,f^\istar]^\klstar=[\eta,(T(\id+f))^\istar f]^\klstar=[\eta,(T(\id+f))^\istar]^\klstar T(\id+f)=(T(\id+f))^\istar$ and we are done.
\end{proof}
\noindent
We now proceed with the proof of the universal property:
\begin{proof}[Proof of Theorem~\ref{thm:resump-universal}]
  We first show that $\xi$ has the requisite properties, and then
  prove uniqueness. 

  \subsubsection*{Commutation of Diagram~\ref{diag:resump-universal}.}

  We need to show that $\xi\iota^\BBT=\upsilon$ and
  $\xi\ext=\sigma$. Put $w=[\eta\inl,
  (S\inr)\upsilon]^\klstar\sigma$. Then we have
\begin{flalign*}
&&\xi\iota^\BBT=&\;\xi\tuo\eta\inr \SigF\eta^\nu\\
&&=&\;(w\out)^\istar\tuo\eta\inr \SigF\eta^\nu\\
&&=&\;[\eta,(w\out)^\istar]^\klstar w\out\tuo\eta\inr \SigF\eta^\nu\\
&&=&\;[\eta,(w\out)^\istar]^\klstar [\eta\inl, (S\inr)\upsilon]^\klstar\sigma\eta\inr \SigF\eta^\nu\\
&&=&\;[\eta,(w\out)^\istar]^\klstar [\eta\inl, (S\inr)\upsilon]\inr \SigF\eta^\nu\\
&&=&\;[\eta,(w\out)^\istar]^\klstar S(\inr)\upsilon \SigF\eta^\nu\\
&&=&\;[\eta,\xi]^\klstar S(\inr\eta^\nu)\upsilon\\
&&=&\;(\xi\eta^\nu)^\klstar \upsilon\\
&&=&\;\eta^\klstar \upsilon\\
&&=&\; \upsilon,
\intertext{and}
&&\xi \ext=\,&\bigl([\eta\inl,(S\inr)\upsilon]^\klstar\sigma \out\bigr)^\istar \ext\\
&&=\,&[\eta, ([\eta\inl, \sigma \out]^\klstar (S\inr)\upsilon)^\istar]^\klstar\sigma\out\ext&\by{dinaturality, Lemma~\ref{lem:dinat}}\\
&&=\,&[\eta, ([\eta\inl, \sigma \out]^\klstar (S\inr)\upsilon)^\istar]^\klstar\sigma\out\tuo T\inl\\
&&=\,&[\eta, ([\eta\inl, \sigma \out]^\klstar (S\inr)\upsilon)^\istar]^\klstar\sigma\, T\inl\\
&&=\,&[\eta, ([\eta\inl, \sigma \out]^\klstar (S\inr)\upsilon)^\istar]^\klstar (S\inl) \sigma\\
&&=\,&\sigma
\end{flalign*}
\subsubsection*{$\xi$ is a complete Elgot monad morphism:}  
We have to show that $\xi_X=\zeta_X^\istar:\TF X\to SX$ is natural in
$X$ and satisfies the identities~\eqref{eq:inr_mor}. We successively
reduce verification of these properties to the last identity
in~\eqref{eq:inr_mor}, whose proof is the major challenge in
establishing the claim.

Note that $\zeta$ is a natural transformation (being a composite of
natural transformations), and hence
$\zeta\, \TF f=S(f+\TF f) \zeta=S(\id+\TF f)S(f+\id) \zeta$ for any
$f$. Therefore, by the uniformity and naturality laws we obtain
\begin{align*}
\xi\,\TF f=\zeta^\istar\TF f = (S(f+\id)\zeta)^\istar=(Sf)\zeta^\istar=(Sf)\xi,
\end{align*} 
i.e.~$\xi$ is natural.
The equation $\xi  \eta = \eta$ is shown as as follows:

{\allowdisplaybreaks[0]
\begin{align*}
	~\xi  \eta 
	=&~[\eta, \xi]^{\klstar}  [\eta  \inl, (S\inr)\upsilon]^{\klstar}  \sigma
	 \out{}  \eta \\
	=&~[\eta, \xi]^{\klstar}  [\eta  \inl, S(\inr f)\upsilon]^{\klstar}  \sigma
	 \out{}  \tuo{}  \eta  \inl \\
	=&~[\eta, \xi]^{\klstar}  \eta  \inl \\
	=&~\eta.
\end{align*}
}
Compatibility of $\xi$ with Kleisli lifting follows from
Lemma~\ref{lem:klstar-iter} and compatibility of $\xi$ with iteration, which we
argue later:

\begin{align*}
\xi\, g^{\kklstar} f =&\, \xi[\TF (\inr  \inr) f, (\TF\inl)  g]^{\istar}  \inl\\
=&\, [S (\inr  \inr) \xi f, (S\inl)  \xi g]^{\istar}  \inl\\
=&\, (\xi g)^{\klstar} (\xi f).
\end{align*}
We now show that $\xi = \zeta^{\istar}$ is compatible with strength,
i.e.~$\xi \tau^{\nu}=\tau (\id \times \xi)$. With a view to applying
uniformity, we calculate $\zeta\tau^\nu$:
\begin{flalign*}
&&  \zeta \tau^{\nu} =&~ [\eta \inl, (S\inr) \upsilon]^{\klstar} \sigma \out \tau^{\nu} \\
&&=&~ [\eta \inl, (S\inr) \upsilon]^{\klstar} \sigma T(\id + \Sigma \tau^{\nu}) (T\delta) \tau (\id \times \out) & \eqref{eq:str_def} \\
&&=&~ [\eta \inl, (S\inr) \upsilon]^{\klstar} S(\id + \Sigma \tau^{\nu}) (S\delta) \tau (\id \times \sigma \out) & \by{$\sigma$ monad morph.} \\
&&=&~ [\eta \inl, (S\inr) \upsilon (\Sigma \tau^{\nu})]^{\klstar} (S\delta) \tau (\id \times \sigma \out) \\
&&=&~ [\eta \inl, (S\inr) (S\tau^{\nu}) \upsilon]^{\klstar} (S\delta) \tau (\id \times \sigma \out) & \by{naturality of $\upsilon$} \\
&&=&~ [\eta \inl, S(\id + \tau^{\nu})(S\inr) \upsilon]^{\klstar} (S\delta) \tau (\id \times \sigma \out) \\
&&=&~ S(\id +\,\tau^{\nu}) [\eta \inl, (S\inr) \upsilon]^{\klstar} (S\delta) \tau (\id \times \sigma \out) \\
&&=&~ S(\id +\,\tau^{\nu}) [\eta \inl, (S\inr) \upsilon]^{\klstar} S(\id + \rho) (S\dist) \tau (\id \times \sigma \out) & \by{def. of $\delta$}\\
&&=&~ S(\id +\,\tau^{\nu}) ([\eta \inl, (S\inr) \upsilon \rho] \dist)^{\klstar} \tau (\id \times \sigma \out) \\
&&=&~ S(\id +\,\tau^{\nu}) ([\eta \inl, (S\inr) \tau (\id \times \upsilon)] \dist)^{\klstar} \tau (\id \times \sigma \out). & \by{strong nat. of $\upsilon$} \\
\intertext{Furthermore we simplify the tail of the latter expression:}
&&&~([\eta \inl, (S\inr) \tau (\id \times \upsilon)] \dist)^{\klstar} \tau (\id \times \sigma \out) \\
&&=&~ (S\dist) (S\dist^{\mone}) ([\eta \inl, (S\inr) \tau (\id \times \upsilon)] \dist)^{\klstar} \tau (\id \times \sigma \out) \\
&&=&~ (S\dist) ([\eta (\id \times \inl), S(\id \times \inr) \tau (\id \times \upsilon)] \dist)^{\klstar} \tau (\id \times \sigma \out) & \by{def. of $\dist^{\mone}$} \\
&&=&~ (S\dist) ([\tau (\id \times \eta \inl), \tau (\id \times (S\inr) \upsilon)] \dist)^{\klstar} \tau (\id \times \sigma \out) & \by{\sc str$_3$} \\
&&=&~ (S\dist) (\tau (\id \times [\eta \inl, (S\inr) \upsilon]))^{\klstar} \tau (\id \times \sigma \out) \\
&&=&~ (S\dist) \tau (\id \times [\eta \inl, (S\inr) \upsilon]^{\klstar}) (\id \times \sigma \out) \\
&&=&~ (S\dist) \tau (\id \times \zeta).
\end{flalign*}
We have obtained in summary that $\zeta \tau^{\nu}=S(\id +\,\tau^{\nu})(S\dist) \tau (\id \times \zeta)$. Therefore, by uniformity and compatibility of strength and iteration we obtain the desired identity:
\[
  \xi \tau^{\nu} =\zeta^\istar \tau^{\nu}= ((S\dist) \tau (\id \times \zeta))^{\istar} = \tau (\id \times \zeta^\istar)= \tau (\id \times \xi). 
\]
Finally, we are left to show that
\begin{align}\label{eq:xi-morph}
\xi f^\istar = (\xi f)^\istar
\end{align}
for any $f:X\to\TF(Y+X)$ %
where $\xi=([\eta\inl,(S\inr)\upsilon]^\klstar\sigma\out)^\istar$. We proceed by
successive reduction of the unrestricted identity~\eqref{eq:xi-morph} to the partial
cases when $f$ is guarded, and when $f$ is \emph{strongly guarded}. The latter
auxiliary notion is defined as follows. Recall that guardedness of $f$ means that 
$\out f$ factors through some $g:X\to T(Y+\SigF\TF(Y+X))$. We us call $f$ 
\emph{strongly guarded} if moreover there is $g':X\to T(Y+\SigF X)$ such that 
$\out f = T(\inl+\SigF(\eta^\nu\inr)) g'$.

\begin{citemize}
\item\textit{Reduction from unrestricted $f$ to guarded $f$.} Assuming that~\eqref{eq:xi-morph}
holds for guarded $f$, we obtain that for any $f$, $\xi f^\istar=\xi (\guard f)^\istar = (\xi \guard f)^\istar$. 
We are left to show that $(\xi \guard f)^\istar=(\xi f)^\istar$. To that end, consider the morphism $w$ given by the composition
\begin{align*}
X\xto{~\out f~} T( (Y+X)+\SigF\TF(Y+ X)) \xto{~[ \eta(\inl+\id),(S\inl)\xi^\klstar\upsilon]^\klstar\sigma~} S((Y+X)+X).
\end{align*}
Now, on the one hand
\begin{flalign*}
&&(S[\id,\inr]w)^\istar =&\, ([ \eta [\inl,\inr],\xi^\klstar\upsilon]^\klstar\sigma\out f)^\istar\\
&&=&\, ([\eta,\xi^\klstar\upsilon]^\klstar\sigma\out f)^\istar\\
&&=&\, ([\eta,\xi]^\klstar [\eta\inl,(S\inr)\upsilon]^\klstar\sigma\out f)^\istar \\
&&=&\, (\xi f)^\istar\\
\intertext{and on the other hand, by naturality of $\argument^\istar$,}
&&w^{\istar\istar} =&\,\bigl(([ \eta(\inl+\id),(S\inl)\xi^\klstar\upsilon]^\klstar\sigma\out f)^\istar\bigr)^\istar\\
&&=&\,\bigl(([(S\inl)[\eta\inl,\xi^\klstar\upsilon],\eta\inr]^\klstar S[\inl+\id,\inl\inr]\sigma\out f)^\istar\bigr)^\istar\\
&&=&\,\bigl(([(S\inl)[\eta\inl,\xi^\klstar\upsilon],\eta\inr]^\klstar \sigma (T\pi)\out f)^\istar\bigr)^\istar\\
&&=&\,\bigl([\eta\inl,\xi^\klstar\upsilon]^\klstar(\sigma (T\pi) \out f)^\istar\bigr)^\istar&\by{naturality}\\
&&=&\,\bigl([\eta,\xi^\klstar\upsilon]^\klstar S(\inl+\id) (\sigma (T\pi) \out f)^\istar\bigr)^\istar\\
&&=&\,\bigl([\eta,\xi^\klstar\upsilon]^\klstar\sigma T(\inl+\id) ((T\pi) \out f)^\istar\bigr)^\istar\\
&&=&\,([\eta,\xi^\klstar\upsilon]^\klstar\sigma \out\guard f)^\istar&\by{definition of~$\guard$}\\
&&=&\, ([\eta,\xi]^\klstar [\eta\inl,(S\inr)\upsilon]^\klstar\sigma\out \guard f)^\istar\\
&&=&\,(\xi\guard f)^\istar.&\by{fixpoint for~$\xi$}
\end{flalign*}
We thus obtain by the codiagonal law that
\begin{align*}
(\xi f)^\istar = (S[\id,\inr]w)^\istar = w^{\istar\istar} = (\xi\guard f)^\istar.
\end{align*}
\item\textit{Reduction from guarded $f$ to strongly guarded $f$.}  We
  proceed under the assumption that $f$ is guarded, i.e.\
  $\out f = T(\inl + \id) g$ for some $g : X \to
  T(Y+\SigF\TF(Y+X))$. Let $w$ be the following morphism:
\begin{align*}
\TF(Y+X)\xto{~\out~} &~T( (Y+X)+\SigF\TF(Y+X))\\
 \xto{~[[\eta\inl\inl, [\eta\inl\inl,(S\inl\inr)\upsilon]^\klstar\sigma g],(S\inr)\upsilon]^\klstar\sigma~} &~S((Y+\TF(Y+X))+\TF(Y+X)).
\end{align*}
Then, on the one hand, using dinaturality (Lemma~\ref{lem:dinat}),
\begin{flalign*}
&&w^{\istar\istar} =&\,\bigl(([[\eta\inl\inl, [\eta\inl\inl,(S\inl\inr)\upsilon]^\klstar\sigma g],(S\inr)\upsilon]^\klstar\sigma\out)^\istar\bigr)^\istar\\
&&=&\,\bigl([\eta\inl,  [\eta\inl,(S\inr)\upsilon]^\klstar\sigma g]^\klstar([\eta\inl,(S\inr)\upsilon]^\klstar\sigma\out)^\istar\bigr)^\istar&\by{naturality}\\
&&=&\,([\eta\inl,  [\eta\inl,(S\inr)\upsilon]^\klstar\sigma g]^\klstar\xi)^\istar&\by{definition~of~$\xi$}\\
&&=&\,[\eta, \left([\eta\inl,\xi]^{\klstar}  [\eta\inl,(S\inr)\upsilon]^\klstar\sigma g\right)^{\istar}]^{\klstar} \xi&\by{dinaturality}\\
&&=&\,[\eta, ([\eta,\xi]^\klstar [\eta\inl\inl,(S\inr)\upsilon]^\klstar\sigma g)^{\istar}]^{\klstar} \xi\\
&&=&\,[\eta, ([\eta,\xi]^\klstar [\eta\inl,(S\inr)\upsilon]^\klstar\sigma\out f)^{\istar}]^{\klstar} \xi&\by{definition~of~$g$}\\
&&=&\,[\eta, ([\eta,\xi]^\klstar \xi f)^{\istar}]^{\klstar} \xi&\by{definition~of~$\xi$}\\
&&=&\,[\eta, (\xi f)^{\istar}]^{\klstar} \xi.&\by{fixpoint}
\end{flalign*}
and hence $w^{\istar\istar} \eta^\nu\inr =[\eta, (\xi f)^{\istar}]^{\klstar} \xi \eta^\nu\inr =[\eta, (\xi f)^{\istar}]^{\klstar} \eta^\nu\inr= (\xi f)^{\istar}$. Next we introduce the following morphism $t$:
{\samepage
\begin{align*}
\TF(Y+X)\xto{~\out~} &~T((Y+X)+\SigF\TF(Y+X))\\
 \xto{~[[\eta \inl,g],\eta \inr]^{\klstar}~} &~T(Y+\SigF\TF(Y+X))\\
 \xto{~T(\inl+\SigF(\eta^{\nu}\inr))~} &~T((Y+\TF(Y+X))+\SigF\TF(Y+\TF(Y+X)))\\
 \xto{~\tuo~} &~\TF(Y+\TF(Y+X)).
\end{align*}}%
By definition, $t$ is strongly guarded, hence $\xi t^\istar=(\xi t)^\istar$.
\begin{align*}
\xi t =&\, [\eta,\xi]^\klstar\xi t\\
=&\, [\eta,\xi]^\klstar [\eta\inl,(S\inr)\upsilon]^\klstar\sigma\out t\\
=&\, [\eta,\xi^\klstar\upsilon]^\klstar\sigma\out t\\
=&\, [\eta,\xi^\klstar\upsilon]^\klstar S(\inl+\SigF(\eta^{\nu}\inr))[[\eta \inl,\sigma g],\eta \inr]^{\klstar}\sigma\out\\
=&\, [\eta\inl,\xi^\klstar S(\eta^\nu\inr)\upsilon]^\klstar [[\eta \inl,\sigma g],\eta \inr]^{\klstar}\sigma\out\\
=&\, [\eta\inl,(S\inr)\upsilon]^\klstar [[\eta \inl,\sigma g],\eta \inr]^{\klstar}\sigma\out\\
=&\, [[\eta\inl, [\eta\inl,(S\inr)\upsilon]^\klstar\sigma g],(S\inr)\upsilon]^\klstar\sigma\out\\
=&\, S[\id,\inr] w
\end{align*}
Using the identities derived above and the codiagonal law, we obtain
that
\begin{align*}
(\xi f)^\istar=w^{\istar\istar}\eta^\nu\inr=(S[\id,\inr] w)^\istar\eta^\nu\inr=(\xi t)^\istar\eta^\nu\inr=\xi t^\istar \eta^\nu\inr.
\end{align*}
We are left to show that $\xi t^\istar \eta^\nu\inr=\xi f^\istar$. We strengthen the latter to $t^\istar=[\eta^\nu,f^\istar]^\kklstar$, which would imply it as follows: $\xi t^\istar\eta^\nu\inr=\xi[\eta^\nu,f^\istar]^\kklstar\eta^\nu\inr=\xi f^\istar$. 

Since $t$ is guarded, we will be done once we show
that $[\eta^\nu,f^\istar]^\kklstar$ satisfies the fixpoint law for $t^\istar$. It is easy to verify that
$\out f^\istar = T(\id+\SigF[\eta^\nu,f^\istar]^\kklstar) g$. Then we have
\begin{align*}
\out [\eta^\nu, f^\istar]^\kklstar=&\,[\out[\eta^\nu,f^\istar],\eta\inr \SigF[\eta^\nu,f^\istar]^\kklstar]^\klstar\out\\
=&\,[[\eta\inl,\out f^\istar],\eta\inr \SigF[\eta^\nu,f^\istar]^\kklstar]^\klstar\out\\
=&\,[[\eta\inl,T(\id+\SigF[\eta^\nu,f^\istar]^\kklstar) g],\eta\inr \SigF[\eta^\nu,f^\istar]^\kklstar]^\klstar \out
\intertext{while, on the other hand,}
\out t^\istar=&\,\out [\eta^\nu,t^\istar]^\kklstar\tuo T(\inl+\SigF(\eta^{\nu}\inr))[[\eta \inl,g],\eta \inr]^{\klstar}\out\\
=&\,[\out[\eta^\nu,t^\istar],\eta\inr \SigF[\eta^\nu,t^\istar]^\kklstar]^\klstar T(\inl+\SigF(\eta^{\nu}\inr))[[\eta \inl,g],\eta \inr]^{\klstar}\out\\
=&\,[\out\eta^\nu,\eta\inr \SigF t^\istar]^\klstar [[\eta \inl,g],\eta \inr]^{\klstar}\out\\
=&\, [[\eta\inl,T(\id+\SigF t^\istar) g],\eta\inr \SigF t^\istar]^\klstar\out.
\end{align*}
Hence, indeed, $[\eta^\nu,f^\istar]^\kklstar=t^\istar$.

 \item\textit{Strongly guarded $f$.}
Finally, let us show~\eqref{eq:xi-morph} with strongly guarded $f$. Suppose that $h$ is such that $\out f = T(\inl+\SigF(\eta^\nu\inr)) h$. Recall that $\xi=([\eta\inl,(S\inr)\upsilon]^\klstar\sigma\out)^\istar$. By uniformity, it suffices to show that
\begin{align*}
[\eta\inl,(S\inr)\upsilon]^\klstar\sigma\out f^\istar = S(\id+f^\istar)\xi f.
\end{align*}
On the one hand,
\begin{align*}
[\eta\inl,(S\inr&)\upsilon]^\klstar\sigma\out f^\istar\\
=&\, [\eta\inl,(S\inr)\upsilon]^\klstar\sigma\out [\eta^\nu,f^\istar]^\kklstar f\\
=&\, [\eta\inl,(S\inr)\upsilon]^\klstar[\out[\eta^\nu,f^\istar],\eta\inr \SigF[\eta^\nu,f^\istar]^\kklstar]^\klstar \sigma\out f\\
=&\, [\eta\inl,(S\inr)\upsilon]^\klstar[\out[\eta^\nu,f^\istar],\eta\inr \SigF[\eta^\nu,f^\istar]^\kklstar]^\klstar S(\inl+\SigF(\eta^\nu\inr))\sigma h\\
=&\, [\eta\inl,(S\inr)\upsilon]^\klstar[\eta\inl,\eta\inr \SigF f^\istar]^\klstar \sigma h\\
=&\, [\eta\inl,S(\inr f^\istar)\upsilon]^\klstar\sigma h.
\intertext{and on the other hand,}
S(\id+f^\istar)\xi f
=&\, S(\id+f^\istar) [\eta,\xi]^\klstar [\eta\inl,(S\inr)\upsilon]^\klstar\sigma\out f\\
=&\, S(\id+f^\istar) [\eta,\xi]^\klstar [\eta\inl,(S\inr)\upsilon]^\klstar\sigma T(\inl+\SigF(\eta^\nu\inr)) h\\
=&\, S(\id+f^\istar) [\eta,\xi]^\klstar [\eta\inl,(S\inr)\upsilon]^\klstar S(\inl+\SigF(\eta^\nu\inr)) \sigma h\\
=&\, S(\id+f^\istar) [\eta,\xi]^\klstar [\eta\inl\inl,S(\inr\eta^\nu\inr)\upsilon]^\klstar  \sigma h\\
=&\, S(\id+f^\istar)  [\eta\inl,\xi^\klstar S(\eta^\nu\inr)\upsilon]^\klstar  \sigma h\\
=&\, S(\id+f^\istar)  [\eta\inl,(S\inr)\upsilon]^\klstar  \sigma h\\
=&\, [\eta\inl,S(\inr f^\istar)\upsilon]^\klstar  \sigma h.
\end{align*}
\end{citemize}
This finishes the proof that $\xi$ is a complete Elgot monad morphism.

\subsubsection*{Uniqueness}

Let $\rho:\BBTF\to\BBS$ be a complete Elgot monad morphism such that
$\sigma=\rho\ext$ and $\upsilon=\rho\tuo\,\eta\inr \SigF\eta^\nu$. We
have to show that
$\rho=\xi=([\eta\inl,(S\inr)\upsilon]^\klstar\sigma \out)^\istar$. We
rewrite the last term as follows:
\begin{align*}
&\bigl([\eta\inl,(S\inr)\upsilon]^\klstar\sigma \out\bigr)^\istar\\
=\,&([\eta\inl,(S\inr)\rho\tuo\eta\inr \SigF\eta^\nu]^\klstar\sigma \out)^\istar\\
=\,&([\eta\inl,\rho\, (\TF\inr)\tuo\eta\inr \SigF\eta^\nu]^\klstar\sigma \out)^\istar\\
=\,&([\eta\inl,\rho \tuo T(\inr+(\SigF\TF\inr))\eta\inr \SigF\eta^\nu]^\klstar\sigma \out)^\istar\\
=\,&([\eta\inl,\rho \tuo\eta\inr \SigF(\TF\inr)\SigF\eta^\nu]^\klstar\sigma \out)^\istar\\
=\,&([\eta\inl,\rho \tuo\eta\inr \SigF(\eta^\nu\inr)]^\klstar\sigma \out)^\istar\\
=\,&((\rho[\eta\inl,\tuo\eta\inr \SigF(\eta^\nu\inr)])^\klstar\rho\ext \out)^\istar\\
=\,&(\rho[\eta\inl,\tuo\eta\inr \SigF(\eta^\nu\inr)]^\kklstar\ext \out)^\istar\\
=\,&\rho([\eta\inl,\tuo\eta\inr \SigF(\eta^\nu\inr)]^\kklstar\ext \out)^\istar.
\end{align*}
To finish the calculation we have to verify that the term after $\rho$
vanishes. Note that the term under the iteration operator is
guarded. Hence, it suffices to show that $\id$ satisfies the
corresponding characteristic equation for iteration, i.e.\ that
\begin{align*}
[\eta,\id]^\kklstar [\eta\inl,\tuo\eta\inr \SigF(\eta^\nu\inr)]^\kklstar\ext \out = \id.
\end{align*}
We reduce the left hand side to $\id$ as follows:
\begin{flalign*}
\qquad&&&[\eta,\id]^\kklstar [\eta\inl, \tuo\eta\inr \SigF(\eta^\nu\inr)]^\kklstar\ext \out\\
&&=\,&[\eta,[\eta,\id]^\kklstar\tuo\eta\inr \SigF(\eta^\nu\inr)]^\kklstar\ext \out\\
&&=\,&[\eta,\tuo[\out[\eta,\id],\eta\inr \SigF{[\eta,\id]^\kklstar}]^\klstar\eta\inr \SigF(\eta^\nu\inr)]^\kklstar\ext \out\\
&&=\,&[\eta,\tuo\eta\inr]^\kklstar\ext \out\\
&&=\,&[\eta,\tuo\eta\inr]^\kklstar\tuo (T\inl)\out\\
&&=\,&\tuo[\out  [\eta,\tuo\eta\inr],\eta\inr \SigF[\eta,\tuo\eta\inr]^\kklstar]^\klstar
 (T\inl)\out
&&&&  \by{Theorem~\ref{lem:kl_dec}}\\
&&=\,&\tuo(\out[\eta,\tuo\eta\inr])^\klstar\out\\
&&=\,&\tuo[\out\eta,\eta\inr]^\klstar\out\\
&&=\,&\tuo[\eta\inl,\eta\inr]^\klstar\out\\
&&=\,&\tuo\out\\
&&=\,&\id.
\end{flalign*}
This finishes the proof.
\end{proof}

\noindent The existence and the exact shape of the initial complete
Elgot monad $\BBL$ mentioned in Corollary~\ref{cor:sum} depend on the
properties of the base category. We recall the key definition of a
hyperextensive category~\cite{AdamekBorgerEtAl08}:
\begin{defi}
  A category $\BC$ is \emph{hyperextensive} if
  \begin{enumerate}
  \item $\BC$ has countable coproducts that are \emph{disjoint}, i.e.\
  the pullback of any two distinct injections is an initial object,
  and \emph{universal}, i.e.\ stable under pullbacks; and
  \item in $\BC$, subobjects that are coproduct injections are closed
  under countable disjoint unions; that is, given countably many
  pairwise disjoint subobjects $A_i\to B$ that are coproduct
  injections, their union $\sum_i A_i\to B$ is again a coproduct
  injection.
  \end{enumerate}
\end{defi}
\noindent Examples of hyperextensive categories include $\Set$,
$\Cpo$, and bounded complete metric spaces as well as all presheaf
categories~\cite{AdamekBorgerEtAl08}. We refer to subobjects whose
inclusion morphisms are (binary) coproduct injections as
\emph{summands}, and given a summand, we refer to the partner
injection of the corresponding binary coproduct as its \emph{coproduct
  complement} (we will not need uniqueness of complements). In this
terminology, summands are closed under pullbacks (i.e.~under
preimages) and under countable disjoint unions in hyperextensive
categories. From countable disjoint unions we obtain unions of chains:
\begin{lem}\label{lem:hyperext-chains}
  Let\/ $\BC$ be hyperextensive. Then\/ $\BC$ has unions of\/
  $\omega$-chains of summands; such unions are again summands, and are
  \emph{universal}, i.e.~stable under pullbacks (and, hence, under
  products).
\end{lem}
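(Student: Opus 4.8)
The plan is to build the union of an $\omega$-chain of summands out of the countable disjoint unions guaranteed by hyperextensivity, by passing to the successive ``differences'' of the chain. Concretely, suppose we are given a chain $A_0\into A_1\into A_2\into\cdots\into B$ of summands of $B$ with all connecting morphisms also summand inclusions. The first step is to produce, for each $i$, a coproduct complement of $A_i$ inside $A_{i+1}$: since $A_i\into A_{i+1}$ is a summand, write $A_{i+1}\cong A_i + D_{i+1}$ for some ``difference'' object $D_{i+1}$, and set $D_0 = A_0$. I would then check, using disjointness and universality of the binary coproducts involved, that the composite $D_{i+1}\into A_{i+1}\into B$ is again a summand of $B$, and that the subobjects $D_0, D_1, D_2,\ldots$ of $B$ are pairwise disjoint (each $D_i$ is disjoint from $A_{i-1}$ by construction, hence from every $D_j$ with $j<i$, and disjointness propagates up the chain because the inclusions are summand inclusions).

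The second step invokes clause~(2) of hyperextensivity: the pairwise disjoint family $(D_i\into B)_{i\in\Nat}$ of summands has a union $\sum_i D_i\into B$, and this union is itself a summand of $B$. I claim $\bigcup_i A_i := \sum_i D_i$ is the union of the chain in the poset of subobjects of $B$. That it contains each $A_n$ is immediate, since $A_n \cong D_0 + D_1 + \cdots + D_n$ is a summand of $\sum_i D_i$ compatibly with the inclusions into $B$; that it is the least such upper bound follows because any summand (indeed any subobject) of $B$ containing all $A_n$ must contain each $D_i$, hence their disjoint union, by the defining property of coproducts. This shows $\BC$ has unions of $\omega$-chains of summands and that they are again summands.

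For the final claim, \emph{universality} (stability under pullback), I would argue that pullback along any $f:C\to B$ preserves the whole construction. Countable coproducts in $\BC$ are universal by clause~(1) of hyperextensivity, so $f^*\bigl(\sum_i D_i\bigr)\cong\sum_i f^*D_i$; moreover pullback preserves summands (preimages of coproduct injections are coproduct injections, a standard consequence of universality of binary coproducts) and preserves disjointness (the pullback of initial objects along any morphism is initial, again by universality). Hence $(f^*D_i\into C)_i$ is again a pairwise disjoint family of summands of $C$, its disjoint union is $f^*\bigl(\bigcup_i A_i\bigr)$, and unwinding the definitions this is precisely the union of the pulled-back chain $(f^*A_i\into C)_i$ — which one checks is still an $\omega$-chain of summands with summand connecting maps, so the union-construction applies to it verbatim. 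Stability under products is the special case $f = \fst : C\times B\to B$.

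The main obstacle I expect is purely bookkeeping: verifying that the ``differences'' $D_i$ are genuinely pairwise disjoint as subobjects of $B$ and that each $D_i\into B$ is again a coproduct injection, since these facts must be chased through iterated binary coproduct decompositions using only disjointness and universality, and one has to be careful that the coproduct complements — which are not asserted to be unique — can be chosen coherently along the chain. Once the family $(D_i\into B)_i$ is in hand, the rest is a direct application of the two hyperextensivity axioms together with the standard lemma that universal coproducts are stable under pullback.
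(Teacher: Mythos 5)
Your proposal is correct and takes essentially the same route as the paper's proof: replace each link of the chain by the coproduct complement of its predecessor, invoke the countable-disjoint-union axiom to get a summand, and inherit universality of the union from universality of countable coproducts. The only cosmetic difference is that you \emph{assume} the connecting maps $A_i\into A_{i+1}$ are summand inclusions, whereas the paper observes this is automatic for nested summands of $B$ by universality of binary coproducts --- a fact your own pullback argument already contains.
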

\begin{proof}
  Any ascending chain of summands can be transformed into a disjoint
  union of summands: if $A_1$ and $A_2$ are summands of $X$ and $A_1$
  is contained in $A_2$, then by universality of coproducts, $A_1$ is
  also a summand of $A_2$ so we can replace $A_2$ with the coproduct
  complement of $A_1$ in $A_2$, preserving the union. Universality of
  unions of ascending chains of summands is then inherited from
  countable disjoint unions.
\end{proof}

\begin{thm}\label{thm:lifting-initial}
  Let $\BC$ be hyperextensive and have binary coproducts. Then the
  monad $\BBL$ given by $LX=X+1$ is $\omega$-continuous. Equipped with
  the arising complete Elgot monad structure according to
  Theorem~\ref{thm:co_elg}, $\BBL$ is the initial complete Elgot monad
  over $\BC$.
\end{thm}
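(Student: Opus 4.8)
The plan is to prove the two assertions in turn. For $\omega$-continuity of $\BBL$, $LX=X+1$: order each hom-set $\Hom_\BC(X,Y+1)$ of $\BC_\BBL$ by definedness. Write $\mathrm{dom}(f)\into X$ for the summand obtained as the preimage of $\inl:Y\into Y+1$ along $f$ — a summand because coproducts are disjoint and universal — so that $f=[\inl\comp f_0,\inr\comp\,!]$ for a unique $f_0:\mathrm{dom}(f)\to Y$, and declare $f\sqsubseteq g$ iff $\mathrm{dom}(f)\subseteq\mathrm{dom}(g)$ with $f_0=g_0$ there; the bottom is the nowhere-defined map $X\to 1\to Y+1$. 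The domains of an $\omega$-chain form an ascending chain of summands whose union $D$ is a summand and is in fact the colimit of the chain (Lemma~\ref{lem:hyperext-chains}), so the $f_0$'s glue to a map $D\to Y$ that gives the supremum. Continuity of Kleisli composition, of copairing and of strength, together with preservation of $\bot$, all reduce to: preimages of summands are summands, such preimages commute with unions of $\omega$-chains of summands, and these unions are stable under products — all delivered by hyperextensivity and Lemma~\ref{lem:hyperext-chains}. Theorem~\ref{thm:co_elg} then makes $\BBL$ a complete Elgot monad in which $g^\istar$ is the least fixpoint of $h\mapsto[\eta,h]^\klstar\comp g$.

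For initiality, let $\BBS$ be any complete Elgot monad and put $\xi_X=[\eta^\BBS_X,\bot^\BBS_{1,X}]:LX\to SX$. Uniqueness is immediate: a complete Elgot monad morphism $\rho:\BBL\to\BBS$ satisfies $\rho\comp\eta^\BBL=\eta^\BBS$, i.e.\ $\rho\comp\inl=\eta^\BBS$, while $\rho\comp\inr=\rho\comp\bot^\BBL_{1,-}$, and since $\bot^\BBL_{1,-}=(\eta^\BBL\comp\inr)^\istar$ is built from $\eta$, $\inr$ and $\istar$ — all preserved by $\rho$ — we get $\rho\comp\inr=\bot^\BBS_{1,-}$, hence $\rho=\xi$. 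For existence, that $\xi$ is a natural transformation and a strong monad morphism ($\xi\comp\eta=\eta$; $\xi\comp f^\klstar=(\xi\comp f)^\klstar\comp\xi$; $\xi\comp\tau^\BBL=\tau^\BBS\comp(\id\times\xi)$) follows by routine computations using constancy and coconstancy of unproductive divergence (Lemma~\ref{lem:divergence}), the law ({\sc str$_3$}), and compatibility of strength with iteration in $\BBS$; the substantive point is that $\xi$ preserves iteration.

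For the latter, fix $g:X\to L(Y+X)$. Pushing the $\BBL$-unfolding law through the monad morphism $\xi$ shows $\xi\comp g^\istar=[\eta^\BBS,\xi\comp g^\istar]^\klstar\comp(\xi\comp g)$, so $\xi g^\istar$ is \emph{a} solution of the $\BBS$-unfolding equation for $\xi g$; the work is to identify it with the chosen one $(\xi g)^\istar$. Analysing the Kleene chain computing $g^\istar$ and using hyperextensivity, $X$ decomposes as $D+D^c$ with $D=\mathrm{dom}(g^\istar)$, and $g$ respects this: on $D$ it is total, taking values in $Y+D$, and ``reaching'' — every point eventually terminates — with $g^\istar\comp\iota_D=\eta^\BBL_Y\comp p$ for the eventual-output map $p:D\to Y$; on $D^c$ it stays inside $D^c$ or diverges at once, with $g^\istar\comp\iota_{D^c}=\bot^\BBL$. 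Hence $\xi g^\istar=[\eta^\BBS_Y\comp p,\bot^\BBS_{D^c,Y}]$. On the other side, uniformity in $\BBS$ applied along $\iota_D$ gives $(\xi g)^\istar\comp\iota_D=(\eta^\BBS_{Y+D}\comp g|_D)^\istar$; since $g|_D$ is reaching, the unfolding equation $h=[\eta^\BBS_Y,h]\comp g|_D$ has a \emph{unique} solution — two solutions agree on each stage of the filtration of $D$ by ``terminates in $\le n$ steps'' and $D$ is the colimit of that filtration — and $\eta^\BBS_Y\comp p$ is one, so $(\xi g)^\istar\comp\iota_D=\eta^\BBS_Y\comp p$. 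Along $\iota_{D^c}$, decompose $D^c$ further as $D^c_{<\infty}+D^{c,\infty}$ — points diverging within finitely many $g$-steps, and points surviving forever; on $D^{c,\infty}$, $g$ restricts to an endo-loop, so uniformity plus Lemma~\ref{lem:one-divergence} give $(\xi g)^\istar=\bot^\BBS$ there, and on $D^c_{<\infty}$ an induction along the ascending chain of ``diverges within $\le n$ steps'' summands, closed off by passing to the colimit, yields $(\xi g)^\istar=\bot^\BBS$ as well. Therefore $(\xi g)^\istar=[\eta^\BBS_Y\comp p,\bot^\BBS_{D^c,Y}]=\xi g^\istar$, and $\xi$ is a complete Elgot monad morphism.

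The main obstacle is precisely the treatment of the ``divergent part'' $D^c$. It is tempting to argue that any output-free $\BBS$-recursion iterates to $\bot^\BBS$, but this is false in general complete Elgot monads — a productive silent loop (e.g.\ in a coinductive resumption monad, where it is even guarded) iterates to a non-$\bot$ ``infinite'' element. What saves the argument is that $\xi g|_{D^c}$ is the $\xi$-image of a $\BBL$-morphism, hence assembled from units and unproductive divergences only, so that on $D^c$ the recursion performs no visible effect; making this precise forces the split of $D^c$ into a genuine endo-loop core, handled by Lemma~\ref{lem:one-divergence}, and a finitely-divergent remainder handled by induction along a chain of summands. This step, together with the unique-solvability claim on the reaching part $D$, is where the substance lies; both rest on hyperextensivity of $\BC$ supplying the requisite summands, $\omega$-colimits of chains of summands, and their stability under products.
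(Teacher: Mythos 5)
Your proposal follows essentially the same route as the paper: the same extension ordering on summand-partial morphisms for $\omega$-continuity (with Lemma~\ref{lem:hyperext-chains} supplying suprema and their universality), the same unique candidate $\xi=[\eta,\bot]$ forced by preservation of unit and of unproductive divergence, and for preservation of iteration the same decomposition of $X$ into the parts that terminate in $n$ steps, the parts that hit $\bot$ after $n$ steps, and a residual part $Z$ that survives forever, with the terminating and finitely-divergent parts handled by unfolding $n$ times and joint epicity of the summand inclusions, exactly as in the paper (your ``unique solution on the reaching part'' is just a repackaging of that computation).

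The one step you assert without argument is precisely the one the paper goes out of its way to avoid: that on $Z$ the equation morphism restricts to an endo-loop, i.e.\ that the map $m_1:Z\to X$ with $e\comp\iota_Z=\eta\inr\comp m_1$ factors through $\iota_Z:Z\to X$, so that uniformity plus Lemma~\ref{lem:one-divergence} apply directly. This claim is true, but it is not free: you need that the preimage under $m_1$ of each $X_i$ and each $D_i$ is $0$ (if some summand of $Z$ were mapped by $m_1$ into $X_i$, it would by the pullback property land in $X_{i+1}$, contradicting disjointness of $Z$ from the $X_j$, and similarly for the $D_i$), and then that the preimage of the \emph{countable} union $\sum_i X_i+\sum_i D_i$ is the sum of these preimages, which uses universality of countable coproducts; only then does $Z=m_1^{-1}(Z)$ and the endomorphism exist. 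The paper instead sidesteps this factorization entirely: it forms the $\omega$-fold copower $\widehat Z$ and the cotuple $[\iota_Z,\hat m]:Z+\widehat Z\to X$ of the iterated successor maps $m_i:Z\to X$, observes that $\xi\comp e\comp[\iota_Z,\hat m]$ has the shape required for uniformity with respect to the shift map $Z+\widehat Z\to\widehat Z$, and then invokes Lemma~\ref{lem:one-divergence}. So either fill in the factorization argument above (making explicit where universality of countable coproducts enters) or adopt the copower trick; as written, the endo-loop assertion is the genuine gap in an otherwise correct proof.
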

\noindent (The conditions of the theorem imply our running assumption
that $\BC$ is distributive~\cite{CarboniLackEtAl93}.)
\begin{rem}\label{rem:iter-def}
  Let us spell out the definition of the iteration operator figuring
  in the statement of Theorem~\ref{thm:lifting-initial}
  explicitly. Suppose that $e:X\to L(Y+X)$.  Let $X_1$ be the preimage
  of\/ $Y$ under $e$ and $e_1:X_1\to Y$ the arising restriction
  of~$e$; for $i\geq 1$ let $X_{i+1}$ be the preimage of $X_i$ under
  $e$, and let $e_{i+1}:X_{i+1}\to X_i$ be the arising restriction
  of~$e$. By universality of finite coproducts, the $X_i$ are pairwise
  disjoint summands. By stability of summands under countable disjoint
  unions, $\sum_i X_i$ is a summand of~$X$, whose complement we
  denote~$X_{\infty}$.  We obtain the presentation
  $X=\sum_i X_i + X_{\infty}$. Now $e^\istar:X\to LY$ is the universal
  map induced by the $\eta\comp e_1\ldots e_i:X_i\to LY$ and
  $\bot: X_{\infty}\to LY$.
\end{rem}

\noindent Now $L$ clearly admits only very simple recursive
definitions: an equation morphism $e:X\to L(Y+X)=(Y+X)+1$ essentially
defines each variable in $X$ either as a result from $Y$ or as another
variable from $X$ or as divergence. In preparation of the proof of
Theorem~\ref{thm:lifting-initial}, the following lemma shows that the
solution of all possible such definitions of this shape is, in any
complete Elgot monad, uniquely determined by the complete Elgot monad laws.
\begin{lem}\label{lem:solutions-lifting}
  Let\/ $\BBT$ be a complete Elgot monad over a hyperextensive category $\BC$, let $e:X\to T(Y+X)$, and let
  $m:Z\to X$. Then the following holds.
  \begin{enumerate}
  \item If $e\comp m=\eta\comp\inl u$ for some $u:Z\to Y$ then
  $e^\istar\comp m=\eta\comp u$.
  \item If $e\comp m=\bot_{Z,Y+X}$ then $e^\istar\comp m = \bot_{Z,X}$.
  \item If $e\comp m=\eta\inr u$ for some $u:Z\to X$ then
  $e^\istar\comp m= e^\istar\comp u$.
  \item If $e\comp m=\eta\inr m\comp u$ for some $u:Z\to Z$ then
  $e^\istar\comp m=\bot_{Z,Y}$.
  \end{enumerate}
\end{lem}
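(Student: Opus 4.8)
The plan is to deduce all four statements directly from the Elgot-monad axioms of Definition~\ref{defn:elgot}, together with the properties of unproductive divergence recorded in Lemmas~\ref{lem:divergence} and~\ref{lem:one-divergence}; hyperextensivity of $\BC$ will play no role in this particular argument. The common device for items~(1)--(3) is the \emph{unfolding} law, which gives
\[
  e^\istar \comp m = [\eta, e^\istar]^\klstar \comp e \comp m,
\]
so that each hypothesis on $e \comp m$ transports verbatim into a statement about $e^\istar \comp m$.

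For~(1), substituting $e \comp m = \eta \comp \inl \comp u$ and using $f^\klstar \comp \eta = f$ yields $e^\istar \comp m = [\eta, e^\istar] \comp \inl \comp u = \eta \comp u$. For~(3), the same substitution with $e \comp m = \eta \comp \inr \comp u$ gives $e^\istar \comp m = [\eta, e^\istar] \comp \inr \comp u = e^\istar \comp u$. For~(2), unfolding leaves $e^\istar \comp m = [\eta, e^\istar]^\klstar \comp \bot_{Z, Y+X}$, and \emph{coconstancy} (Lemma~\ref{lem:divergence}), applied with $h = [\eta, e^\istar] : Y+X \to TY$, rewrites the right-hand side as $\bot_{Z, Y}$; the conclusion of~(2) should therefore read $e^\istar \comp m = \bot_{Z, Y}$, as forced by the codomain $TY$ of $e^\istar$.

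Item~(4) is the only one needing more than unfolding: unfolding alone merely shows that $e^\istar \comp m$ is a fixpoint of precomposition with $u$, which in an arbitrary complete Elgot monad need not equal $\bot_{Z,Y}$. Here I would instead invoke \emph{uniformity} with the auxiliary equation morphism $g = \eta \comp \inr \comp u : Z \to T(Y+Z)$: the compatibility hypothesis $e \comp m = T(\id + m) \comp g$ holds because $T(\id + m) \comp \eta \comp \inr \comp u = \eta \comp \inr \comp m \comp u$ by naturality of $\eta$, so uniformity gives $e^\istar \comp m = g^\istar$; and since $g$ has exactly the shape $\eta \inr u$ with $u : Z \to Z$, Lemma~\ref{lem:one-divergence} identifies $g^\istar$ with $\bot_{Z, Y}$. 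The whole lemma thus amounts to a handful of lines; the only bookkeeping that needs care is matching up the coproduct injections and keeping track of the types of the auxiliary morphisms $g$, which is routine and presents no genuine obstacle.
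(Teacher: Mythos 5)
Your proof is correct and follows essentially the same route as the paper: items (1)--(3) by the unfolding law (plus coconstancy of $\bot$ for (2)), and item (4) by rewriting $e\comp m=T(\id+m)\comp\eta\inr u$ via naturality of $\eta$, applying uniformity to get $e^\istar\comp m=(\eta\inr u)^\istar$, and concluding with Lemma~\ref{lem:one-divergence}. Your side remark is also right that the conclusion of item (2) should read $\bot_{Z,Y}$ (as forced by the codomain $TY$ of $e^\istar$), which is a typo in the statement rather than a flaw in your argument.
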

\noindent That is: If a variable is defined as a result value, then
the solution of the recursive definition for that variable is that
result value; if a variable is defined as $\bot$, then the solution is
$\bot$; if a variable is defined as another variable, then its
solution is that of the other variable; and if a set of variables is
defined by mutual recursion without any base case and without use of
the algebraic operations of the monad, then the solution for all these
variables is $\bot$.
\begin{proof}
  The first three claims are immediate from the fixpoint law and
  coconstancy of~$\bot$ (Lemma~\ref{lem:divergence}). We show the last
  claim. We have
  \begin{displaymath}
  e\comp m=\eta\inr m\comp u =\eta\comp (\id+m)\comp \inr u =T(\id+m)\comp \eta\inr u ,
  \end{displaymath} 
  which by uniformity implies $e^\istar m=(\eta\inr u)^\istar$.
  The claim then follows by Lemma~\ref{lem:one-divergence}.
\end{proof} 
\begin{proof}[Proof of Theorem~\ref{thm:lifting-initial}]
  The base category $\BC$ is, a fortiori, extensive. In any extensive
  category, $\BBL$ is the partial map classifier for partial morphisms
  whose domains are summands; we will call such partial morphisms
  \emph{summand-partial}. Explicitly, a summand-partial morphism $f$
  from $X$ to $Y$ is thus a span
  $X\xleftarrow{~~m~~}D\xto{~~f~~}Y$ where $m$ is a summand;
  the \emph{domain} of $f$ is $m$ or, by abuse of notation, $D$. By
  \emph{preimages under $f$} we mean pullbacks along the map
  $f:D\to Y$ in this span.

  Thus, the Kleisli category of $\BBL$ inherits orderings on its
  hom-sets from the extension ordering on partial functions. The fact
  that $\BC$ has unions of $\omega$-chains of summands which are again
  summands (Lemma~\ref{lem:hyperext-chains}) then guarantees that
  these orderings are $\omega$-complete, and since $0$ is a summand,
  they have bottoms $\bullet\leftarrow 0 \to \bullet$. 
  We have to
  verify that Kleisli composition for $\BBL$ is continuous on both
  sides and that the remaining conditions of
  Definition~\ref{def:omega-cont} are satisfied. We will phrase all
  arguments in terms of summand-partial morphisms.

  \emph{Continuity of left Kleisli composition:} Let $g$ be a
  summand-partial morphism from $Y$ to $Z$, and let $(f_i)_{i\in\Nat}$
  be an ascending chain of summand-partial morphisms from $X$ to $Y$,
  with domains $D_i$. Denoting unions and joins of ascending chains by
  $\bigsqcup$ and composition of partial morphisms simply by
  juxtaposition, we have to show that
  $(\bigsqcup_i f_i)g=\bigsqcup_i f_ig$. The only problem here is to
  show that the domains of the two sides agree. The domain of $f_ig$
  is the preimage $E_i$ of $D_i$ under $g$; the domain of
  $\bigsqcup_i f_ig$ is the union $\bigsqcup_i E_i$ of the ascending
  chain $(E_i)_i$; the domain of $\bigsqcup_i f_i$ is the union
  $D=\bigsqcup_i D_i$; and the domain of $(\bigsqcup_i f_i)g$ is the
  preimage $E$ of $D$ under $g$. By universality of unions of
  ascending chains, $E=\bigsqcup_i E_i$.

  \emph{Continuity of right Kleisli composition:} Let $g$ be a
  summand-partial morphism from $X$ to $Y$ with domain $C$, and let
  $(f_i)_{i\in\Nat}$ be an ascending chain of summand-partial
  morphisms from $Y$ to $Z$, with domains $D_i$ and supremum $f$. We
  have to show $g\comp (\bigsqcup_i f_i)=\bigsqcup_i g\comp f_i$;
  again, we focus only on the domains. The domain of $gf_i$ is the
  preimage $E_i$ of $C$ under~$f_i$; the domain of $\bigsqcup_i f_ig$
  is the union $\bigsqcup_i E_i$; the domain of $f$ is the union
  $D=\bigsqcup_i D_i$; and the domain of $gf$ is the preimage $E$ of
  $C$ under $f$. By construction,~$E_i$ is contained in~$D_i$, and~$E$
  is contained in~$D$. Moreover, since~$f_i$ maps~$E_i$ into~$C$, so
  does~$f$, and hence~$E_i$ is also contained in $E$ (by the universal
  property of $E$ as a pullback). Denoting the restriction of
  $f:D\to Y$ to $E\to C$ by $f'$, we thus have the diagram
  \begin{equation*}
  \xymatrix@R25pt@C58pt@M6pt{E_i\ar@{>->}[d]\ar@{>->}[r] & E \ar[r]^{f'}\ar @{>->}[d] & 
  C \ar@{>->}[d] \\
  D_i \ar@{>->}[r] & \bigsqcup D_i \ar[r]_f & Y
  }
  \end{equation*}
  where the outer rectangle and the right hand square are pullbacks by
  construction. By the pullback lemma, it follows that the left hand
  square is also a pullback. By universality of unions of ascending
  chains of summands, it now follows that $E=\bigsqcup_i E_i$, as
  required.

  \emph{Continuity of the strength:} If the Kleisli morphism
  $f:Y\to Z+1$ corresponds to a summand-partial map with domain $D$,
  the Kleisli morphism $\tau(\id\times f):X\times Y\to Z+1$
  corresponds to a summand-partial map with domain $X\times D$.
  Continuity of $\tau(\id\times(-))$ is then immediate from stability
  of unions of ascending chains of summands under products.

  \emph{Continuity of copairing:} Immediate from the fact that
  generally, $\bigsqcup_i (D_i + E_i)=\bigsqcup_i E_i+\bigsqcup_i D_i$
  because unions of ascending chains of summands are defined via
  coproducts.

  \emph{Preservation of $\bot$ by left Kleisli composition:} The
  bottom element of the Kleisli hom-set from $X$ to $Y$ is the unique
  (summand-)partial morphism with domain $0$. Left Kleisli composites
  of this morphism have domains that are pullbacks of $0$, which in
  extensive categories are again $0$. 

  \emph{Preservation of $\bot$ by the strength:} The domain of the
  partial morphism corresponding to $\tau(\id\times\bot):X\to Y+1$ is
  $X\times 0$, which by extensivity (in fact already by
  distributivity) is $0$.

  This establishes that $\BBL$ is $\omega$-continuous, and hence a
  complete Elgot monad; by the standard construction of least
  fixpoints in $\omega$-cpos, the iteration operator of $\BBL$ then
  has the form described in Remark~\ref{rem:iter-def}. To see
  initiality of $\BBL$, let $\BBS$ be a complete Elgot monad
  on~$\BC$. For clarity, we denote the unit of~$\BBL$ by~$\eta^\BBL$
  and that of~$\BBS$ by $\eta^\BBS$. We need to show existence of a
  unique complete Elgot monad morphism $\xi:\BBL\to\BBS$. Since~$\xi$
  must preserve the unit and unproductive divergence $\bot$ (the
  latter by preservation of iteration), the only candidate is
  $\xi=[\eta^\BBS,\bot]$. It remains to show that $\xi$ is a complete
  Elgot monad morphism. Thanks to the simplicity of the monad
  structure of $\BBL$, it is clear that $\xi$ is a strong monad
  morphism. The main task is to prove preservation of iteration. So
  let $e:X\to L(Y+X)=(Y+X)+1$.
  We inductively construct infinite sequences $X_1,X_2,\ldots$ and
  $D_1,D_2,\ldots$ of summands of $X$ as follows: Like in Remark~\ref{rem:iter-def}, we take $X_1$ to be
  the preimage of $Y$ under $e$, and for $i>1$ we take $X_i$ to be the
  preimage of $X_{i-1}$ under $e$; similarly, we take $D_1$ to be the
  preimage of $1$ under $e$, and for $i>1$ we take $D_i$ to be the
  preimage of $D_{i-1}$ under $e$. By universality of coproducts, the
  $X_i$ and $D_i$ are pairwise disjoint summands (that is, the $X_i$
  are pairwise disjoint, the $D_i$ are pairwise disjoint, and every
  $X_i$ is disjoint with every $D_j$).  

  Let $X'=\sum_i X_i$, $D'=\sum_i D_i$ and let $Z$ be the complement
  of $X'+D'$ in $X$. For the remainder we regard~$X$ as being
  decomposed into the coproduct $X=X'+X_{\infty}$ where
  $X_\infty=D'+Z$.

By definition, there are $e_1:X_1\to Y$ and $e_i:X_i\to X_{i-1}$ ($i>1$) such that
\begin{align*}
e\comp\inl\inj_1 = \eta^\BBL\inl\comp e_1&&
e\comp\inl\inj_{i} = \eta^\BBL\inr\comp\comp\inl\inj_{i-1} e_{i}
\end{align*}
where $\inj_i$ denotes the $i$-th coproduct injection into a countable coproduct. 

By applying the fixpoint law $i$ times we obtain
$(\xi e)^\istar\comp\inl\inj_i=\eta^\BBS\comp e_1\ldots e_i$ and
analogously,
$\xi\comp e^\istar\comp\inl\inj_i =\xi\comp\eta^\BBL\comp e_1\ldots
e_i=\eta^\BBS\comp e_1\ldots e_i$.
We are left to show that
$(\xi e)^\istar\comp\inr=\xi\comp e^\istar\comp\inr$. Noting that by
definition, $e^\istar\comp\inr=\bot$ and $\xi$ preserves $\bot$, this
amounts to showing that $(\xi e)^\istar\comp\inr=\bot$.
By construction of the~$D_i$, for every $i>1$ there is $d_i:D_i\to D_{i-1}$ such that
\begin{align*}
e\comp\inr\inl\inj_1 = \bot&&
e\comp\inr\inl\inj_{i} = \eta^\BBL\inr\comp\inr\comp\inl\inj_{i-1} d_{i},
\end{align*}
hence, by applying the fixpoint law $i$ times we obtain that $(\xi e)^\istar\comp\inr\inl\inj_i=\bot$, which
implies $(\xi e)^\istar\comp\inr\inl=\bot$ and hence we are left to show $(\xi e)^\istar\comp\inr\inr=\bot$.

Notice that the preimages of $Y$ and $1$ under $e\comp \inr\inr$ 
must be $0$ and therefore there is $m_1:Z\to X$
such that $e\comp\inr\inr = \eta^\BBL\inr\comp m_1$. 
Analogously for every $i>1$ we construct $m_{i}:Z\to X$ such that $e\comp m_{i-1} = \eta^\BBL\inr\comp m_{i}$.
Let us denote by $\widehat Z$ the sum of $\omega$ copies of $Z$ and by $\hat m:\widehat Z\to X$
the cotuple formed by the morphisms $m_i$ with $i>0$. Now, 
\begin{align*}
\xi\comp e\comp[\inr\inr,\hat m] 
& = \xi\comp\eta^\BBL\inr\comp\hat m\comp w\\
& = \eta^\BBS\inr[\inr\inr,\hat m]\comp\inr\comp w\\
& = \eta^\BBS(\id+[\inr\inr,\hat m])\comp\inr\comp\inr\comp w\\
& = S(\id+[\inr\inr,\hat m])\comp \eta^\BBS\inr\comp \inr\comp w
\end{align*}
where $w:Z+\widehat Z\to\widehat Z$ is the obvious canonical isomorphism.
By uniformity, and by Lemma~\ref{lem:one-divergence}, this implies $(\xi\comp e)^\istar[\inr\inr,\hat m] = (\eta^\BBS\inr\inr\comp w)^\istar=\bot$ and therefore
$(\xi\comp e)^\istar\inr\inr=(\xi\comp e)^\istar[\inr\inr,\hat m]\inl=\bot$ as required.
\end{proof}
\begin{rem}
  The above proof of Theorem~\ref{thm:lifting-initial} uses the full
  power of the definition of hyperextensive categories, including
  universality of countable coproducts. It has been shown
  previously~\cite{BhaduriSubramanian97} that assuming only
  universality of finite coproducts and stability of summands under
  countable disjoint unions, one can still define the iteration
  operator and prove the fixpoint law. However, we do not see how
  to show the uniformity law in this weaker setting. At the same
  time, we have the impression that the uniformity law is the only
  place where universality of countable coproducts is needed.
\end{rem}

\section{Related Work}\label{sec:related}

\noindent The above results benefit from
extensive previous work on monad-based axiomatic iteration.
In particular we draw on the concept of \emph{complete Elgot monad}
studied by Ad\'amek et al.~\cite{AdamekMiliusEtAl10}; the construction
of the free complete Elgot monad over a
functor~\cite{AdamekMiliusEtAl11} is strongly related to
Corollary~\ref{cor:sum}.(\ref{item:free}), and we do not claim
Part~(\ref{item:free}) of Corollary~\ref{cor:sum} as a contribution of
this paper. %
There is extensive literature on solutions of (co)recursive program
schemes~\cite{Bartels03,AczelAdamekEtAl03,MiliusMossEtAl13,GoncharovSchroder13,PirogGibbons13,PirogGibbons14},
from which our present work differs primarily in that we do not
restrict to guarded systems of equations. In particular, as mentioned
in the introduction, Piróg and Gibbons~\cite{PirogGibbons13} actually
work with the same monad transformer, the coinductive generalized
resumption transformer.  The same
authors~\cite[Corollary~4.6]{PirogGibbons14} prove a coproduct
characterization of the coinductive generalized resumption transformer
that is similar to our Theorem~\ref{thm:resump-universal}; but again,
this takes place in a different category, that is, in completely
iterative monads (admitting guarded recursive definitions) rather than
complete Elgot monads (admitting unrestricted recursive
definitions). Technically, results on $\TF$ being a completely
iterative monad are incomparable to our result on $\TF$ being a
complete Elgot monad -- we prove a stronger recursion scheme for~$\TF$
but need to assume that $T$ is a complete Elgot monad, while $\TF$ is
completely iterative without any assumptions on $T$.
 
Moss~\cite{Moss03} proves that given a $\Set$-endofunctor $F$ and a
distinguished point $\bot:1\to\nu F$ of the final $F$-coalgebra, the
monad $M$ given on objects by
$M_FX=X+\nu\gamma.\,F(X+\gamma)\cong \nu\gamma.\,X +F\gamma$ is completely
Elgot, with unproductive divergence induced by $\bot$ (Moss in fact
establishes a completeness result over such monads). This result does
not appear to be an immediate application of our
Theorem~\ref{thm:cr_elg}, as there is no implicit complete Elgot base
monad in~$M_F$. %

We construct solutions of unguarded recursive equations from solutions
of guarded recursive equations, for the latter relying crucially on
results by Uustalu on guarded recursion over parametrized
monads~\cite{Uustalu03}, which in particular has allowed us to make do
without idealized monads.

The axiomatic treatment of iteration via complete Elgot monads is
essentially dual to the axiomatic treatment of recursion by Simpson
and Plotkin~\cite{SimpsonPlotkin00}, who work in a category~$\BD$ with
a \emph{parametrized uniform recursion operator}
$\Hom_{\BD}(Y\times X,X)\to \Hom_{\BD}(Y,X)$ and a subcategory~$\BS$
of \emph{strict} functions in~$\BD$. Given a distributive
category~$\BC$ equipped with a complete Elgot monad, we can take
$\BS=\BC^{op}$ and $\BD=(\BC_{\BBT})^{op}$. Then the iteration
operator over $\BC_{\BBT}$ sending $f:X\to T(Y+X)$ to
$f^\istar:X\to TY$ induces precisely a parametrized uniform recursion
operator for the pair $(\BD,\BS)$ in the sense of Simpson and Plotkin.

The proof of Theorem~\ref{thm:cr_elg} can be embedded into a generic
framework connecting guarded and unguarded iteration that we have
developed in further work~\cite{GoncharovSchroderEtAl17}.

\section{Conclusions and Future Work}\label{sec:concl}

\noindent We have developed semantic foundations for non-wellfounded
side-effecting recursive definitions, specifically for recursive
definitions over the so-called \emph{coinductive generalized
  resumption transformer} that extends a base monad $\BBT$ with
operations represented by a functor $\Sigma$ to obtain a monad $\BBTF$
defined by taking final coalgebras, i.e.~consisting of non-wellfounded
trees. While previous work on the same monad transformer was focussed
on guarded recursive definitions, in the framework of completely
iterative monads, we work in the setting of (complete) Elgot monads,
which admit unrestricted recursive definitions. Our main results state
that
\begin{itemize}
\item $\BBTF$ is a complete Elgot monad if $\BBT$ is a complete Elgot
  monad (Theorem~\ref{thm:cr_elg});
\item the structure of $\BBTF$ as a complete Elgot monad is uniquely
  determined as extending that of $\BBT$ (Theorem~\ref{thm:cr_elg});
\item if the underlying category $\BC$ admits an initial complete
  Elgot monad $\BBL$ (often $L=(\argument)+1$), then
  $\BBTF\cong \BBT+\BBL_\SigF$ in the category of complete Elgot
  monads on $\BC$
  (Theorem~\ref{thm:resump-universal}/Corollary~\ref{cor:sum}).
\end{itemize}
In particular this requires proving the equational laws of complete
Elgot monads for the solution operator that we construct on $\TF$. We
have implemented a formal verification of our results, which are
technically quite involved, in the Coq proof assistant, see
\url{https://git8.cs.fau.de/redmine/projects/corque}.

Besides the fact that applying the coinductive resumption monad
transformer to a complete Elgot monad $\BBT$ again yields a complete
Elgot monad $\BBTF$, the resulting object obviously has a richer
structure provided by the adjoined free operations. One topic for
further investigation is to identify (and possibly axiomatize) this
structure. We aim to use this structure to program definitions of free
operations as morphisms $\TF X\to TX$ in a similar spirit as in the
paradigm of \emph{handling algebraic effects}~\cite{PlotkinPretnar13}.
In conjunction with iteration this actually produces a recursion
operator that is more expressive than iteration. This however requires
going beyond the first-order setting of this paper (which was
sufficient for iteration), as call-by-value recursion is known to be
an inherently higher-order concept. There is an concept of complete \emph{Elgot
  algebra}~\cite{AdamekMiliusEtAl06} complementing complete Elgot monads. It has been
shown that the algebras of complete Elgot monads are complete Elgot algebras
satisfying additional conditions~\cite{GoncharovMiliusEtAl16}; the precise
relationship between complete Elgot monads and complete Elgot algebras remains to be
determined, possibly using our results on iteration-congruent
retracts of monads with iteration~\cite{GoncharovSchroderEtAl17}.

\bigskip\noindent \textbf{Acknowledgements}
The authors wish to thank Stefan Milius and Paul Blain Levy for useful
discussions.

\bibliographystyle{myabbrv}
\bibliography{monads}

\clearpage

\end{document}

